\documentclass[11pt,reqno]{amsart}

\usepackage[T1]{fontenc}
\usepackage[utf8]{inputenc}
\usepackage{mathpazo}
\usepackage{amsmath,amssymb,amsthm,mathtools}
\usepackage{stmaryrd}
\usepackage[margin=1.15in]{geometry}
\usepackage{graphicx}
\usepackage{microtype}
\usepackage{enumitem}
\usepackage{booktabs}
\usepackage{url}
\usepackage{tikz}
\usetikzlibrary{arrows.meta,positioning,fit,calc}
\usepackage{xcolor}
\definecolor{IndigoLink}{RGB}{40,70,120}
\definecolor{PlumAccent}{RGB}{122,59,78}
\definecolor{PineAccent}{RGB}{47,94,78}
\usepackage[colorlinks=true,linkcolor=IndigoLink,citecolor=IndigoLink,
            urlcolor=IndigoLink]{hyperref}
\usepackage{aliascnt}
\hypersetup{pdftitle={Completeness and incompleteness of basic matching logic},
            pdfauthor={Xiaohong Chen and Grigore Rosu}}
\usepackage[nameinlink]{cleveref}

\newcommand{\Var}{\mathrm{Var}}
\newcommand{\SVar}{\mathrm{SVar}}
\newcommand{\FV}{\mathrm{FV}}
\newcommand{\lloc}{\vDash_{\mathrm{loc}}}

\newcommand{\den}[1]{\llbracket #1\rrbracket}
\newcommand{\out}{{*}_{M\setminus C}}
\newcommand{\bx}[1]{[#1]}
\newcommand{\dm}[1]{\langle #1\rangle}
\newcommand{\Pw}{\mathcal{P}}
\newcommand{\ii}[2]{\overline{#1}(#2)}
\newcommand{\ldi}[1]{\mathrel{\rightsquigarrow_{#1}}}
\newcommand{\ld}[1]{\mathord{\rightsquigarrow_{#1}}}
\newcommand{\ldn}{\mathord{\rightsquigarrow}}
\newcommand{\lds}{\mathord{\rightsquigarrow^{*}}}
\newcommand{\im}[2]{\ld{#1}[#2]}
\newcommand{\ims}[1]{\lds[#1]}
\DeclareMathOperator{\dep}{d}
\newcommand{\app}{\mathbin{\cdot}}

\theoremstyle{plain}
\newtheorem{theorem}{Theorem}
\newaliascnt{proposition}{theorem}
\newtheorem{proposition}[proposition]{Proposition}
\aliascntresetthe{proposition}
\newaliascnt{lemma}{theorem}
\newtheorem{lemma}[lemma]{Lemma}
\aliascntresetthe{lemma}
\newaliascnt{corollary}{theorem}
\newtheorem{corollary}[corollary]{Corollary}
\aliascntresetthe{corollary}
\theoremstyle{definition}
\newaliascnt{definition}{theorem}
\newtheorem{definition}[definition]{Definition}
\aliascntresetthe{definition}
\newaliascnt{remark}{theorem}
\newtheorem{remark}[remark]{Remark}
\aliascntresetthe{remark}
\crefname{theorem}{Theorem}{Theorems}
\crefname{proposition}{Proposition}{Propositions}
\crefname{lemma}{Lemma}{Lemmas}
\crefname{corollary}{Corollary}{Corollaries}
\crefname{definition}{Definition}{Definitions}
\crefname{remark}{Remark}{Remarks}
\Crefname{theorem}{Theorem}{Theorems}
\Crefname{proposition}{Proposition}{Propositions}
\Crefname{lemma}{Lemma}{Lemmas}
\Crefname{corollary}{Corollary}{Corollaries}
\Crefname{definition}{Definition}{Definitions}
\Crefname{remark}{Remark}{Remarks}

\newenvironment{ednote}
  {\par\medskip\noindent\begingroup\small\color{black!78}
   \setlength{\leftskip}{1.1em}
   \makebox[0pt][r]{\textcolor{PlumAccent}{\rule[-.2ex]{1.5pt}{1.6ex}}\hspace{.6em}}%
   \textsc{\textcolor{PlumAccent}{Note.}}\ \ignorespaces}
  {\par\endgroup\medskip}

\begin{document}

\title{Completeness and incompleteness of basic matching logic}
\author{Xiaohong Chen (Intent Computing, Inc.)\\
Grigore Ro\c{s}u (University of Illinois Urbana-Champaign)}
\thanks{\emph{Acknowledgements.} Claude Opus 5 (Anthropic) produced the
constructions and proofs presented here, including the double cover of
\Cref{sec:B}, the reduction of \Cref{sec:mu}, and the obstruction of
\Cref{sec:limits}. It also drafted the text under our direction and subject to
our verification and correction. We posed the problem, supplied the earlier
results on which it rests, and remain accountable for the work. We thank
ChatGPT Sol for a sequence of detailed referee reports that identified errors
in earlier versions of \Cref{sec:mu} and the many-sorted argument. We have
adopted the corrections proposed in those reports.}
\date{August 9, 2026}

\begin{abstract}\noindent
\emph{Basic} matching logic is matching logic \emph{without definedness}.
Symbols are interpreted as set-valued operations, element variables denote
singletons and are bound by $\exists$, and no connective in the language
uniformly internalizes totality.
For basic matching logic without fixpoints over an arbitrary one-sorted
finitary signature, we prove \emph{global completeness}
($\Gamma\vDash\varphi$ iff $\Gamma\vdash\varphi$,
for arbitrary and possibly infinite $\Gamma$), and obtain as a corollary
that the definedness extension is conservative. The proof \emph{localizes}
$\Gamma$ to a theory $\Delta_\Gamma$ and reduces both global semantic
consequence and syntactic derivability to the same local consequence relation,
$\Gamma\vDash\varphi$ iff $\Delta_\Gamma\lloc\varphi$ iff
$\Gamma\vdash\varphi$. A double-cover construction establishes the semantic
equivalence.
Adding least fixpoints destroys effective axiomatizability. Over a signature
with one unary
and two binary symbols and no constants, validity is not recursively
enumerable, so no sound calculus with a recursively enumerable proof relation
is even \emph{weakly} complete, already for the empty theory and without
definedness. The positive result is also sharp with respect to sorts. Global
completeness fails with three sorts for a satisfiable $\Gamma$. Together, the two halves
settle the conjecture of \cite{ChenRosu2019}, affirmatively for one sort and
negatively in general. The negative results arise from sort flow, effectivity
with fixpoints, and, for hybrid logic
\cite{BlackburnSeligman1995,ArecesTenCate2007}, an obstruction that defeats
every well-founded calculus whose leaves are hypotheses or valid patterns and
whose rules respect the localization.
The last mechanism yields a dichotomy that can be stated without reference to
matching logic. The language with state variables bound by $\exists$ and $\forall$ over
modalities of arbitrary arity is globally complete without nominals, and no
calculus in the well-founded class of that criterion is globally complete once
nominals are added.
\end{abstract}

\maketitle
\markboth{X.\,Chen and G.\,Ro\c{s}u}%
         {Completeness and incompleteness of basic matching logic}

%% ================================================================
\section{Introduction}\label{sec:intro}

Matching logic was introduced through work on the formal semantics of real
programming languages, without a modal-logic framing
\cite{Rosu2010,Rosu2017,ChenRosu2020AML}. In this setting,
a language definition is a theory $\Gamma$ containing the configuration
structure, datatypes and transition rules. The properties to be established
about a program are consequences of $\Gamma$, and few results of interest are
validities of the empty theory. The relevant notion is therefore
\emph{global} consequence, $\Gamma\vDash\varphi$, which quantifies over the
models in which every pattern in $\Gamma$ is total. Each such model is a
realization, or implementation, of the language. Local consequence would tie
proved properties to particular points of particular realizations and hence to
particular implementations. Because implementations can be incorrect, this is
inappropriate for programming-language semantics. Global consequence
distinguishes the realizations that satisfy the language definition from those
that do not. A property proved from $\Gamma$ holds in every realization that
satisfies the definition.

\begin{figure}[!t]
\centering
\hyphenpenalty=10000 \exhyphenpenalty=10000
\resizebox{\textwidth}{!}{%
\begin{tikzpicture}[font=\small,>={Stealth[length=1.7mm]},
  L/.style={align=center,inner sep=2pt},
  ann/.style={font=\scriptsize,align=center,text width=3.3cm,inner sep=1pt},
  annw/.style={font=\scriptsize,align=center,text width=4.3cm,inner sep=1pt},
  annf/.style={font=\scriptsize,align=center,inner sep=1pt},
  lab/.style={font=\scriptsize,align=center,black!60,inner sep=1.5pt},
  a/.style={->,gray!70},
  box/.style={draw=black!55,rounded corners=2.5pt,inner sep=1.4mm},
  boxd/.style={draw=black!55,densely dotted,rounded corners=2.5pt,inner sep=1.4mm}]

  \def\cI{-7.1}\def\cII{-2.55}\def\cIII{1.85}\def\cIV{6.35}
  \def\rI{8.0}\def\rII{5.4}\def\rIII{2.7}\def\rIV{0}

  % row 1
  \node[L] (df) at (\cI,\rI)   {matching logic,\\ with definedness};
  \node[annf,below=0.6mm of df] (dfa)
    {\textcolor{PineAccent}{global completeness \cite{Rosu2017,ChenRosu2019}}\\
     equivalent to $\mathrm{FOL}_=$ \cite{Rosu2017}\\
     equivalent to $\mathrm{H}(@,\forall)$ \cite{LeusteanMoangaSerbanuta2019}};
  \node[L] (dm) at (\cIII,\rI) {matching $\mu$-logic,\\ with definedness};
  \node[annf,PlumAccent,below=0.6mm of dm] (dma)
    {global completeness fails \cite[Prop.~23]{ChenRosu2019};\\
     no effective weak completeness (\Cref{rem:mudef})};

  % row 2
  \node[L] (ms) at (\cIII,\rII) {matching logic,\\ basic many-sorted};
  \node[annf,PlumAccent,below=0.6mm of ms] (msa)
    {global completeness fails\\ (\Cref{sec:regimes})};

  % row 3
  \node[L] (hat) at (\cI,\rIII)   {$\mathrm{H}(@)$, $\mathrm{H}(@,\downarrow)$};
  \node[annf,PineAccent,below=0.6mm of hat] (hata)
    {global completeness \cite{ArecesTenCate2007}};
  \node[L] (bf) at (\cIII,\rIII)  {matching logic,\\ basic unsorted};
  \node[annf,PineAccent,below=0.6mm of bf] (bfa)
    {global completeness\\ (\Cref{cor:main})};
  \node[L] (bm) at (\cIV,\rIII)   {matching $\mu$-logic,\\ basic unsorted};
  \node[annf,PlumAccent,below=0.6mm of bm] (bma)
    {no effective weak\\ completeness (\Cref{thm:mu})};

  % row 4
  \node[L] (K) at (\cIII,\rIV)  {modal logic,\\ basic polyadic};
  \node[ann,PineAccent,below=0.6mm of K] (Ka)
    {global completeness \cite[Ex.~1.5.3]{BlackburnDeRijkeVenema2001}};
  \node[L] (Km) at (\cIV,\rIV)  {modal $\mu$-calculus};
  \node[ann,below=0.6mm of Km,text width=4.4cm] (Kma)
    {\textcolor{PineAccent}{global completeness, finite $\Gamma$,}\\
     \textcolor{PineAccent}{via \cite{Kozen1983,Walukiewicz2000};}
     \textcolor{PlumAccent}{not compact, so}\\
     \textcolor{PlumAccent}{local completeness fails \cite{Kozen1983}}};

  %% frames: results new in this paper
  \node[box,fit=(bf)(bfa)] (bfbox) {};
  \node[box,fit=(ms)(msa)] (msbox) {};
  \node[box,fit=(bm)(bma)] (bmbox) {};
  \node[boxd,fit=(dm)(dma)] (dmbox) {};

  %% H(forall) sits midway between H(@) and the cell on its right,
  %% so that its frame has the same clearance on both sides
  \path let \p1=(hata.east), \p2=(bfbox.west) in
    node[L] (hall) at ({(\x1+\x2)/2-1.0cm},\rIII+0.16) {$\mathrm{H}(\forall)$};
  \node[annf,below=0.6mm of hall] (halla)
    {\textcolor{PineAccent}{local completeness \cite{BlackburnTzakova1998}};\\
     \textcolor{PlumAccent}{no complete calculus in the}\\
     \textcolor{PlumAccent}{class of \Cref{thm:obstruction}}};
  \node[boxd,fit=(hall)(halla)] (hallbox) {};

  %% naming power: upward and leftward.  Arrows run from a node's north edge to
  %% the south edge of the target's annotation, so nothing crosses text.
  \draw[a] (K.north)  -- node[lab,right=1.2mm] {$\Var,\exists$} ([yshift=-1.2mm]bfbox.south);
  \draw[a] ([yshift=1.2mm]bfbox.north) -- node[lab,right=1.2mm] {sorts} ([yshift=-1.2mm]msbox.south);
  \draw[a] (Km.north) -- node[lab,right=1.2mm] {$\Var,\exists$} ([yshift=-1.2mm]bmbox.south);
  \draw[a,shorten <=3.5mm,shorten >=3.5mm] (bfbox.west)  -- node[lab,pos=.5,above=0.8mm] {nominals} (bfbox.west -| hallbox.east);
  \draw[a] (K.west) to[out=180,in=270,looseness=0.7] node[lab,pos=.45,below=2mm] {nominals, $@$} (hata.south);
  \draw[a] (hat.north)      -- node[lab,right=1.2mm] {$\Var,\exists$} (dfa.south);
  \draw[a] ([yshift=1.2mm]hallbox.north) -- node[lab,pos=.5,right=1.2mm] {$@$} ([xshift=9mm]dfa.south);
  \draw[a] ([xshift=-1mm,yshift=1mm]bfbox.north west)  -- node[lab,pos=.55,above=1mm] {$\lceil\cdot\rceil$} ([xshift=20mm,yshift=1.5mm]dfa.south);
  \draw[a] ([xshift=-2.4mm,yshift=1.2mm]msbox.west) -- node[lab,pos=.5,above=1mm] {$\lceil\cdot\rceil$} ([xshift=21mm,yshift=6mm]dfa.south);
  \draw[a] ([yshift=1.2mm]bmbox.north)       to[out=90,in=0] node[lab,pos=.5,right=1.5mm] {$\lceil\cdot\rceil$} ([xshift=1.2mm]dmbox.east);

  %% fixpoints: rightward
  \draw[a,shorten <=3mm,shorten >=3mm] (K.east)  -- node[lab,above=0.4mm] {$\mu$} (Km.west);
  \draw[a,shorten <=2.2mm,shorten >=2.2mm] ([xshift=1mm]bfbox.east) -- node[lab,above=0.4mm] {$\mu$} ([xshift=-1mm]bmbox.west);
  \draw[a] (df.east) -- node[lab,above=0.4mm] {$\mu$} ([xshift=-1.2mm]dmbox.west |- dm.west);
\end{tikzpicture}}
\caption{Matching logic and its modal and hybrid neighbours. Arrows add
language constructs, each node records what is known about completeness there,
and framed nodes carry results proved in this paper.}
\label{fig:landscape}
\end{figure}

This design allows reasoning principles usually left at the meta-level to be
proved as theorems within the logic. For example, \cite{ChenLucanuRosu2026} axiomatizes initial algebra semantics
as matching logic theories, using a least fixpoint for no-junk
together with no-confusion and equality modulo a set of equations, and then
derives structural induction, iteration and primitive recursion as theorems of
the proof system. In the usual treatments, these are external rules governing
the specification from outside. Definedness, equality, membership, sorts,
functions, constructors and binders are likewise theories internal to the
logic \cite{Rosu2017,ChenLucanuRosu2021,ChenRosu2020ICFP}. They require no
extension of the logic. A single fixed logic and proof system therefore suffice,
with domain-specific apparatus such as a particular programming language or
computational model expressed axiomatically within them. This design raises the
question of the strength of $\vdash$ relative to $\vDash$.

Definedness separates the cases considered in this paper. It is a symbol
$\lceil\cdot\rceil$, axiomatized so that $\lceil\varphi\rceil$ is
total when $\varphi$ is matched by at least one element and empty otherwise.
Its dual $\lfloor\varphi\rfloor=\neg\lceil\neg\varphi\rceil$ says that
$\varphi$ is total, so a language with definedness can speak about the model as
a whole. We call matching logic without definedness \emph{basic}. The word
carries its established modal sense, namely the language with its symbols and
the Boolean connectives and nothing else
\cite[Def.~1.9]{BlackburnDeRijkeVenema2001}, generalized in the same source to
modalities of arbitrary arity \cite[Def.~1.11]{BlackburnDeRijkeVenema2001},
which our polyadic signatures give. Sorts and fixpoints are named separately.
Our completeness theorem concerns basic matching logic with one sort, and
\Cref{sec:mu} shows the incompleteness of basic matching $\mu$-logic, also
unsorted. In the many-sorted case considered in \Cref{sec:regimes}, global
completeness fails.

\Cref{fig:landscape} places these logics among their neighbours, where they appear as
\emph{matching logic, basic unsorted}, as \emph{matching $\mu$-logic, basic
unsorted} and as \emph{matching logic, basic many-sorted}. The arrows indicate
added language constructs. Upward arrows add naming and quantification power,
and rightward arrows add fixpoints. Each node records either the strongest
completeness statement known to hold there or the strongest one known to fail,
using the three senses separated in \Cref{sec:system}.
Framed nodes carry results proved in this paper, and a dotted frame marks a
node where only part of the entry is new. On the hybrid side, $\mathrm{H}(@)$
is modal logic with \emph{nominals}, model-fixed constants denoting singletons,
together with the satisfaction operator $@_i\varphi$, read ``$\varphi$ holds at
the point named $i$''. The logic $\mathrm{H}(@,\downarrow)$ adds the binder
$\downarrow x.\,\varphi$, read ``name the current point $x$''.
By contrast, $\mathrm{H}(\forall)$ quantifies over state variables
\cite{BlackburnSeligman1995,BlackburnTzakova1998,ArecesTenCate2007}. In the
fixpoint-free part of the figure, every branch of the naming axis converges on
\emph{matching logic, with definedness}, and that node has exactly the
expressive power of first-order logic with equality. It translates into pure
predicate logic with equality and, conversely, contains it as an instance
\cite[\S10]{Rosu2017}. On the hybrid side, adding the ``somewhere'' modality to
$\exists$ reaches that same first-order power. The operator $\exists$ alone does
not \cite[Thm.~4.1, Prop.~4.11]{BlackburnSeligman1995}. The identification of
this node with $\mathrm{H}(@,\forall)$ is given in
\cite{LeusteanMoangaSerbanuta2019}.
Its $\mu$ counterpart to the right lies beyond first-order logic, since $\mu$
is not first-order expressible. The definedness-free region is the one at issue
here.

The Hilbert system of \cite{ChenRosu2019} is sound with or without fixpoints.
Every completeness statement recalled below concerns the \emph{fixpoint-free}
system. By \Cref{thm:mu}, none could hold in the presence of $\mu$. The
fixpoint-free Hilbert system of \cite{ChenRosu2019} is also complete for the
empty theory, which is weak completeness in the sense of \Cref{sec:system}.
This is Theorem 16 of the same paper, proved by a canonical-model construction.

Previous global completeness results require theories that contain definedness
\cite{Rosu2017}, where equality and membership are available as derived
notations, $\varphi=\psi$ abbreviating
$\lfloor\varphi\leftrightarrow\psi\rfloor$ and $x\in\varphi$ abbreviating
$\lceil x\wedge\varphi\rceil$. The statement is Theorem 15 of
\cite{ChenRosu2019}. Indeed, $\lfloor\cdot\rfloor$ is the universal modality, so global
consequence reduces to local consequence in one step
\cite{GorankoPassy1992}. Whether the \emph{definedness-free} system is globally
complete was left open in \cite[\S IX-B]{ChenRosu2019}, where it was
conjectured that it is, and again in \cite[Ch.~3]{Chen2023Thesis}, where it is
posed as an open problem alongside the definedness and local completeness
theorems. We prove the conjecture for one sort and disprove it in general.

Modal logic provides two standard routes to global completeness. Both fail in
our setting. The first route reduces global consequence to local consequence. For basic
modal logic, $\Gamma\vDash_{\mathrm{glo}}\varphi$ iff
$\Box^\omega\Gamma\vDash_{\mathrm{loc}}\varphi$
\cite[Ex.~1.5.3]{BlackburnDeRijkeVenema2001}. This semantic statement yields no
derivation by itself. However, every $\Box^n\gamma$ is derivable from $\Gamma$
by necessitation, and strong local completeness turns the right-hand side into
a derivable finite implication. Two applications of modus ponens then give
$\Gamma\vdash\varphi$. The standard proof of this reduction fails in our
setting. It starts from a point $w$ at which all the boxed hypotheses hold,
passes to the submodel generated by $w$, notes that $\Gamma$ is globally true
in that submodel, and transfers the resulting truth of $\varphi$ back to $w$.
This last step uses the invariance of modal formulas under generated
submodels. Matching logic patterns are not invariant under generated submodels,
because $\exists x$ ranges over the whole carrier, including the points that
the generated submodel discards \cite[\S3]{ArecesTenCate2007}. In hybrid terms,
our language is the nominal-free fragment of $\mathrm{H}(\forall)$
\cite{BlackburnSeligman1995,ArecesTenCate2007}. The second route trades $\Gamma\vDash\varphi$ for the validity of a finite
implication. It needs a deduction theorem, and the Hilbert system of
\cite{ChenRosu2019} has none. Universal generalization applies there to
hypotheses as well as to theorems, so $\{x\}\vdash\forall x.\,x$, while
$x\to\forall x.\,x$ is invalid. In a model with at least two elements, $x$
denotes a singleton, whereas $\forall x.\,x$ denotes the empty set.

We instead \emph{localize} the theory, using the theory $\Delta_\Gamma$ defined
in \Cref{def:boxed} and characterized semantically in \Cref{lem:whatDelta}. We
prove two identities,
\[
  \Gamma\vDash\varphi \iff \Delta_\Gamma\lloc\varphi
  \qquad\text{and}\qquad
  \Gamma\vdash\varphi \iff \Delta_\Gamma\lloc\varphi ,
\]
Together, these identities give global completeness (\Cref{cor:main}) for one
sort and an arbitrary finitary signature. The other two components of the
classical route remain available. Necessitation is \Cref{lem:GN}, and strong
local completeness is available from \cite{ChenRosu2019TR}, which we assume as the
input~(L) of \Cref{sec:system} and use without opening its proof. For each
identity, one direction is direct and the other requires an argument. The
second identity relies on the available local completeness result. For the
first, the double-cover construction developed in
\Cref{lem:locality,lem:twocopy} turns a local countermodel to
$\Delta_\Gamma\lloc\varphi$ into a model of $\Gamma$ that refutes $\varphi$.
It provides a model-theoretic surrogate for the universal modality, which the
language cannot express.

The definedness extension is \emph{conservative} (\Cref{cor:cons}). A
decade of reasoning that freely assumed $\lceil x\rceil$ was therefore justified
in the one-sorted setting proved here.
The result is also sharp in two directions. It fails as soon as there are several
sorts, still with no definedness anywhere, because the conclusion can be placed
at a sort that the localization never reaches. The counterexample of
\Cref{sec:regimes} uses three sorts. It also fails once fixpoints are added,
where by \Cref{thm:mu} validity is not recursively enumerable, so no sound
calculus with a recursively enumerable proof relation is even weakly complete,
already for the empty theory (\Cref{sec:mu,sec:regimes}). The empty-theory result
is new. Failure for a nonempty theory was already known
\cite[Prop.~22--23]{ChenRosu2019}.
Applicative matching logic is the instance with one binary symbol and constant
symbols only, so it is also globally complete (\Cref{sec:aml}). This does not
conflict with the many-sorted failure just mentioned. Sorts, binders and type
systems are encodable in applicative matching logic, but the existing encodings
use definedness. \Cref{cor:noencoding} rules out an analogous definedness-free
encoding that both preserves global consequence and reflects derivability. In
\Cref{sec:limits}, we isolate the assumptions used by the proof of \Cref{thm:A}
and extend both failures from the calculus of \Cref{fig:system} to every
calculus in a broad class.

The same criterion also resolves an apparently open question about hybrid modal
logic that can be stated without reference to matching logic. Consider the hybrid language with state
variables bound by $\exists$ and $\forall$, modalities of arbitrary arity, and
no propositional variables, which is the $\exists$-language studied in
\cite{BlackburnSeligman1995}. The nominal-free language is globally complete for
arbitrary $\Gamma$ (\Cref{cor:main}). To our knowledge, this result was
previously unknown. Adding nominals yields $\mathrm{H}(\forall)$ as presented in
\cite{BlackburnTzakova1998}, for which global completeness fails
(\Cref{cor:hforall}). Two qualifications are important. The positive result
relies on strong local completeness for the nominal-free fragment, which we take
from \cite{ChenRosu2019TR}. The classical construction does not supply this
result because it uses nominals as its Henkin witnesses. Global consequence in
$\mathrm{H}(\forall)$ is recursively enumerable for an effectively presented
countable signature and a recursively enumerable theory. The negative result
therefore does not establish non-axiomatizability, and a complete system
exists. Nevertheless, no calculus in the well-founded class of
\Cref{thm:obstruction} can be complete (\Cref{rem:whatcomplete}). This latter
result may be more surprising because nominals usually repair completeness,
as with the pure extensions of $\mathrm{H}(@)$, which are complete once
\textup{(Name)} and \textup{(Paste)} are available
\cite{BlackburnTenCate2006}. Here we work at the level of models and ask about
global consequence, where a nominal lets $\Gamma$ constrain the
cardinality of the carrier.

\Cref{sec:system} presents the system and the two assumed results.
\Cref{sec:boxed} defines the localization $\Delta_\Gamma$, \Cref{sec:B} proves
the two identities, and \Cref{sec:composite} combines them into
\Cref{cor:main}. \Cref{sec:aml} specializes the result to applicative matching
logic. \Cref{sec:mu} proves incompleteness with fixpoints,
\Cref{sec:regimes} compares the four settings, and \Cref{sec:limits} shows that
the two failures do not depend on the calculus. \Cref{sec:conclusions}
discusses the relation to modal and hybrid logic, mechanization, and open
problems.

%% ================================================================
\section{The system}\label{sec:system}

\subsection*{Syntax}
We work in \emph{basic matching logic, unsorted}, with no definedness and no
fixpoints until \Cref{sec:mu}.
Fix one sort and a set $\Sigma$ of symbols, each with an arity $n\ge 0$.
The set $\Sigma$ may be infinite, and symbols of arity $0$ are \emph{constants}.
Element variables $\Var$ are countably infinite. Primitives are implication
and $\bot$:
\[
  \varphi ::= x \;\mid\; \sigma(\varphi_1,\ldots,\varphi_n) \;\mid\;
              \varphi_1\to\varphi_2 \;\mid\; \bot \;\mid\;
              \exists x.\,\varphi
  \qquad (x\in\Var,\ \sigma\in\Sigma \text{ of arity } n),
\]
with $\neg\varphi:=\varphi\to\bot$, $\top:=\bot\to\bot$,
$\varphi_1\vee\varphi_2:=\neg\varphi_1\to\varphi_2$,
$\varphi_1\wedge\varphi_2:=\neg(\varphi_1\to\neg\varphi_2)$, and
$\forall x.\,\varphi:=\neg\exists x.\,\neg\varphi$. A constant is written
$\sigma$ rather than $\sigma()$. There are no set variables or fixpoints, and
there is no definedness symbol. \Cref{rem:setvars} discusses set variables.

\subsection*{Models and denotations}
A model is a nonempty set $M$ together with, for each $\sigma\in\Sigma$ of
arity $n$, a map $\sigma_M:M^n\to\Pw(M)$ (so $\sigma_M\subseteq M$ when
$n=0$), extended pointwise to sets by
\[
  \sigma_M(A_1,\ldots,A_n)
  := \bigcup\{\sigma_M(a_1,\ldots,a_n) \mid a_i\in A_i\},
\]
which is $\varnothing$ as soon as some $A_i$ is. A valuation is
$\rho:\Var\to M$, and the \emph{denotation} $\ii{\rho}{\varphi}\subseteq M$ is
\[
  \ii{\rho}{x}=\{\rho(x)\},\qquad
  \ii{\rho}{\sigma(\varphi_1,\ldots,\varphi_n)}
    =\sigma_M\bigl(\ii{\rho}{\varphi_1},\ldots,\ii{\rho}{\varphi_n}\bigr),
  \qquad
  \ii{\rho}{\bot}=\varnothing,
\]
\[
  \ii{\rho}{\varphi_1\to\varphi_2}
   =\bigl(M\setminus\ii{\rho}{\varphi_1}\bigr)\cup\ii{\rho}{\varphi_2},
  \qquad
  \ii{\rho}{\exists x.\,\varphi}
   =\bigcup_{a\in M}\ii{\rho[a/x]}{\varphi} .
\]
The derived clauses are $\ii{\rho}{\neg\varphi}=M\setminus\ii{\rho}{\varphi}$,
$\ii{\rho}{\top}=M$, and
$\ii{\rho}{\forall x.\,\varphi}=\bigcap_{a\in M}\ii{\rho[a/x]}{\varphi}$. A
set $\Delta$ of patterns denotes conjunctively,
$\ii{\rho}{\Delta}:=\bigcap_{\delta\in\Delta}\ii{\rho}{\delta}$, with value
$M$ when $\Delta=\varnothing$. When every member of $\Delta$ is closed, this
does not depend on $\rho$, and we write $\den{\Delta}$.

Although absent from the language, the definedness symbol occurs frequently in
the discussion below, so we record its semantics. It is the unary operator
$\lceil\cdot\rceil$, whose semantics is
$\ii{\rho}{\lceil\varphi\rceil}=M$ if $\ii{\rho}{\varphi}\ne\varnothing$ and
$\varnothing$ otherwise, with dual
$\lfloor\varphi\rfloor:=\neg\lceil\neg\varphi\rceil$, which is total exactly
when $\varphi$ is. In modal terms these are the existential and universal
modalities, $\lceil\cdot\rceil=E$ and $\lfloor\cdot\rfloor=A$, where
$A\varphi$ holds at a point iff $\varphi$ holds at every point of the model
\cite{GorankoPassy1992}. Neither is available in the language above.
\Cref{sec:conclusions} shows that they are not even definable in it.

\subsection*{Totality and the three consequence relations}
\begin{definition}\label{def:total}
$\varphi$ is \emph{total} in $M$ under $\rho$ if $\ii{\rho}{\varphi}=M$. Then
\[
\begin{aligned}
  M\vDash\varphi
    &\ :\iff\ \ii{\rho}{\varphi}=M \text{ for every } \rho
      &&\text{($\varphi$ is total in $M$)};\\
  M\vDash\Gamma
    &\ :\iff\ M\vDash\gamma \text{ for every } \gamma\in\Gamma;\\
  \vDash\varphi
    &\ :\iff\ M\vDash\varphi \text{ for every model } M
      &&\text{(\emph{validity})};\\
  \Delta\lloc\varphi
    &\ :\iff\ \ii{\rho}{\Delta}\subseteq\ii{\rho}{\varphi}
      \text{ for every } M,\rho
      &&\text{(\emph{local})};\\
  \Gamma\vDash\varphi
    &\ :\iff\ M\vDash\varphi \text{ for every } M\vDash\Gamma
      &&\text{(\emph{global})}.
\end{aligned}
\]
\end{definition}

The relations $\varnothing\lloc\varphi$, $\varnothing\vDash\varphi$ and
$\vDash\varphi$ all coincide. The relation $\lloc$ compares denotations
pointwise, whereas $\vDash$ asks for totality. This distinction is the source of
the difficulty.

\emph{Closed patterns.} Throughout, we assume without loss of generality that
$\Gamma$ and $\varphi$ are closed. Writing $\widehat\psi$ for the universal closure
of $\psi$, we have $M\vDash\psi$ iff $M\vDash\widehat\psi$, because
$\ii{\rho}{\widehat\psi}=\bigcap_{\bar a}\ii{\rho[\bar a/\bar x]}{\psi}$.
Moreover,
$\Gamma\vdash\psi$ iff $\widehat\Gamma\vdash\widehat\psi$, by universal
generalization in one direction and instantiation in the other. Hence both
$\Gamma\vdash\varphi$ and $\Gamma\vDash\varphi$ are unchanged by closing, and
closing also preserves $\Delta_\Gamma\lloc\varphi$ because
$\den{\Delta_\Gamma}$ will not depend on the valuation (\Cref{def:boxed}). In
general, local consequence does \emph{not} survive closing its conclusion. With
$\Gamma$ and $\varphi$ closed,
\[
  M\vDash\Gamma \iff \den{\Gamma}=M ,
  \qquad
  M\vDash\varphi \iff \den{\varphi}=M ,
\]
so no relation among the three carries a valuation quantifier. Inside proofs,
however, subpatterns and the intermediate patterns of a derivation are open,
and there $\ii{\rho}{\cdot}$ is unavoidable.

\subsection*{Convention}
\Cref{def:total} is the only place where models are quantified over
explicitly. Everywhere else a model $M$ is fixed but arbitrary and left
implicit. The interpretations $\sigma_M$, $\ii{\rho}{\varphi}$ and the
reachability relations of \Cref{sec:boxed} are all taken in that $M$, and a
claim stated without a model is a claim about every $M$. The exception is
\Cref{sec:B}, where a second model $N$ is built alongside $M$. From the point
where $N$ appears, we index the
denotation by the model, writing $\den{\psi}_M$ and $\den{\psi}_N$, and
likewise $\ii{\rho}{\psi}_M$ and $\ii{\nu}{\psi}_N$.

\subsection*{Proof system}
$\Gamma\vdash\varphi$ means that $\varphi$ occurs in a finite sequence each of
whose items is an instance of an axiom scheme of \Cref{fig:system}, a member of
$\Gamma$, or the result of applying one of its inference rules to earlier
items. The system is that of \cite{ChenRosu2019}, restricted to one sort. The
presentation over $\to,\bot$, with the other connectives as derived notations,
follows \cite[Figs.~1--2]{ChenLucanuRosu2026}.

\begin{figure}[t]
\setlength{\fboxsep}{10pt}
\noindent\fbox{\begin{minipage}{.93\textwidth}
\small
\emph{Application contexts.}
$C ::= \square \mid \sigma(\varphi_1,\ldots,C,\ldots,\varphi_n)$, where
$\square$ is the hole. The notation $C[\varphi]$ plugs $\varphi$ into it.
\smallskip\hrule\smallskip
\begin{enumerate}[leftmargin=2.9em,itemsep=.35em,label=\textbf{(\arabic*)}]
\item \textbf{Tautology.} $\vdash\varphi$, for $\varphi$ a substitution
  instance of a propositional tautology over $\to$ and $\bot$ (patterns
  substituted for the propositional atoms).
\item \textbf{Modus ponens.} From $\varphi_1$ and $\varphi_1\to\varphi_2$
  infer $\varphi_2$.
\item \textbf{$\exists$-quantifier.}
  $\vdash\varphi[y/x]\to\exists x.\,\varphi$, capture-avoiding.
\item \textbf{$\exists$-generalization.} From $\varphi_1\to\varphi_2$ infer
  $(\exists x.\,\varphi_1)\to\varphi_2$, provided $x\notin\FV(\varphi_2)$.
\end{enumerate}
\smallskip
For every $\sigma\in\Sigma$ of arity $n\ge 1$ and $1\le i\le n$, writing
$\vec\varphi$ for the remaining arguments:
\begin{enumerate}[leftmargin=2.9em,itemsep=.35em,start=5,
                  label=\textbf{(\arabic*)}]
\item \textbf{Propagation$_\bot$.}
  $\vdash\sigma(\ldots,\bot,\ldots)\to\bot$.
\item \textbf{Propagation$_\vee$.}
  $\vdash\sigma(\ldots,\varphi_1\vee\varphi_2,\ldots)
   \to\sigma(\ldots,\varphi_1,\ldots)\vee\sigma(\ldots,\varphi_2,\ldots)$.
\item \textbf{Propagation$_\exists$.}
  $\vdash\sigma(\ldots,\exists x.\,\varphi,\ldots)
   \to\exists x.\,\sigma(\ldots,\varphi,\ldots)$,
  provided $x\notin\FV(\vec\varphi)$.
\item \textbf{Framing.} From $\varphi_1\to\varphi_2$ infer
  $\sigma(\ldots,\varphi_1,\ldots)\to\sigma(\ldots,\varphi_2,\ldots)$.
\end{enumerate}
\smallskip
\begin{enumerate}[leftmargin=2.9em,itemsep=.35em,start=9,
                  label=\textbf{(\arabic*)}]
\item \textbf{Existence.} $\vdash\exists x.\,x$.
\item \textbf{Singleton variable.}
  $\vdash C_1[x\wedge\varphi]\to\neg\,C_2[x\wedge\neg\varphi]$, for all application
  contexts $C_1,C_2$.
\end{enumerate}
\smallskip\hrule\smallskip
\emph{Derived.} Universal generalization: from $\varphi$ infer
$\forall x.\,\varphi$, with no side condition on $x$.
\end{minipage}}
\caption{The proof system. Rules (1)--(4) and (9) form the quantifier logic of
element variables. Rules (5)--(8) state that every symbol is a normal modality
in each argument. Rule (10) states that element variables denote singletons.}
\label{fig:system}
\end{figure}

\subsection*{The two inputs}
\begin{itemize}[leftmargin=1.4em,itemsep=.2em]
\item[(L)] \emph{Strong local completeness.} If $\Delta\lloc\varphi$ then
  $\vdash\bigl(\bigwedge\Delta_0\bigr)\to\varphi$ for some finite
  $\Delta_0\subseteq\Delta$.
\item[(S)] \emph{Soundness.} $\Gamma\vdash\varphi$ implies
  $\Gamma\vDash\varphi$.
\end{itemize}
We use both results as black boxes without opening their proofs. We refer to
(L) below as our \emph{oracle}, the one nontrivial fact about $\vdash$ used by
the argument. (L) is Definition
3.3 and Theorem 3.8 of \cite{Chen2023Thesis} and is also proved in
\cite{ChenRosu2019TR}. In \cite{ChenRosu2019}, it yields Theorem 16, the weak
completeness statement that $\vDash\varphi$ implies $\vdash\varphi$, by taking
$\Delta$ to be empty. See \cite[\S IX-B]{ChenRosu2019} for the three notions of
completeness and their relationship. The result is proved by a canonical-model
construction for many-sorted polyadic signatures, extending the completeness
proof for hybrid logic with binders of \cite{BlackburnTzakova1998} to multiple
sorts and modalities of arbitrary arity. Our~(L) is its one-sorted special case.
(S) is Theorem 13 of \cite{ChenRosu2019}.

That source fixes a countable signature, while we allow an arbitrary finitary
$\Sigma$, so we record why (L) is available in the generality used here.
Suppose $\Delta\lloc\varphi$. Under the standard translation of
\cite[\S10]{Rosu2017} this is a first-order consequence, so by compactness
there is a finite $\Delta_0\subseteq\Delta$ with $\Delta_0\lloc\varphi$. Let
$\Sigma_0$ be the finite set of symbols occurring in $\Delta_0\cup\{\varphi\}$.
Every $\Sigma_0$-model expands to a $\Sigma$-model by interpreting the remaining
symbols arbitrarily, and this leaves the denotation of every
$\Sigma_0$-pattern unchanged, so $\Delta_0\lloc\varphi$ holds over $\Sigma_0$
as well. The cited theorem applies there and yields
$\vdash(\bigwedge\Delta_0')\to\varphi$ for a finite
$\Delta_0'\subseteq\Delta_0$. That derivation is also a derivation over $\Sigma$.

We distinguish three completeness statements. \emph{Weak} completeness is the
empty-theory case, in which $\vDash\varphi$ implies $\vdash\varphi$.
\emph{Local} completeness is~(L), which quantifies over an arbitrary $\Delta$
compared pointwise. \emph{Global} completeness states that
$\Gamma\vDash\varphi$ implies $\Gamma\vdash\varphi$ for arbitrary $\Gamma$.
This is \Cref{cor:main}. Each of the last two specializes to the first at
$\Delta=\varnothing$ and $\Gamma=\varnothing$. For~(L),
$\vdash\top\to\varphi$ gives $\vdash\varphi$. Weak completeness is therefore
the weakest of the three. Refuting it, as \Cref{thm:mu} does, refutes the other
two at once.

The two \emph{derivability} relations also differ. \Cref{lem:GN} closes this gap
for $\Delta_\Gamma$. Write
$\Gamma\vdash_{\mathrm{loc}}\varphi$ when
$\vdash(\bigwedge\Gamma_0)\to\varphi$ for some finite
$\Gamma_0\subseteq\Gamma$, which is the conclusion form of~(L). Then
$\Gamma\vdash_{\mathrm{loc}}\varphi$ implies $\Gamma\vdash\varphi$ by modus
ponens. The converse fails. Take $\Gamma=\{\neg x\}$, as in
\cite[\S3.3]{Chen2023Thesis}. Since $\neg x$ is $x\to\bot$,
$\exists$-generalization gives $\Gamma\vdash(\exists x.\,x)\to\bot$, and
\textup{(Existence)} then gives $\Gamma\vdash\bot$. However,
$\nvdash\neg x\to\bot$ because, in a two-element model, $\neg x$ denotes the
point other than $\rho(x)$, which is not contained in $\varnothing$. The
example uses an open pattern, which is one reason we close $\Gamma$ and
$\varphi$ throughout.

\begin{ednote}\label{note:provenance}
Matching logic, with its patterns, its set-valued interpretation of symbols
and its element variables denoting singletons, is due to Ro\c{s}u
\cite{Rosu2010,Rosu2017}, with \cite{ChenLucanuRosu2021} as the expository
account. The proof system of \Cref{fig:system}, local completeness, global
completeness in the presence of definedness, and failure of the deduction
theorem without definedness are all from \cite{ChenRosu2019}. The applicative
restriction to one binary symbol is \cite{ChenRosu2019AML}, developed as a
foundation for $\mathbb{K}$ in \cite{ChenRosu2020AML} and used for binders and
type systems in \cite{ChenRosu2020ICFP}. The interpretation of definedness as a
universal modality and the resulting reduction of global to local consequence
are established in \cite{GorankoPassy1992}. The $\Box^\omega$ form of that
reduction for basic modal logic is Exercise 1.5.3 of
\cite{BlackburnDeRijkeVenema2001}. That matching logic and hybrid logic are two
presentations of one logic under global deduction is
\cite{LeusteanMoangaSerbanuta2019}.
\end{ednote}

%% ================================================================
\section{The boxed theory}\label{sec:boxed}

\begin{definition}[coordinates, reachability, backward closure]\label{def:bc}
A \emph{coordinate} is a pair $e=(\sigma,i)$ with $\sigma\in\Sigma$ of arity
$n\ge 1$ and $1\le i\le n$. Let $E$ be the set of coordinates. Constants
therefore have no coordinates. A symbol points back to each of its arguments,
$\ld{e}:=\{(u,a_i)\mid u\in\sigma_M(a_1,\ldots,a_n)\}\subseteq M\times M$, and
$\ldn:=\bigcup_{e\in E}\ld{e}$. For a word $p=e_1\cdots e_m\in E^*$ let
$\ld{p}:=\ld{e_1};\cdots;\ld{e_m}$ be the relational composite, with
$\ld{\varepsilon}:=\mathrm{id}_M$. Then $\lds=\bigcup_{p\in E^*}\ld{p}$. For a
relation $R$ on $M$ write $R[U]:=\{v\mid u\mathbin{R}v \text{ for some }
u\in U\}$, abbreviating $R[\{u\}]$ to $R[u]$. Finally, $C\subseteq M$ is
\emph{backward closed} if $\ldn[C]\subseteq C$. Equivalently, if $u\in C$ and
$u\in\sigma_M(a_1,\ldots,a_n)$, then $a_1,\ldots,a_n\in C$.
\end{definition}

Thus $\ims{u}$ is the set of points reachable from $u$ by descending into
arguments, and the least backward-closed set containing $u$.

\begin{definition}[boxes]\label{def:boxes}
For $e=(\sigma,i)$ put $\dm{e}\psi:=\sigma(\top,\ldots,\psi,\ldots,\top)$ with
$\psi$ in position $i$, and put $\bx{e}\psi:=\dm{e}(\psi\to\bot)\to\bot$. For
a word $p=e_1\cdots e_m$, $\bx{p}\psi:=\bx{e_1}\cdots\bx{e_m}\psi$, with
$\bx{\varepsilon}\psi:=\psi$.
\end{definition}

\begin{ednote}
This is where the assumption of a single sort is used. Every coordinate
accepts and returns a pattern of that sort, so \emph{every} word in $E^*$
composes and every $\bx{p}\psi$ is well formed. With several sorts a word
from the sort of a hypothesis to the sort of the conclusion exists only along
a chain of symbol inputs, and there may be none, in which case \Cref{thm:B}
fails. See \Cref{sec:regimes}.
\end{ednote}

\begin{lemma}[box semantics]\label{lem:boxsem}
For closed $\psi$,
$\den{\bx{p}\psi}=\{u\in M \mid \im{p}{u}\subseteq\den{\psi}\}$.
\end{lemma}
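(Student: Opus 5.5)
I will argue by induction on the length of the word $p$. The induction is driven by a one-step computation for a single coordinate $e$. Closedness of $\psi$ is inherited by every $\bx{q}\psi$, so all denotations in play are valuation-independent and we may write $\den{\cdot}$ throughout.

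\emph{Base case, $p=\varepsilon$.} Here $\bx{\varepsilon}\psi=\psi$ and $\ld{\varepsilon}=\mathrm{id}_M$, so $\im{\varepsilon}{u}=\{u\}$. The claim reduces to $\den{\psi}=\{u\mid u\in\den{\psi}\}$, which is immediate.

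\emph{Single coordinate.} Let $e=(\sigma,i)$ and let $\chi$ be closed. By the pointwise extension of $\sigma_M$ and $\den{\top}=M$,
\[
  \den{\dm{e}\chi}
  = \bigcup\{\sigma_M(a_1,\ldots,a_n)\mid a_j\in M,\ a_i\in\den{\chi}\}
  = \{u\mid \im{e}{u}\cap\den{\chi}\neq\varnothing\}.
\]
The second equality is just the definition of $\ld{e}$: $u\ld{e}a$ iff $u\in\sigma_M(\vec a)$ for some $\vec a$ with $a_i=a$. Since $\den{\chi\to\bot}=M\setminus\den{\chi}$ and $\den{\theta\to\bot}=M\setminus\den{\theta}$, taking complements gives
\[
  \den{\bx{e}\chi}=\{u\mid \im{e}{u}\subseteq\den{\chi}\}.
\]

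\emph{Inductive step.} Write $p=e\,q$. Then $\bx{p}\psi=\bx{e}\bx{q}\psi$ and $\ld{p}=\ld{e};\ld{q}$. Applying the single-coordinate case with $\chi=\bx{q}\psi$ and then the induction hypothesis for $q$, we get that $u\in\den{\bx{p}\psi}$ iff every $v\in\im{e}{u}$ satisfies $\im{q}{v}\subseteq\den{\psi}$. This holds iff $\bigcup_{v\in\im{e}{u}}\im{q}{v}\subseteq\den{\psi}$. By the definition of relational composition, that union is $\im{p}{u}$, which completes the induction.

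The only real content is the single-coordinate computation. The step most prone to error there is unfolding the pointwise extension, in particular the empty-argument behaviour. It causes no difficulty here: the other arguments are $\top$, which is nonempty because $M\ne\varnothing$, and if $\den{\chi}=\varnothing$ both sides give $\den{\dm{e}\chi}=\varnothing$.
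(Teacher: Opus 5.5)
Your proposal is correct and follows the paper's own argument: compute $\den{\dm{e}\chi}$ for one coordinate from the pointwise extension, using $\den{\top}=M$, and take complements to get the box. Your induction on the length of $p$, with $\ld{p}=\ld{e};\ld{q}$ unfolded as $\bigcup_{v\in\im{e}{u}}\im{q}{v}$, is a careful spelling-out of the paper's ``words compose.''
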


\begin{proof}
For a single $e=(\sigma,i)$, since $\den{\top}=M$, the pointwise extension
gives $\den{\dm{e}\neg\psi}=\{u\mid \im{e}{u}\cap\den{\neg\psi}\ne\varnothing\}$,
whose complement is $\{u\mid \im{e}{u}\subseteq\den{\psi}\}$. Words compose.
\end{proof}

\begin{lemma}[necessitation]\label{lem:GN}
If $\Gamma\vdash\psi$ then $\Gamma\vdash\bx{p}\psi$ for every $p\in E^*$.
\end{lemma}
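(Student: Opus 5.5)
By induction on the length of the word $p$, the statement reduces to the single-coordinate case. That case says: if $\Gamma\vdash\psi$ and $e=(\sigma,i)$ is a coordinate, then $\Gamma\vdash\bx{e}\psi$. Granting it, for $p=e_1\cdots e_m$ we apply it first to $\psi$ with $e_m$, then to $\bx{e_m}\psi$ with $e_{m-1}$, and so on. Since $\bx{p}\psi=\bx{e_1}\cdots\bx{e_m}\psi$, after $m$ steps we obtain $\Gamma\vdash\bx{p}\psi$. The case $p=\varepsilon$ is immediate from $\bx{\varepsilon}\psi=\psi$.

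For the single step, recall $\bx{e}\psi=\sigma(\top,\ldots,\psi\to\bot,\ldots,\top)\to\bot$, with $\psi\to\bot$ in position $i$. We proceed in four moves.
\begin{enumerate}
\item From $\Gamma\vdash\psi$, the tautology $\psi\to((\psi\to\bot)\to\bot)$ and modus ponens give $\Gamma\vdash(\psi\to\bot)\to\bot$.
\item \textbf{Framing} at coordinate $(\sigma,i)$, with the other arguments fixed to $\top$, yields
\[
\Gamma\vdash\sigma(\top,\ldots,\psi\to\bot,\ldots,\top)\to\sigma(\top,\ldots,\bot,\ldots,\top).
\]
\item \textbf{Propagation$_\bot$} gives $\vdash\sigma(\top,\ldots,\bot,\ldots,\top)\to\bot$.
\item The tautology $(A\to B)\to((B\to\bot)\to(A\to\bot))$ and two applications of modus ponens chain the implications from the previous two moves into
\[
\Gamma\vdash\sigma(\top,\ldots,\psi\to\bot,\ldots,\top)\to\bot,
\]
which is exactly $\bx{e}\psi$.
\end{enumerate}

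The only point needing care is that derivations from $\Gamma$ may apply Framing to a hypothesis-derived implication, not only to a theorem. This holds because $\Gamma\vdash$ is defined as closure of $\Gamma$ together with the axioms under all the rules of \Cref{fig:system}, Framing included, with no restriction to theorems. This is the same feature that makes universal generalization apply to hypotheses, as noted in \Cref{sec:system}. So there is no real obstacle; the argument is a routine syntactic induction using Tautology, Modus ponens, Framing and Propagation$_\bot$.
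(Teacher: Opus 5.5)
Your proposal is correct and follows the paper's proof essentially step for step. The paper also reduces to a single coordinate, then uses the tautology with modus ponens, Framing in position $i$ with $\top$ elsewhere, Propagation$_\bot$, and a propositional composition. Your remark that Framing may be applied to lines derived from $\Gamma$ matches the paper's definition of $\Gamma\vdash$.
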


\begin{proof}
It suffices to treat a single coordinate $e=(\sigma,i)$ and iterate along $p$.
The tautology $\psi\to((\psi\to\bot)\to\bot)$ with modus ponens gives
$\Gamma\vdash\neg\psi\to\bot$. Framing in position $i$, with $\top$ in the
other positions, gives
$\Gamma\vdash\dm{e}\neg\psi\to\sigma(\top,\ldots,\bot,\ldots,\top)$. Finally,
propagation of $\bot$ in position $i$ gives
$\vdash\sigma(\top,\ldots,\bot,\ldots,\top)\to\bot$. Composing the two
implications, which is propositional, gives
$\Gamma\vdash\dm{e}\neg\psi\to\bot$, that is, $\Gamma\vdash\bx{e}\psi$
(\Cref{def:boxes}).
\end{proof}

\begin{definition}[localization]\label{def:boxed}
The \emph{localization} of $\Gamma$ is
$\ \Delta_\Gamma := \{\bx{p}\gamma \mid \gamma\in\Gamma,\ p\in E^*\}$.
\end{definition}

The localization remains a set even when $\Sigma$ is infinite. Indeed,
$E\subseteq\Sigma\times\mathbb{N}$ is a set, hence so is
$E^*=\bigcup_k E^k$, and $\Delta_\Gamma$ is the image of
$\Gamma\times E^*$ under $(\gamma,p)\mapsto\bx{p}\gamma$. Its members are
closed, so $\den{\Delta_\Gamma}$ is defined, and $\Delta_\Gamma\lloc\varphi$
says exactly that $\den{\Delta_\Gamma}\subseteq\den{\varphi}$ in every $M$.
Every member of $\Delta_\Gamma$ is derivable from $\Gamma$ by \Cref{lem:GN},
and $\Gamma\subseteq\Delta_\Gamma$ by taking $p=\varepsilon$. By~(S), the two
have the same models, $M\vDash\Gamma$ iff $M\vDash\Delta_\Gamma$.

\begin{lemma}\label{lem:Beasy}
If $\Delta_\Gamma\lloc\varphi$ then $\Gamma\vDash\varphi$.
\end{lemma}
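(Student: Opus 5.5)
The plan is a direct semantic argument from the definitions, using only \Cref{lem:boxsem} and the remark that $\Gamma\subseteq\Delta_\Gamma$. Fix a model $M$ with $M\vDash\Gamma$. Because $\Gamma$ and $\varphi$ are closed, we must show $\den{\varphi}=M$. The hypothesis $\Delta_\Gamma\lloc\varphi$ gives $\den{\Delta_\Gamma}\subseteq\den{\varphi}$ in $M$. It therefore suffices to prove $\den{\Delta_\Gamma}=M$, that is, $\den{\bx{p}\gamma}=M$ for every $\gamma\in\Gamma$ and every $p\in E^*$.

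For this step we use \Cref{lem:boxsem}, which gives
\[
  \den{\bx{p}\gamma}=\{u\in M\mid \im{p}{u}\subseteq\den{\gamma}\}.
\]
Since $M\vDash\gamma$ and $\gamma$ is closed, $\den{\gamma}=M$. The inclusion $\im{p}{u}\subseteq M$ therefore holds for every $u$, so $\den{\bx{p}\gamma}=M$. Intersecting over all members of $\Delta_\Gamma$ gives $\den{\Delta_\Gamma}=M$. Hence $\den{\varphi}=M$, which means $M\vDash\varphi$. Since $M$ was an arbitrary model of $\Gamma$, we conclude $\Gamma\vDash\varphi$.

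An alternative route avoids \Cref{lem:boxsem}. By \Cref{lem:GN}, every member of $\Delta_\Gamma$ is derivable from $\Gamma$, and by (S) it is then total in every model of $\Gamma$. This is the observation recorded after \Cref{def:boxed}, that $M\vDash\Gamma$ iff $M\vDash\Delta_\Gamma$. Either route gives the totality of $\Delta_\Gamma$. I expect no real obstacle. The only point requiring care is that all patterns involved are closed, which lets us drop the valuation. This holds by the standing closure assumption, and the members of $\Delta_\Gamma$ are closed because each $\bx{p}$ adds no free variables.
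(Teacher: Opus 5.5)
Your proof is correct, but your main route differs from the paper's. The paper obtains $\den{\Delta_\Gamma}=M$ from the observation after \Cref{def:boxed}. There, each $\bx{p}\gamma$ is derivable from $\Gamma$ by \Cref{lem:GN}, and soundness~(S) then makes it total in every model of $\Gamma$, which is exactly your ``alternative route''.

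Your primary route instead reads the totality of $\bx{p}\gamma$ directly off \Cref{lem:boxsem}. Once $\den{\gamma}=M$, the condition $\im{p}{u}\subseteq\den{\gamma}$ holds for every $u$. This is necessitation done semantically. It keeps the easy half of \Cref{thm:B} free of any proof-theoretic input, so the semantic identity no longer depends on the black box~(S) or on the particular calculus. The paper's version is shorter only because it reuses a fact it has already recorded.

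One small point: the inclusion $\Gamma\subseteq\Delta_\Gamma$ that you mention at the outset is never actually used in your main argument.
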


\begin{proof}
Let $M\vDash\Gamma$. Then $M\vDash\Delta_\Gamma$, i.e.\
$\den{\Delta_\Gamma}=M$, by the equivalence just noted. As $\Gamma$ and
$\varphi$ are closed, $\Delta_\Gamma\lloc\varphi$ reads
$\den{\Delta_\Gamma}\subseteq\den{\varphi}$. Hence $\den{\varphi}=M$, i.e.\
$M\vDash\varphi$.
\end{proof}

\begin{lemma}\label{lem:whatDelta}
$\den{\Delta_\Gamma}=\{u\in M\mid\ims{u}\subseteq\den{\Gamma}\}$, the largest
backward-closed subset of $\den{\Gamma}$.
\end{lemma}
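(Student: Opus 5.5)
The plan is to compute $\den{\Delta_\Gamma}$ directly from \Cref{lem:boxsem} and then check the two properties of the resulting set separately. Since every member of $\Delta_\Gamma$ is closed, the conjunctive denotation gives
\[
  \den{\Delta_\Gamma}=\bigcap_{\gamma\in\Gamma}\bigcap_{p\in E^*}\den{\bx{p}\gamma}
  =\bigcap_{\gamma\in\Gamma}\bigcap_{p\in E^*}\{u\mid \im{p}{u}\subseteq\den{\gamma}\}.
\]
Thus $u\in\den{\Delta_\Gamma}$ iff $\im{p}{u}\subseteq\den{\gamma}$ for all $\gamma\in\Gamma$ and all $p\in E^*$. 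Exchanging the quantifiers, this holds iff $\bigcup_{p}\im{p}{u}\subseteq\bigcap_\gamma\den{\gamma}=\den{\Gamma}$. Because $\lds=\bigcup_{p\in E^*}\ld{p}$ by \Cref{def:bc}, we have $\bigcup_p\im{p}{u}=\ims{u}$. This gives the displayed equality. The degenerate case $\Gamma=\varnothing$ is consistent, since both sides are then $M$.

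For the characterization as the largest backward-closed subset of $\den{\Gamma}$, let $D:=\{u\mid\ims{u}\subseteq\den{\Gamma}\}$. I will prove three things.

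\emph{$D\subseteq\den{\Gamma}$.} Take $p=\varepsilon$. Then $u\in\ims{u}$ because $\ld{\varepsilon}=\mathrm{id}_M$.

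\emph{$D$ is backward closed.} If $u\in D$ and $u\ldn v$, then $\ims{v}\subseteq\ims{u}$, because $\lds$ is closed under prefixing a single step: $u\ld{e}v$ and $v\ld{p}w$ give $u\ld{ep}w$. Hence $v\in D$.

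\emph{$D$ is the largest such set.} Let $C\subseteq\den{\Gamma}$ be backward closed and $u\in C$. By induction on the length of $p$, $\im{p}{u}\subseteq C$. The base case is $\varepsilon$, and the step uses $\ldn[C]\subseteq C$. Therefore $\ims{u}\subseteq C\subseteq\den{\Gamma}$, that is, $u\in D$.

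No step is a real obstacle. The only points needing care are bookkeeping: the order of composition in $\ld{p}$, which must match the nesting $\bx{e_1}\cdots\bx{e_m}$ already handled in \Cref{lem:boxsem}, and the fact that $\den{\cdot}$ of the infinite set $\Delta_\Gamma$ is an intersection independent of the valuation, which holds because all of its members are closed.
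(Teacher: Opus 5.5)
Your proposal is correct and follows the paper's proof. Like the paper, you unfold $\den{\Delta_\Gamma}$ via \Cref{lem:boxsem} to $\{u\mid\ims{u}\subseteq\den{\Gamma}\}$, then check containment in $\den{\Gamma}$ via $p=\varepsilon$, backward closure via $\ims{v}\subseteq\ims{u}$ for $u\rightsquigarrow v$, and maximality by showing $\ims{u}\subseteq U$ for any backward-closed $U\subseteq\den{\Gamma}$ containing $u$. The only difference is that you spell out the quantifier exchange and the induction on the length of $p$, which the paper leaves implicit.
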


\begin{proof}
By \Cref{lem:boxsem}, $\den{\Delta_\Gamma}
=\bigcap_{\gamma,p}\{u\mid\im{p}{u}\subseteq\den{\gamma}\}
=\{u\mid\ims{u}\subseteq\den{\Gamma}\}$. Call that set $T$. Then $T\subseteq\den{\Gamma}$ because $u\in\ims{u}$, and
$\ldn[T]\subseteq T$ because $\ims{v}\subseteq\ims{u}$ whenever
$u\rightsquigarrow v$. Any backward-closed $U\subseteq\den{\Gamma}$
satisfies $\ims{u}\subseteq U\subseteq\den{\Gamma}$ for $u\in U$, so
$U\subseteq T$.
\end{proof}

Thus $\Delta_\Gamma$ is the polyadic form of the $\Box^\omega$-closure of $\Gamma$
\cite[Ex.~1.5.3]{BlackburnDeRijkeVenema2001}, and localizing $\Gamma$ discards
exactly the points from which one can descend out of $\den{\Gamma}$.

%% ================================================================
\section{\texorpdfstring{$\Delta_\Gamma$ localizes $\Gamma$}{Delta-Gamma localizes Gamma}}\label{sec:B}

\subsection*{Idea of the proof}
The classical reduction fails at its final step. In basic modal logic, a
pointed local countermodel is restricted to the submodel generated by its
point. The resulting submodel still satisfies the hypotheses because every
point is reachable from the root, and it still refutes the conclusion because
modal formulas are invariant under generated submodels
\cite[Ex.~1.5.3 and \S2.1]{BlackburnDeRijkeVenema2001}. This invariance argument
is unavailable here. An element variable may be assigned outside the generated
part, and $\exists x$ quantifies over that assignment.

\Cref{lem:locality} characterizes what the language can \emph{see} outside a
backward-closed $C$. An element variable is tested only for equality with the
current point, which lies in $C$, so any value outside $C$ reads as ``not
here''. Backward closure also prevents travel outside $C$. From within $C$,
every point in the complement therefore contributes the same truth value. One
phantom element suffices to represent it.

A matching-logic model, however, cannot contain phantom elements. Every element
must be a legal value of an element variable, a legal witness for $\exists x$,
and, because global consequence demands totality at every point, a point at
which $\Gamma$ holds. \Cref{lem:whatDelta} gives
$C\subseteq\den{\Gamma}$ at the instantiation $C=\ims{w}$ in \Cref{thm:B}.
Thus $\Gamma$ holds at
every point of $C$, and an admissible representative can be formed from a
second copy of $C$. The construction $N=C\times\{0,1\}$ in \Cref{def:N} is a
double cover of $C$. It is not obtained by extracting a submodel from $M$. Its
second sheet is as featureless from the first as
$M\setminus C$ was, while consisting of elements at which $\Gamma$ holds. The
added material must satisfy $\Gamma$, and $C$ is the only available source.
One additional copy suffices because the language can distinguish only one
phantom from within $C$.

The underlying ingredients are classical. There are also classical reasons to
expect that they do not suffice on their own. Generated submodels and their
invariance theorem are standard. Disjoint unions provide the standard witness that the universal
modality is undefinable \cite{GorankoPassy1992}. For neighbouring hybrid
logics, the failure of the modal argument with element variables and binders
explains why strong completeness requires extra apparatus. The available
devices include the satisfaction operator or $A$ \cite{GorankoPassy1992},
infinitary rules, and nominals used as Henkin witnesses. In
\cite{BlackburnTzakova1998}, for example, each existential is witnessed by a
nominal, and the language is expanded with a denumerable reserve of fresh ones.
Here we refine the invariance theorem. \Cref{lem:locality} replaces ``$C$
determines truth on $C$'' with the weaker statement that ``$C$ together with
one bit per variable determines truth on $C$''. We then use doubling to
manufacture globality, a different purpose from the usual definability
counterexample.

This refinement also has a direct precedent. \Cref{lem:locality} is in
substance Proposition~4.4 of \cite{BlackburnSeligman1995}, whose language
shares our variable and positive-arity core, consisting of state variables
bound by $\exists$, modalities of arbitrary arity, and no propositional
variables, ``the purely first-order apparatus''. That proposition proves that
$\exists$-formulas are preserved under proper generated-substructure
isomorphisms. Its accompanying explanation is that $\exists$ can detect
\emph{whether} points lie outside the substructure generated by the point of
evaluation while remaining blind to the information they contain.
Consequently, all non-local generated substructures could be collapsed to a
single point without $\exists$ detecting the difference. Proposition~4.5
performs this collapse by adjoining one isolated dummy point to the generated
part.

This collapse is incompatible with global consequence. A dummy point is
harmless for local satisfaction, which asks only whether it is the current
point. Global consequence $\Gamma\vDash\varphi$ requires $\Gamma$ to be
\emph{total}, that is, satisfied at every point of the model, including the
dummy. An arbitrary isolated point need not satisfy $\Gamma$. The phantom must
therefore be a point at which $\Gamma$ holds, and \Cref{lem:whatDelta} shows
that $C$ is the only available source. The complement is represented by a
second copy of $C$. With no edges between the copies, the second copy is
invisible from the first for the same reason given in
\cite{BlackburnSeligman1995}. It also satisfies $\Gamma$ at every point because
$C$ does. For local satisfaction, the outside can disappear entirely. Under
global consequence, every added point must model $\Gamma$, so the construction
replaces the outside by a copy of the inside.

Removing element variables and $\exists$ identifies the new content of this
construction. \Cref{lem:locality} then reduces to the classical statement that
$C$ determines truth on $C$. The set $C$ is already a countermodel,
\Cref{lem:twocopy} is unnecessary, and one may take $N:=C$. In this case,
\Cref{thm:B} becomes the generated-submodel argument verbatim. The construction
below supplies the additional step required by element variables and $\exists$.

Together with \Cref{sec:boxed}, this argument identifies the difference between
the two localizations. At a point $w$, the condition
$w\in\den{\Delta_\Gamma}$ means that $\Gamma$ holds at every point reachable
from $w$, so $\Delta_\Gamma$ acts as a master box. In
$\Gamma\vDash\varphi$, the premise $M\vDash\Gamma$ requires $\Gamma$ to hold
at every point of $M$, and the conclusion $M\vDash\varphi$ imposes the same
totality requirement on $\varphi$. A universal modality would express this
whole-model condition. The language has no such modality. The difference is
the unreachable part. The two lemmas below show that this part is featureless
and may be replaced by points at which $\Gamma$ holds.

The remaining implication,
$\Gamma\vDash\varphi\Rightarrow\Delta_\Gamma\lloc\varphi$, follows from two
lemmas about a backward-closed set $C\subseteq M$. The locality lemma requires
only backward closure (\Cref{def:bc}). The two-copy lemma also assumes
$\varnothing\ne C\ne M$, as in \Cref{def:N}. Only \Cref{thm:B} instantiates
$C$ as $\ims{w}$ for a point $w$ refuting $\varphi$.
Throughout this section, $C$ is backward closed.

\begin{lemma}[locality]\label{lem:locality}
Let $\rho,\rho':\Var\to M$ satisfy, for every $x$: either
$\rho(x)=\rho'(x)\in C$, or $\rho(x)\notin C$ and $\rho'(x)\notin C$. Then
$\ii{\rho}{\psi}\cap C=\ii{\rho'}{\psi}\cap C$ for every pattern $\psi$.
\end{lemma}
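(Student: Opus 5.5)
We prove the claim by structural induction on $\psi$, simultaneously for all pairs $(\rho,\rho')$ satisfying the hypothesis; call such pairs \emph{$C$-equivalent}. The hypothesis quantifies over all pairs, not over one fixed pair, and the $\exists$ case will modify both valuations.

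\emph{Atomic and Boolean cases.} For $\bot$ both sides are empty. For a variable $x$, if $\rho(x)=\rho'(x)\in C$ the two denotations coincide. Otherwise both $\{\rho(x)\}\cap C$ and $\{\rho'(x)\}\cap C$ are empty. For $\psi_1\to\psi_2$, intersecting with $C$ commutes with complement relative to $C$ and with union:
\[
\ii{\rho}{\psi_1\to\psi_2}\cap C=\bigl(C\setminus(\ii{\rho}{\psi_1}\cap C)\bigr)\cup\bigl(\ii{\rho}{\psi_2}\cap C\bigr).
\]
So this case follows from the induction hypotheses for $\psi_1$ and $\psi_2$.

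\emph{Symbol case.} This is where backward closure is used. Take $u\in C$ with $u\in\sigma_M(a_1,\ldots,a_n)$ and $a_i\in\ii{\rho}{\psi_i}$. By \Cref{def:bc}, every $a_i\in C$, so $a_i\in\ii{\rho}{\psi_i}\cap C=\ii{\rho'}{\psi_i}\cap C$ by induction. Hence $u\in\sigma_M(\ii{\rho'}{\psi_1},\ldots,\ii{\rho'}{\psi_n})$, and the converse inclusion is symmetric. In other words,
\[
\sigma_M(A_1,\ldots,A_n)\cap C=\sigma_M(A_1\cap C,\ldots,A_n\cap C)\cap C .
\]
For constants ($n=0$) the denotation does not depend on the valuation, so there is nothing to prove.

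\emph{Quantifier case.} This is the only step needing care. Let $u\in\ii{\rho}{\exists x.\,\psi}\cap C$, witnessed by some $a\in M$ with $u\in\ii{\rho[a/x]}{\psi}$. We choose a matching witness $a'$ for $\rho'$ as follows. If $a\in C$, take $a'=a$. If $a\notin C$, take $a'$ to be any element of $M\setminus C$. Such an element exists because $a$ itself lies outside $C$, so one may simply take $a'=a$ again. Either way $\rho[a/x]$ and $\rho'[a'/x]$ are $C$-equivalent: at $x$ they agree inside $C$ or both lie outside, and at other variables the hypothesis is inherited. The induction hypothesis for $\psi$, which is available for every $C$-equivalent pair, then gives $u\in\ii{\rho'[a'/x]}{\psi}$. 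Symmetry gives the other inclusion.

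The main obstacle is only to set up the induction so that the hypothesis is universally quantified over valuation pairs. Once that is done, every case is routine. The symbol case is the one place where backward closure of $C$ enters.
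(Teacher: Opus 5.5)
Your proof is correct and follows the paper's argument essentially verbatim. Both run an induction on $\psi$ quantified over all $C$-equivalent pairs, use backward closure to get $\sigma_M(A_1,\ldots,A_n)\cap C=\sigma_M(A_1\cap C,\ldots,A_n\cap C)\cap C$ in the symbol case, and handle the quantifier case termwise because $\rho[a/x]$ and $\rho'[a/x]$ remain $C$-equivalent. Your detour through an arbitrary $a'\in M\setminus C$ is harmless but unnecessary, since $a'=a$ always works, which is exactly the paper's termwise union.
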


\begin{proof}
Proceed by induction on $\psi$, for all $\rho,\rho'$. If $\psi=x$ is a variable
and $\rho(x)\in C$, then
$\ii{\rho}{x}\cap C=\{\rho(x)\}=\ii{\rho'}{x}\cap C$. If
$\rho(x)\notin C$ then both sides are $\varnothing$. For a constant and for
$\bot$ the denotation does not depend on the valuation. For
$\psi_1\to\psi_2$, intersecting with $C$ turns the complement in $M$ into the
complement in $C$, and the induction hypothesis applies to each side. For
$\sigma(\psi_1,\ldots,\psi_n)$ with $n\ge 1$, backward
closure puts every tuple producing a point of $C$ inside $C$,
where the induction hypothesis applies. Concretely,
$\sigma_M(A_1,\ldots,A_n)\cap C=\sigma_M(A_1\cap C,\ldots,A_n\cap C)\cap C$
for all $A_1,\ldots,A_n$. For
$\exists x.\,\psi_1$ the union over $a\in M$ may be taken termwise, since
$\rho[a/x]$ and $\rho'[a/x]$ again satisfy the hypothesis.
\end{proof}

\begin{definition}\label{def:N}
Assume $\varnothing\ne C\ne M$ from here on. Let $N:=C\times\{0,1\}$. For
$\sigma$ of arity $n\ge 1$,
\[
  \sigma_N\bigl((a_1,i_1),\ldots,(a_n,i_n)\bigr):=
  \begin{cases}
    \bigl(\sigma_M(a_1,\ldots,a_n)\cap C\bigr)\times\{i\},
      & i_1=\cdots=i_n=i,\\
    \varnothing, & \text{otherwise,}
  \end{cases}
\]
where the second case disallows a \emph{mixed} tuple. For a constant $\sigma$,
let $\sigma_N:=(\sigma_M\cap C)\times\{0,1\}$. Fix any
$\out\in M\setminus C$. By
\Cref{lem:locality} nothing below depends on which one. For $i\in\{0,1\}$ let
$\pi_i:N\to M$ keep copy $i$ and send the other copy to $\out$. Thus
$\pi_i(a,i):=a$ and $\pi_i(a,1-i):=\out$ for $a\in C$. For
$\nu:\Var\to N$ write $\pi_i(\nu):=\pi_i\circ\nu:\Var\to M$.
\end{definition}

$N$ is an ordinary model, nonempty because $C$ is, and each $\pi_i(\nu)$ maps
into $M$, so it can be fed directly to the denotation in $M$. A constant has no
arguments, so the mixed case does not arise for it. The same subset
$\sigma_M\cap C$ serves in both copies, vacuously when it is empty. This
symmetry is necessary because assigning the subset to only one copy would break
\Cref{lem:twocopy} at
$\psi=\sigma$ for every constant with $\sigma_M\cap C\ne\varnothing$.

\begin{lemma}[two copies]\label{lem:twocopy}
For every pattern $\psi$ and every $\nu:\Var\to N$,
\[
  \ii{\nu}{\psi}_N
  \;=\;
  \bigl(\ii{\pi_0(\nu)}{\psi}_M\cap C\bigr)\times\{0\}
  \ \cup\
  \bigl(\ii{\pi_1(\nu)}{\psi}_M\cap C\bigr)\times\{1\} .
\]
\end{lemma}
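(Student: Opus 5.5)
We prove the identity by structural induction on $\psi$, simultaneously for all valuations $\nu:\Var\to N$. It helps to write $S_i(\psi,\nu):=\ii{\pi_i(\nu)}{\psi}_M\cap C$, so that the claim reads $\ii{\nu}{\psi}_N=S_0\times\{0\}\cup S_1\times\{1\}$. Equivalently, for $a\in C$ we need $(a,i)\in\ii{\nu}{\psi}_N$ iff $a\in\ii{\pi_i(\nu)}{\psi}_M$. We work with this pointwise form, copy by copy.

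\emph{Base cases.} For $\bot$ both sides are empty. For a constant $\sigma$, the right-hand side is $(\sigma_M\cap C)\times\{0,1\}=\sigma_N$ by \Cref{def:N}. For a variable $x$, suppose $\nu(x)=(b,j)$. Then $\pi_j(\nu)(x)=b\in C$, so $S_j=\{b\}$. On the other side $\pi_{1-j}(\nu)(x)=\out\notin C$, so $S_{1-j}=\varnothing$. Hence the right-hand side is $\{(b,j)\}=\ii{\nu}{x}_N$.

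\emph{Implication.} Complementing in $N$ corresponds, copy by copy, to complementing in $C$. Intersecting with $C$ turns $M\setminus A$ into $C\setminus(A\cap C)$, so the induction hypothesis applied to $\psi_1$ and $\psi_2$ gives the result directly.

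\emph{Application, $n\ge1$.} Fix $(u,i)\in N$. By \Cref{def:N}, $(u,i)\in\sigma_N(\ii{\nu}{\psi_1}_N,\ldots)$ iff there are $a_k$ with $(a_k,i)\in\ii{\nu}{\psi_k}_N$ and $u\in\sigma_M(a_1,\ldots,a_n)$. Mixed tuples contribute nothing, and all $a_k\in C$. By the induction hypothesis this says $a_k\in\ii{\pi_i(\nu)}{\psi_k}_M\cap C$. The identity $\sigma_M(A_1,\ldots,A_n)\cap C=\sigma_M(A_1\cap C,\ldots,A_n\cap C)\cap C$ from the proof of \Cref{lem:locality} holds because $C$ is backward closed. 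By that identity, the condition is equivalent to $u\in\ii{\pi_i(\nu)}{\sigma(\vec\psi)}_M\cap C$.

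\emph{Existential.} This is the step I expect to need care. We have $(u,i)\in\ii{\nu}{\exists x.\psi}_N$ iff there is $(b,j)\in N$ with $(u,i)\in\ii{\nu[(b,j)/x]}{\psi}_N$. By the induction hypothesis this holds iff $u\in\ii{\pi_i(\nu[(b,j)/x])}{\psi}_M$. We analyse the two cases for $j$.

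If $j=i$, then $\pi_i(\nu[(b,j)/x])=\pi_i(\nu)[b/x]$ with $b\in C$.

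If $j\ne i$, then $\pi_i(\nu[(b,j)/x])=\pi_i(\nu)[\out/x]$.

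So the $N$-side union ranges over witnesses $a\in C\cup\{\out\}$. The $M$-side union $\bigcup_{a\in M}\ii{\pi_i(\nu)[a/x]}{\psi}_M\cap C$ ranges over all $a\in M$. Every $a\in M\setminus C$ can be replaced by $\out$ without changing the intersection with $C$, by \Cref{lem:locality} applied to $\pi_i(\nu)[a/x]$ and $\pi_i(\nu)[\out/x]$. These two valuations agree off $x$ and both send $x$ outside $C$.

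For the converse direction we need a witness in the other copy to exist, that is, $C\ne\varnothing$. This is assumed, and it makes $\out$ actually attainable from copy $i$. The hypothesis $C\ne M$ is used only to guarantee that $\out$ exists. Hence the two unions coincide, which completes the induction.
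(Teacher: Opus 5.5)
Your proof is correct and follows the paper's argument essentially step for step. Both are a structural induction comparing copy-$i$ parts, with backward closure handling the symbol case. In the existential case, both reduce the witness range to $C\cup\{\out\}$, apply \Cref{lem:locality}, and use $C\ne\varnothing$ to make $\out$ attainable from the other copy.
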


\begin{proof}
Proceed by induction on $\psi$. Fix $i$ and compare the copy-$i$ parts of the
two sides. \emph{Variable.} The copy-$i$ part of $\ii{\nu}{x}_N$ is
$\{(a,i)\}$ if
$\nu(x)=(a,i)$ and $\varnothing$ otherwise, matching
$\ii{\pi_i(\nu)}{x}_M\cap C$, since in the second case
$\pi_i(\nu(x))=\out\notin C$.
\emph{Constant.} The set $\sigma_N$ meets copy $i$ in
$(\sigma_M\cap C)\times\{i\}$.
$\ii{\nu}{\bot}_N=\varnothing$ on both sides.
\emph{Implication.} Complementation in $N$ restricted to copy $i$ is
complementation in $C$, which is the effect of intersecting the $M$-clause with
$C$.
\emph{Symbol, $n\ge 1$.} A value meets copy $i$ only via unmixed tuples from
copy $i$, giving $\bigl(\sigma_M(\bar a)\cap C\bigr)\times\{i\}$ for
$\bar a\in C^n$. Conversely, for $u\in C$, every tuple producing $u$ lies in
$C$ by backward closure. The induction hypothesis converts the two sides.
\emph{Existential.} The union over $(b,j)\in N$ on the left corresponds, by the
induction hypothesis and $\pi_i(\nu[(b,j)/x])=\pi_i(\nu)[\pi_i(b,j)/x]$, to a
union over $e\in C\cup\{\out\}$ on the right.
\Cref{lem:locality} shows that this equals the union over all $e\in M$. For
$e\notin C$ the valuations $\pi_i(\nu)[e/x]$ and $\pi_i(\nu)[\out/x]$ agree
on $C$. A witness realizing $\out$ exists because the other copy is nonempty,
$C$ being so.
\end{proof}

\begin{corollary}\label{cor:NM}
For every closed pattern $\psi$,
$\den{\psi}_N=\bigl(\den{\psi}_M\cap C\bigr)\times\{0,1\}$. Consequently,
$N\vDash\psi$ iff $C\subseteq\den{\psi}_M$.
\end{corollary}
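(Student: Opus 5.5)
The plan is to specialize \Cref{lem:twocopy} to closed patterns and then read off totality. For closed $\psi$, the denotation in $M$ does not depend on the valuation, so for every $\nu:\Var\to N$ both $\ii{\pi_0(\nu)}{\psi}_M$ and $\ii{\pi_1(\nu)}{\psi}_M$ equal $\den{\psi}_M$. Likewise $\ii{\nu}{\psi}_N$ does not depend on $\nu$ and is $\den{\psi}_N$. Any valuation $\nu$ exists, since $N$ is nonempty, and substituting it into the identity of \Cref{lem:twocopy} gives
\[
  \den{\psi}_N=\bigl(\den{\psi}_M\cap C\bigr)\times\{0\}\ \cup\ \bigl(\den{\psi}_M\cap C\bigr)\times\{1\}
  =\bigl(\den{\psi}_M\cap C\bigr)\times\{0,1\}.
\]

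For the consequence, I would use the observation from \Cref{sec:system} that for closed $\psi$ we have $N\vDash\psi$ iff $\den{\psi}_N=N=C\times\{0,1\}$. By the first part, this holds iff $(\den{\psi}_M\cap C)\times\{0,1\}=C\times\{0,1\}$. Since the factor $\{0,1\}$ is nonempty, projecting onto the first coordinate reduces this to $\den{\psi}_M\cap C=C$, which is exactly $C\subseteq\den{\psi}_M$.

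No step here is a real obstacle, because all the work is already in \Cref{lem:twocopy}. The one point to check is the valuation-independence of closed patterns, which follows from a routine induction showing that denotations depend only on the values of free variables. That induction is implicit in the paper's notation $\den{\cdot}$, and I would simply invoke it.
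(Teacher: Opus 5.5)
Your proposal is correct and matches the paper's proof: valuation-independence of closed patterns collapses both copies in \Cref{lem:twocopy} to $\den{\psi}_M\cap C$, and totality in $N=C\times\{0,1\}$ then reduces to $\den{\psi}_M\cap C=C$. Your extra remarks, that some valuation into $N$ exists and that the factor $\{0,1\}$ can be cancelled because it is nonempty, are harmless details that the paper leaves implicit.
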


\begin{proof}
For closed $\psi$, $\ii{\pi_i(\nu)}{\psi}_M$ is $\den{\psi}_M$ independently
of $\nu$ and $i$, so \Cref{lem:twocopy} gives the first identity. For the
second, $N=C\times\{0,1\}$, so $\den{\psi}_N$ is all of $N$ exactly
when $\den{\psi}_M\cap C=C$.
\end{proof}

\begin{theorem}[semantic localization]\label{thm:B}
$\Gamma\vDash\varphi \iff \Delta_\Gamma\lloc\varphi$.
\end{theorem}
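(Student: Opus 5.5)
The backward direction is \Cref{lem:Beasy}, so only $\Gamma\vDash\varphi\Rightarrow\Delta_\Gamma\lloc\varphi$ needs an argument. I will prove it by contraposition. Suppose $\Delta_\Gamma\not\lloc\varphi$. Since $\Gamma$ and $\varphi$ are closed, there is a model $M$ and a point $w\in\den{\Delta_\Gamma}_M\setminus\den{\varphi}_M$. Put $C:=\ims{w}$. This set is backward closed, it contains $w$, and it is nonempty. By \Cref{lem:whatDelta}, $w\in\den{\Delta_\Gamma}_M$ gives $C\subseteq\den{\Gamma}_M$. The goal is a model of $\Gamma$ that refutes $\varphi$.

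I split into two cases according to whether $C=M$.

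If $C=M$, then $\den{\Gamma}_M=M$, so $M\vDash\Gamma$. Since $w\notin\den{\varphi}_M$, we have $M\not\vDash\varphi$, and $M$ itself is the countermodel.

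If $C\ne M$, the hypotheses of \Cref{def:N} hold, and I form the double cover $N=C\times\{0,1\}$. By \Cref{cor:NM}, for each $\gamma\in\Gamma$ we get $N\vDash\gamma$ iff $C\subseteq\den{\gamma}_M$, and this holds because $C\subseteq\den{\Gamma}_M$. Hence $N\vDash\Gamma$. For the conclusion, \Cref{cor:NM} gives $N\vDash\varphi$ iff $C\subseteq\den{\varphi}_M$. This fails because $w\in C\setminus\den{\varphi}_M$, so $N\not\vDash\varphi$. Either way $\Gamma\not\vDash\varphi$, which is the contrapositive.

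The substantive work, namely that $N$ is invisible outside $C$ apart from one phantom, is already in \Cref{lem:locality,lem:twocopy}. So the only delicate points are bookkeeping ones. The first is the case split forced by the assumption $C\ne M$ in \Cref{def:N}. The second is the use of closedness of $\Gamma$ and $\varphi$, which lets \Cref{cor:NM} apply and lets local refutation be witnessed by a single point independent of the valuation. I expect the $C=M$ case to be the step most easily overlooked, and I will handle it directly as above.
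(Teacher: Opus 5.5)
Your proposal is correct and matches the paper's proof step for step. The paper likewise argues by contraposition from a point $w\in\den{\Delta_\Gamma}_M\setminus\den{\varphi}_M$, takes the core $C=\ims{w}\subseteq\den{\Gamma}_M$ via \Cref{lem:whatDelta}, disposes of the case $C=M$ directly, and otherwise applies \Cref{cor:NM} to the closed $\Gamma$ and $\varphi$ in the double cover $N$.
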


\begin{proof}
($\Leftarrow$) is \Cref{lem:Beasy}. ($\Rightarrow$) Contrapositively, suppose
$\Delta_\Gamma\not\lloc\varphi$. As $\Delta_\Gamma$ and $\varphi$ are closed,
this gives a model $M$ and a point $w\in M$ with $w\in\den{\Delta_\Gamma}_M$ and
$w\notin\den{\varphi}_M$. Let $C:=\ims{w}$, which is backward closed, contains
$w$, and satisfies $C\subseteq\den{\Gamma}_M$ by \Cref{lem:whatDelta}. If
$C=M$ then $M\vDash\Gamma$ and $M\nvDash\varphi$. Otherwise $\varnothing\ne
C\ne M$, so $N$ is defined, and \Cref{cor:NM} applied to $\Gamma$ and
$\varphi$, both closed, gives $N\vDash\Gamma$ from
$C\subseteq\den{\Gamma}_M$, and $N\nvDash\varphi$ from
$w\in C\setminus\den{\varphi}_M$. Either way $\Gamma\nvDash\varphi$.
\end{proof}

\begin{theorem}[proof-theoretic localization]\label{thm:A}
$\Gamma\vdash\varphi \iff \Delta_\Gamma\lloc\varphi$.
\end{theorem}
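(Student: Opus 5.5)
The plan is to prove the two directions separately. The ($\Leftarrow$) direction uses the oracle (L) together with \Cref{lem:GN}. The ($\Rightarrow$) direction is the chain \Cref{thm:B} after (S).

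For ($\Leftarrow$), assume $\Delta_\Gamma\lloc\varphi$. By (L) there is a finite $\Delta_0\subseteq\Delta_\Gamma$ with $\vdash(\bigwedge\Delta_0)\to\varphi$. Each member of $\Delta_0$ has the form $\bx{p}\gamma$ with $\gamma\in\Gamma$ and $p\in E^*$. Since $\Gamma\vdash\gamma$ trivially, \Cref{lem:GN} gives $\Gamma\vdash\bx{p}\gamma$. Finitely many applications of the tautology $\psi_1\to(\psi_2\to\psi_1\wedge\psi_2)$ with modus ponens then give $\Gamma\vdash\bigwedge\Delta_0$. If $\Delta_0$ is empty, we use $\vdash\top$ instead. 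One more modus ponens yields $\Gamma\vdash\varphi$. The derivation from (L) is over the empty theory, so it remains a derivation from $\Gamma$.

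For ($\Rightarrow$), assume $\Gamma\vdash\varphi$. Soundness (S) gives $\Gamma\vDash\varphi$, and \Cref{thm:B} converts this to $\Delta_\Gamma\lloc\varphi$. The standing convention that $\Gamma$ and $\varphi$ are closed is needed here, since \Cref{thm:B} is stated under it. It is harmless because closing changes neither side of the statement, as recorded in \Cref{sec:system}.

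Neither step requires a new argument. All the substance sits in the black boxes (L) and (S) and in the double-cover construction behind \Cref{thm:B}. The only point to watch is the gap between $\vdash_{\mathrm{loc}}$ and $\vdash$ noted in \Cref{sec:system}. The argument passes only in the safe direction, from local derivability of $\Delta_\Gamma\to\varphi$ to $\Gamma\vdash\varphi$. It never needs a deduction theorem, because the reverse passage is done semantically through (S) and \Cref{thm:B}.
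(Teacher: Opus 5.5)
Your proposal is correct and matches the paper's proof: for ($\Leftarrow$) you use (L) to obtain a finite $\Delta_0\subseteq\Delta_\Gamma$, derive each member from $\Gamma$ by \Cref{lem:GN}, and finish with conjunction introduction and modus ponens, and for ($\Rightarrow$) you compose (S) with \Cref{thm:B}. Your treatment of the empty $\Delta_0$ case and of the closedness convention is also correct, and these are points the paper leaves implicit.
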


\begin{proof}
($\Leftarrow$) By~(L) there is a finite $\Delta_0\subseteq\Delta_\Gamma$ with
$\vdash\bigl(\bigwedge\Delta_0\bigr)\to\varphi$. Each $\delta\in\Delta_0$ is
derivable from $\Gamma$, so finitely many conjunction-introductions and one
modus ponens give $\Gamma\vdash\varphi$.

($\Rightarrow$) By~(S), $\Gamma\vdash\varphi$ gives $\Gamma\vDash\varphi$, and
\Cref{thm:B} turns that into $\Delta_\Gamma\lloc\varphi$.
\end{proof}

%% ================================================================
\section{The composite}\label{sec:composite}

\begin{corollary}[global completeness]\label{cor:main}
$\Gamma\vDash\varphi \iff \Gamma\vdash\varphi$.
\end{corollary}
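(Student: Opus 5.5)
The corollary follows by chaining the two localization theorems through their common middle term $\Delta_\Gamma\lloc\varphi$. First I will reduce to closed $\Gamma$ and $\varphi$, as justified in \Cref{sec:system}. Replacing each member of $\Gamma$ and $\varphi$ by its universal closure changes neither $\Gamma\vDash\varphi$ nor $\Gamma\vdash\varphi$. The semantic side is unchanged because $M\vDash\psi$ iff $M\vDash\widehat\psi$. The syntactic side is unchanged because of universal generalization and instantiation.

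For closed patterns, \Cref{thm:B} gives $\Gamma\vDash\varphi\iff\Delta_\Gamma\lloc\varphi$, and \Cref{thm:A} gives $\Gamma\vdash\varphi\iff\Delta_\Gamma\lloc\varphi$. Composing the two equivalences yields $\Gamma\vDash\varphi\iff\Gamma\vdash\varphi$.

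The easy direction $\Gamma\vdash\varphi\Rightarrow\Gamma\vDash\varphi$ is already soundness~(S), so it does not need the localization. The content of the corollary is the other direction. It passes through \Cref{thm:B} ($\Rightarrow$), which is the double-cover argument, and then through \Cref{thm:A} ($\Leftarrow$), which uses the oracle~(L) together with \Cref{lem:GN}.

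There is no real obstacle left at this stage, since all the work sits in \Cref{thm:A,thm:B}. The one point I will check explicitly is that the hypotheses of those theorems hold. They assume $\Gamma$ and $\varphi$ closed, which the reduction above arranges. They also use (L) for arbitrary finitary $\Sigma$, and the compactness remark in \Cref{sec:system} supplies this. Beyond that check, the proof is just the composition of the two equivalences.
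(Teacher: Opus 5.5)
Your proposal is correct and is essentially the paper's proof: compose \Cref{thm:B} and \Cref{thm:A} through $\Delta_\Gamma\lloc\varphi$, with the reduction to closed $\Gamma$ and $\varphi$ supplied by the standing convention of \Cref{sec:system}. You also correctly locate the nontrivial direction: it runs through \Cref{thm:B}($\Rightarrow$), the double cover, and then \Cref{thm:A}($\Leftarrow$), which uses~(L) together with \Cref{lem:GN}.
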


\begin{proof}
The result follows by composing \Cref{thm:B,thm:A} through
$\Delta_\Gamma\lloc\varphi$.
\end{proof}

\begin{corollary}[conservativity of definedness]\label{cor:cons}
If $\Gamma$ and $\varphi$ are definedness-free, then
$\Gamma\cup\{\lceil x\rceil\}\vdash\varphi$ implies $\Gamma\vdash\varphi$.
\end{corollary}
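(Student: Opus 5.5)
The plan is a semantic detour through global completeness. Definedness is an ordinary unary symbol $\lceil\cdot\rceil$ added to $\Sigma$, together with the single hypothesis $\lceil x\rceil$. So the hypothesis $\Gamma\cup\{\lceil x\rceil\}\vdash\varphi$ is a derivation in the calculus of \Cref{fig:system} over the enlarged signature $\Sigma':=\Sigma\cup\{\lceil\cdot\rceil\}$, which is again one-sorted and finitary. I would first apply soundness~(S) over $\Sigma'$ to get $\Gamma\cup\{\lceil x\rceil\}\vDash\varphi$ with respect to $\Sigma'$-models. Then I would push this down to $\Gamma\vDash\varphi$ over $\Sigma$, and finally invoke \Cref{cor:main} over $\Sigma$ to obtain $\Gamma\vdash\varphi$.

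For the middle step, take any $\Sigma$-model $M$ with $M\vDash\Gamma$. Expand it to a $\Sigma'$-model $M'$ on the same carrier by setting $\lceil\cdot\rceil_{M'}(a):=M$ for every $a\in M$. The denotation of a definedness-free pattern is computed by the same clauses in $M$ and $M'$, so a routine induction gives $\ii{\rho}{\psi}_{M'}=\ii{\rho}{\psi}_M$ for every such $\psi$ and every $\rho$. In particular $M'\vDash\Gamma$.

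Moreover $\ii{\rho}{\lceil x\rceil}_{M'}=\lceil\cdot\rceil_{M'}(\{\rho(x)\})=M$ for every $\rho$. Hence the open hypothesis $\lceil x\rceil$ is total in $M'$, and this is exactly what $M'\vDash\lceil x\rceil$ requires under \Cref{def:total}. Consequently $M'\vDash\varphi$, and since $\varphi$ is definedness-free, $M\vDash\varphi$. This gives $\Gamma\vDash\varphi$ over $\Sigma$. As noted in \Cref{sec:system}, we may take $\Gamma$ and $\varphi$ closed without affecting either relation, and \Cref{cor:main} then applies directly.

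The only point I expect to need care is the bookkeeping of signatures. First, (S) must be available for $\Sigma'$. It is, since (S) is stated for arbitrary signatures and $\lceil\cdot\rceil$ is just another symbol of arity~$1$. Second, the global consequence delivered by soundness must be over $\Sigma'$-models, while \Cref{cor:main} is applied over $\Sigma$. The expansion argument above is exactly what bridges the two, and it uses only that every $\Sigma$-model expands to a $\Sigma'$-model satisfying $\lceil x\rceil$ without changing the denotations of $\Sigma$-patterns. No proof-theoretic elimination of $\lceil\cdot\rceil$ from derivations is needed; the whole content of the argument lies in \Cref{cor:main}.
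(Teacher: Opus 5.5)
Your proposal is correct and is essentially the paper's own proof. Both apply soundness over the signature enlarged by $\lceil\cdot\rceil$, expand each $\Sigma$-model of $\Gamma$ by interpreting $\lceil\cdot\rceil$ as the constant map with value $M$ (which leaves definedness-free denotations unchanged and makes $\lceil x\rceil$ total), and then apply \Cref{cor:main} over $\Sigma$; your bookkeeping about the two signatures and the open hypothesis $\lceil x\rceil$ only spells out what the paper's two-line proof leaves implicit.
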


\begin{proof}
Soundness in the enriched signature gives
$\Gamma\cup\{\lceil x\rceil\}\vDash\varphi$ (\Cref{fig:system} is sound for
every signature). Every definedness-free
$M\vDash\Gamma$ expands to a model of $\Gamma\cup\{\lceil x\rceil\}$, so
$M\vDash\varphi$. Hence $\Gamma\vDash\varphi$, and \Cref{cor:main} applies.
\end{proof}

\begin{remark}[free set variables do not come along]\label{rem:setvars}
One might expect \Cref{cor:main} to extend to patterns with free set variables,
by reading such a variable as a constant. This extension fails under the consequence
relation of \Cref{def:total}. Satisfaction quantifies over all valuations, so a
free set variable is quantified \emph{inside} the premise, whereas a constant
is fixed by the model. Thus the two are not interchangeable. Take $\Gamma=\{X\}$
and $\varphi=\bot$. No model satisfies $X$, since the valuation sending $X$ to
$\varnothing$ makes it not total, so $\Gamma\vDash\bot$ holds vacuously.
Replacing $X$ by a fresh constant $d$ gives $\{d\}\nvDash\bot$, as the model
with $d_M=M$ shows. One direction remains valid under the same reading.
$\Gamma[\bar d/\bar X]\vDash\varphi[\bar d/\bar X]$ implies
$\Gamma\vDash\varphi$, and $\Gamma[\bar d/\bar X]\vdash\varphi[\bar d/\bar X]$
implies $\Gamma\vdash\varphi$ by uniform substitution. The converse direction,
which would transfer \Cref{cor:main}, fails.
\end{remark}

%% ================================================================
\section{Applicative matching logic}\label{sec:aml}

Applicative matching logic \cite{ChenRosu2019AML,ChenRosu2020AML} is the
instance in which $\Sigma$ consists of one binary symbol, \emph{application},
written by juxtaposition, together with any set of constants. It has two
coordinates, which we write $1$ and $2$, and the boxes of \Cref{def:boxes}
become
\[
  \bx{1}\psi = \bigl((\psi\to\bot)\top\bigr)\to\bot,
  \qquad
  \bx{2}\psi = \bigl(\top(\psi\to\bot)\bigr)\to\bot ,
\]
with $a\app b\ldi{1}a$ and $a\app b\ldi{2}b$, and $E^*=\{1,2\}^*$. Everything
above applies unchanged, so \Cref{cor:main} holds for applicative matching
logic. This fragment carries the encodings of sorts, binders and type
systems \cite{ChenRosu2020AML,ChenRosu2020ICFP}.

\begin{remark}[currying is not known to be conservative]\label{rem:curry}
It is tempting to prove \Cref{cor:main} for applicative matching logic alone
and recover the general case by currying,
$\sigma(\varphi_1,\ldots,\varphi_n)^\dagger
 :=(\cdots(c_\sigma\varphi_1^\dagger)\cdots)\varphi_n^\dagger$. One direction
follows directly. Any applicative model of $\Gamma^\dagger$ becomes a model of
$\Gamma$ on the same carrier by reading $\sigma_M$ off the currying chain,
whence $\Gamma\vDash\varphi$ implies
$\Gamma^\dagger\vDash\varphi^\dagger$. The converse transfer of
\emph{derivability} remains unproved. An applicative derivation of
$\varphi^\dagger$ may pass through patterns with no polyadic counterpart,
such as $c_\sigma$ alone or a partial application
$c_\sigma\varphi_1$. Recovering a polyadic derivation therefore requires a
conservativity lemma. Substitution alone does not suffice. The semantic route
is also blocked. Turning a
polyadic countermodel into an applicative one requires manufacturing partial
applications, and the enlarged carrier is visible to the element quantifier,
since $\exists x$ ranges over the added elements and a $\Gamma$ constraining
cardinality distinguishes the two models. Relativizing the quantifier to an
inhabitant predicate repairs this problem, but requires definedness.
Completeness with definedness is already available
\cite[Thm.~15]{ChenRosu2019}. Proving currying conservative in the
definedness-free setting would be a result of independent interest. The present
proof does not rely on such a result.
\end{remark}

%% ================================================================
\section{Adding fixpoints: no effective complete system}\label{sec:mu}

\emph{Matching $\mu$-logic} \cite{ChenRosu2019} extends the patterns of
\Cref{sec:system} with set variables $X\in\SVar$, which a valuation sends to
arbitrary subsets of $M$, and with least fixpoints $\mu X.\,\varphi$ in which
$X$ occurs only positively:
\[
  \ii{\rho}{\mu X.\,\varphi}
  =\bigcap\bigl\{A\subseteq M \;\bigm|\; \ii{\rho[A/X]}{\varphi}\subseteq A\bigr\},
\]
which exists by positivity, with
$\nu X.\,\varphi:=\neg\mu X.\,\neg\varphi[\neg X/X]$. Note that
$\bot=\mu X.X$ and $\top=\nu X.X$, so over a signature the primitives reduce to
$\to$, $\exists$ and $\mu$.

\begin{theorem}\label{thm:mu}
Let $\Sigma=\{\mathsf{s},\oplus,\otimes\}$ with $\mathsf{s}$ unary and
$\oplus,\otimes$ binary, and no constants. The set of valid patterns of
one-sorted definedness-free matching $\mu$-logic over $\Sigma$ is not
recursively enumerable.
\end{theorem}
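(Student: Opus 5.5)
I will prove non-r.e.\ by reducing a $\Pi^1_1$-hard or co-r.e.-hard problem, most simply the complement of an undecidable r.e.\ set, or better the truth of $\Pi^0_2$ or $\Pi^1_1$ statements of arithmetic, to validity. The cleanest target is first-order truth in the standard model $(\mathbb{N},+,\times)$, which is not even arithmetical, hence certainly not r.e. The plan is to write a single closed pattern $\mathsf{Nat}$ such that every model in which $\mathsf{Nat}$ holds at some point contains, reachably from that point or globally, a copy of $\mathbb{N}$ with $\mathsf{s},\oplus,\otimes$ behaving as successor, addition and multiplication as functions. Then for each arithmetic sentence $\theta$ I will define a pattern $\theta^\sharp$, with $\vDash \mathsf{Nat}\to\theta^\sharp$ iff $\mathbb{N}\models\theta$. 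The map $\theta\mapsto\mathsf{Nat}\to\theta^\sharp$ is computable, so an r.e.\ set of validities would make arithmetic truth r.e., which is a contradiction.

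Since there are no constants and no definedness, the encoding must avoid a zero symbol and avoid the universal modality. I will let the numbers be the points of the model. Zero is a point with no $\mathsf{s}$-predecessor. The natural-number predicate is $\mathbb{N}\!:=\mu X.\,(z\vee \mathsf{s}(X))$, where $z$ is a bound element variable chosen with $\exists z$ and tested against a characterizing condition. Fixpoints give exactly the induction that first-order logic lacks. The relation ``$x\oplus y$ is the singleton $\{w\}$'' is expressible locally as $\forall w'.\,(w'\wedge x\oplus y)\to w'$ combined with nonemptiness at a point. I will impose the recursion equations $x\oplus z=x$, $x\oplus\mathsf{s}(y)=\mathsf{s}(x\oplus y)$, and similarly for $\otimes$, together with functionality and injectivity of $\mathsf{s}$ and $z\notin\mathsf{s}(\top)$. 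Because truth is only required at a point, I will relativize every quantifier to the $\mu$-defined set $\mathbb{N}$. I will also state each axiom as a pointwise pattern evaluated with respect to element variables, using the singleton-variable test $x\wedge\psi$ in place of membership. Then $\theta^\sharp$ is the translation with $\exists v$ replaced by $\exists v.\,(v\wedge\mathbb{N})\wedge\cdots$. Equalities are rendered as $t_1\wedge t_2$ nonempty, which must again be internalized without $\lceil\cdot\rceil$. For this I will put the whole statement inside an application context that collapses nonemptiness to the current point. Concretely, I will use the trick that $\exists y.\,(y\wedge \psi)$ evaluated with the current point bound to a variable lets me check membership $x\in\psi$ as ``$x\to\psi$ holds at $x$''.

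The main obstacle is precisely this absence of definedness. Validity means totality at every point, and the language sees only the current point and what lies below it. I therefore need to express ``the pattern $\psi$ is nonempty'' and ``$x\in\psi$'' for points not at the current location. My first attempt uses the binary symbols themselves as a surrogate reach modality. The premise will say, at the evaluation point $w$, that $w$ is related via a fixed coordinate of $\oplus$, through a $\nu$-defined box $\nu X.\,(\psi\wedge\bx{(\oplus,1)}X)$, to all relevant points. The master box then acts as a universal modality on the reachable part, as in \Cref{lem:whatDelta}, with the $\mu$ operator making it finitely expressible. Quantifiers over $\mathbb{N}$ then become checks within the reachable part. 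I will verify, by a standard-model argument, that in any such model the $\mu$-defined $\mathbb{N}$ is isomorphic to the standard naturals, which is what forces non-r.e.\ness. Conversely, the standard model, expanded suitably so that the evaluation point reaches all numbers, witnesses the other direction. If the surrogate universal modality fails to fit, a fallback is to reduce instead from the non-r.e.\ complement of the halting problem. There the conditions needed are only existential witnesses along a $\mu$-defined computation path, which demands less global control.
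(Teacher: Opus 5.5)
As it stands, your proposal is a plan rather than a proof: it stops exactly where the difficulty begins. Targeting full truth in $(\mathbb{N},+,\times)$ obliges you to make $\mathsf{s},\oplus,\otimes$ total and single-valued on the number part, and to evaluate negations and alternating quantifiers correctly. These are non-local facts. The value of $x\oplus y$ sits \emph{above} $x$ and $y$, while symbols let a pattern at $u$ inspect only the arguments below $u$. None of your devices bridges this gap:
\begin{itemize}
\item With its natural scope, $\forall w'.\,((w'\wedge x\oplus y)\to w')$ is valid. At $u$ it says that $u=a$ and $u\in x\oplus y$ imply $u=a$, so it expresses nothing.
\item $\exists y.\,(y\wedge\psi)$ with $y$ not free in $\psi$ is equivalent to $\psi$. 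Reading ``$x\to\psi$ at $x$'' presupposes moving the evaluation point to $\rho(x)$, which is exactly the missing $@$.
\item An application context $C[\psi]$ at $u$ looks only below $u$, so it cannot turn nonemptiness of $\psi$ into a property of the current point.
\item A box such as $\nu X.\,(\psi\wedge\bx{(\oplus,1)}X)$ only restricts the points that happen to be reachable. Nothing in your plan forces the numerals, or the values of $\mathsf{s}(x)$, $x\oplus y$, $x\otimes y$, to be among them.
\end{itemize}
So ``quantifiers over $\mathbb{N}$ become checks within the reachable part'' is unsupported, and neither direction of $\vDash\mathsf{Nat}\to\theta^\sharp\iff\mathbb{N}\vDash\theta$ is established. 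The halting-problem fallback is not developed, and a standardness device would be needed there too.

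Your route can be completed, but it needs two ingredients you do not state.
\begin{itemize}
\item The diamond dual to your box, taken over \emph{all} coordinates: $\mathrm{Reach}(\psi):=\mu Y.\,(\psi\vee\bigvee_{e\in E}\dm{e}Y)$. It holds at $w$ iff some point of $\ims{w}$ lies in $\psi$. Then $\mathrm{Reach}(x\wedge\psi)$ tests $x\in\psi$ for reachable $x$, and $\forall x.\,(\mathrm{Reach}(x)\to\cdots)$ relativizes quantifiers.
\item Explicit closure clauses, with $N:=\mu X.\,(z\vee\mathsf{s}(X))$. For example $\forall x.\,(\mathrm{Reach}(x\wedge N)\to\exists y.\,\mathrm{Reach}(y\wedge\mathsf{s}(x)))$, together with $\mathrm{Reach}(z)$, injectivity and $z\notin\mathsf{s}(\top)$ at reachable points. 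You also need totality, single-valuedness and recursion clauses for $\oplus,\otimes$, all stated through $\mathrm{Reach}$.
\end{itemize}
Predecessors of reachable points are reachable, so $N\cap\ims{w}$ is generated from $z$ inside $\ims{w}$ and is standard. For the converse, the standard model at the point $0$ refutes a false $\theta$, because $0\in 0\otimes_M b$ makes every $b$ reachable from $0$. This would even give more than the theorem, namely that validity is not arithmetical.

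The paper avoids all of this by aiming lower. Unsolvability of Diophantine equations is already $\Pi^0_1$-complete (MRDP). Its existential-positive form $\Phi_{p,q}=\neg\exists\bar y.\,(p^G\wedge q^G)$ means only the witnessing point $u$ matters, and every node value is an argument below $u$. Conjoining the least fixpoint $G$ of \Cref{def:G} at every node forces each such value to have the right depth (\Cref{lem:mustandard}). The occurrences of $X$ in (L4) and (L6) keep the intermediate values of the recursion equations inside $G$. No totality or single-valuedness of $\oplus,\otimes$ and no surrogate universal modality is ever needed. The standard model on $\mathbb{N}$ supplies the converse.
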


\begin{corollary}\label{cor:mu}
That fragment has no sound and weakly complete calculus whose proof relation is
recursively enumerable, since enumerating proofs would enumerate the valid
patterns. Therefore no ordinary effectively checkable finite-proof calculus is
even weakly complete for this fragment, and hence none is locally or globally
complete. In particular, $\vDash\varphi\Rightarrow\vdash\varphi$ fails for the
empty theory, as does the analogue of~(L). Thus the completeness premise used
by \Cref{thm:A} is unavailable, and the effective global-completeness question
has a negative answer for this fragment.
\end{corollary}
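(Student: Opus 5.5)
The corollary follows from \Cref{thm:mu} by a standard effectivity argument, which I would carry out in three short steps.

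\textbf{Step 1: fix the notion of effective calculus.} A calculus is specified by a set of proofs, which are finite objects, and a proof relation ``$\pi$ is a proof of $\varphi$''. I take this relation to be recursively enumerable over an effective coding of patterns over the finite signature $\Sigma=\{\mathsf{s},\oplus,\otimes\}$ with countably many element and set variables. Every ordinary finite-proof calculus with decidable axiom-instance and rule-application checks is of this kind, and so is the system of \Cref{fig:system} extended by any effectively given fixpoint rules.

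\textbf{Step 2: derive the contradiction.} Suppose such a calculus were sound and weakly complete, so that $\vdash\varphi$ iff $\vDash\varphi$ for patterns over $\Sigma$. Dovetailing the enumeration of the proof relation over all pairs $(\pi,\varphi)$ and outputting $\varphi$ whenever a pair appears enumerates exactly the theorems. By soundness and weak completeness these are exactly the valid patterns. So validity would be recursively enumerable, contradicting \Cref{thm:mu}.

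\textbf{Step 3: transfer to the stronger notions.} As noted in \Cref{sec:system}, both local and global completeness specialize to weak completeness.
\begin{itemize}[leftmargin=1.4em,itemsep=.2em]
\item Taking $\Gamma=\varnothing$ in global completeness gives $\vDash\varphi\Rightarrow\vdash\varphi$ directly.
\item Taking $\Delta=\varnothing$ in the analogue of~(L) gives $\vdash\top\to\varphi$, hence $\vdash\varphi$ by modus ponens. This uses only that the calculus proves $\top$ and is closed under modus ponens, which I would state as a minimal assumption.
\end{itemize}
Hence neither local nor global completeness can hold for an effective sound calculus. In particular the analogue of~(L) is unavailable, so the ($\Leftarrow$) direction of \Cref{thm:A} has no input in the $\mu$ setting. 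This is the claimed negative answer to the effective global-completeness question.

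\textbf{Expected obstacle.} The only delicate point is precision about ``recursively enumerable proof relation''. A calculus with, say, an $\omega$-rule or a non-r.e.\ set of axioms escapes the argument, and the corollary makes no claim about such calculi. All the mathematical weight lies in \Cref{thm:mu} itself, which is assumed here.
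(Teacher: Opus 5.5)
Your proposal is correct and follows the paper's argument. \Cref{thm:mu} makes validity non-r.e., so enumerating the proofs of a sound, weakly complete calculus with r.e.\ proof relation would enumerate exactly the valid patterns, which is impossible; local and global completeness each specialize to weak completeness at $\Delta=\varnothing$ and $\Gamma=\varnothing$, as recorded in \Cref{sec:system}. Your explicit assumption of $\vdash\top$ and modus ponens for the (L) case is harmless, and it can even be dropped: enumerate the theorems of the syntactic form $\top\to\varphi$ and use soundness.
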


The construction occupies the rest of this section and gives a many-one
reduction from unsolvability of Diophantine equations over $\mathbb{N}$ to
validity. Write $\oplus,\otimes$ infix. Because the signature has no zero
constant, the free element variable $z$ serves as zero in the encoded
arithmetic. The fixpoint pattern $G$ in \Cref{def:G} marks the elements that
represent numerals and constrains $\mathsf{s},\oplus,\otimes$ to act as
successor, addition, and multiplication on them. In \Cref{def:relativize},
natural-coefficient polynomial terms $p,q$ are used to form a pattern
$\Phi_{p,q}$ that is valid exactly when the equation $p=q$ has no solution in
$\mathbb{N}$.

We leave $z$ free, so validity quantifies universally over its assignments.
This is the one place where we set aside the closure convention of
\Cref{sec:system}. Nothing changes if one prefers the closed
$\forall z.\,\Phi_{p,q}$, since validity already quantifies over every
assignment to $z$.

\subsection*{Comparing arguments at a point}
The absence of $@$ discussed in \Cref{sec:conclusions} prevents shifting the
evaluation point. The \emph{arguments} of a tuple producing the current point
can nevertheless be inspected and compared. In particular, $x\wedge y$ is
nonempty exactly when the two element variables coincide. This comparison
already yields injectivity.

\begin{lemma}\label{lem:muinj}
Let $\mathrm{Inj}:=\forall x\forall y.\,\bigl(\mathsf{s}(x)\wedge\mathsf{s}(y)
\to\mathsf{s}(x\wedge y)\bigr)$. Then $u\in\ii{\rho}{\mathrm{Inj}}$ iff at most
one $a$ satisfies $u\in\mathsf{s}_M(a)$.
\end{lemma}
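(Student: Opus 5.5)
The plan is to unfold the semantics directly. Write the universal quantifiers as intersections over valuations, so that $u\in\ii{\rho}{\mathrm{Inj}}$ iff for all $a,b\in M$ we have $u\in\ii{\rho[a/x,b/y]}{\mathsf{s}(x)\wedge\mathsf{s}(y)\to\mathsf{s}(x\wedge y)}$. Under that valuation, $\mathsf{s}(x)$ denotes $\mathsf{s}_M(a)$ and $\mathsf{s}(y)$ denotes $\mathsf{s}_M(b)$. The pattern $x\wedge y$ denotes $\{a\}$ when $a=b$ and $\varnothing$ otherwise. By the pointwise extension, $\mathsf{s}(x\wedge y)$ therefore denotes $\mathsf{s}_M(a)$ when $a=b$ and $\varnothing$ when $a\ne b$ (propagation of the empty set). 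Hence the condition at $u$ reads: for all $a,b$, if $u\in\mathsf{s}_M(a)\cap\mathsf{s}_M(b)$ then $a=b$ and $u\in\mathsf{s}_M(a)$.

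Both directions then follow. ($\Leftarrow$) Suppose at most one $a$ has $u\in\mathsf{s}_M(a)$. If $u\in\mathsf{s}_M(a)\cap\mathsf{s}_M(b)$, uniqueness forces $a=b$, and then $u\in\mathsf{s}_M(a)=\ii{}{\mathsf{s}(x\wedge y)}$. Otherwise the antecedent fails at $u$. In either case the implication holds at $u$ for every $a,b$. ($\Rightarrow$) Suppose instead $u\in\mathsf{s}_M(a)\cap\mathsf{s}_M(b)$ with $a\ne b$. Under $\rho[a/x,b/y]$ the antecedent contains $u$, but the consequent is $\mathsf{s}_M(\varnothing)=\varnothing$. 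So $u$ lies outside that instance, and hence outside the intersection.

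No step is a real obstacle. The one point needing care is the denotation of $x\wedge y$, namely $\{a\}\cap\{b\}$. It is also worth noting that, because of the quantifiers, the statement is independent of $\rho$.
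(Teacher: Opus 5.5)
Your proof is correct and follows the paper's argument essentially verbatim. You fix $a,b$ and note that the antecedent holds at $u$ iff $u\in\mathsf{s}_M(a)\cap\mathsf{s}_M(b)$, while the consequent holds iff $a=b$ and $u\in\mathsf{s}_M(a)$, because $x\wedge y$ denotes $\{a\}\cap\{b\}$ and symbols propagate $\varnothing$. The only blemish is the notational slip $\ii{}{\mathsf{s}(x\wedge y)}$, which should carry the valuation, as in $\ii{\rho[a/x,b/y]}{\mathsf{s}(x\wedge y)}$.
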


\begin{proof}
Fix $a,b$. The antecedent holds at $u$ iff $u\in\mathsf{s}_M(a)$ and
$u\in\mathsf{s}_M(b)$. The consequent holds at $u$ iff
$u\in\mathsf{s}_M(v)$ for some $v\in\{a\}\cap\{b\}$, i.e.\ iff $a=b$ and
$u\in\mathsf{s}_M(a)$. So the implication holds for all $a,b$ exactly when $u$
has at most one $\mathsf{s}$-argument.
\end{proof}

\subsection*{Pinning the numerals}
\begin{definition}\label{def:G}
Let $\mathrm{Loc}[X]$ be the conjunction of
\begin{enumerate}[leftmargin=2.9em,itemsep=.15em,label=\textup{(L\arabic*)}]
\item $\mathrm{Inj}$;
\item $\neg\bigl(z\wedge\mathsf{s}(\top)\bigr)$;
\item $\forall x.\,\bigl(x\oplus z\to x\bigr)$;
\item $\forall x\forall y.\,
      \bigl(x\oplus\mathsf{s}(y)\to\mathsf{s}((x\oplus y)\wedge X)\bigr)$;
\item $\forall x.\,\bigl(x\otimes z\to z\bigr)$;
\item $\forall x\forall y.\,
      \bigl(x\otimes\mathsf{s}(y)\to
        (z\wedge x)\vee(((x\otimes y)\wedge X)\oplus x)\bigr)$,
\end{enumerate}
and put $G:=\mu X.\bigl(\mathrm{Loc}[X]\wedge(z\vee\mathsf{s}(X))\bigr)$.
\end{definition}

$X$ occurs only in the consequents of (L4) and (L6) and in $\mathsf{s}(X)$,
always positively, so $G$ is well formed. At a point, (L2) says that if the
point is the zero, then it is not an $\mathsf{s}$-value. Clause (L3) says that
$u\in a\oplus\rho(z)$ implies $u=a$. Clause (L4) says that
$u\in a\oplus\mathsf{s}_M(b)$ implies
$u\in\mathsf{s}_M\bigl((a\oplus b)\cap\ii{\rho}{X}\bigr)$. Clauses (L5) and
(L6) have the analogous readings.

\begin{lemma}[standardness]\label{lem:mustandard}
Fix $M$ and $\rho$, and let $G_M:=\ii{\rho}{G}$. Then:
\begin{enumerate}[leftmargin=2em,itemsep=.15em,label=\textup{(\roman*)}]
\item every $u\in G_M\setminus\{\rho(z)\}$ has exactly one $a$ with
  $u\in\mathsf{s}_M(a)$, and $a\in G_M$; write $a=\mathrm{pred}(u)$;
\item if $G_M\ne\varnothing$ then $\rho(z)\in G_M$, it has no
  $\mathsf{s}$-argument, and iterating $\mathrm{pred}$ from every $u\in G_M$
  reaches $\rho(z)$ in finitely many steps, so $\dep\colon G_M\to\mathbb{N}$
  is well defined, with $\dep(\rho(z))=0$ and
  $\dep(u)=\dep(\mathrm{pred}(u))+1$ for $u\ne\rho(z)$;
\item if $a,b,u\in G_M$ and $u\in a\oplus b$ then $\dep(u)=\dep(a)+\dep(b)$;
\item if $a,b,u\in G_M$ and $u\in a\otimes b$ then
  $\dep(u)=\dep(a)\cdot\dep(b)$.
\end{enumerate}
\end{lemma}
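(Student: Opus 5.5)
The plan is to use the fixpoint equation $G_M=\ii{\rho[G_M/X]}{\mathrm{Loc}[X]}\cap(\{\rho(z)\}\cup\mathsf{s}_M(G_M))$ for the unfolding, together with the least-fixpoint property, which gives an induction principle for $G_M$. Write $F(A):=\ii{\rho[A/X]}{\mathrm{Loc}[X]}\cap(\{\rho(z)\}\cup\mathsf{s}_M(A))$. Then $G_M=F(G_M)$, and any $A$ with $F(A)\subseteq A$ contains $G_M$. Every $u\in G_M$ lies in $\ii{\rho[G_M/X]}{\mathrm{Loc}[X]}$, so each clause (L1)–(L6) holds at $u$ with $X$ read as $G_M$.

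\textbf{Item (i).} Take $u\in G_M$ with $u\ne\rho(z)$. Unfolding puts $u\in\mathsf{s}_M(G_M)$, so $u\in\mathsf{s}_M(a)$ for some $a\in G_M$. Clause (L1), via \Cref{lem:muinj}, makes this $a$ unique.

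\textbf{Item (ii).} Take $u\in G_M$. Then $u=\rho(z)$ or $u\in\mathsf{s}_M(G_M)$, so if $G_M\ne\varnothing$ I expect $\rho(z)\in G_M$. To prove this and well-foundedness together, I use the induction principle. Let $A$ be the set of $u\in G_M$ from which iterated $\mathrm{pred}$ reaches $\rho(z)$ in finitely many steps, with $\rho(z)\in G_M$. I show $F(A)\subseteq A$.
\begin{itemize}
\item If $u\in F(A)$ and $u=\rho(z)$, then $u\in F(A)\subseteq F(G_M)=G_M$ by monotonicity, so $\rho(z)\in G_M$ and $u\in A$.
\item If $u\in\mathsf{s}_M(a)$ with $a\in A$, then $u\in G_M$. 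By (i), $\mathrm{pred}(u)=a$ provided $u\ne\rho(z)$, and a chain from $a$ already reaches $\rho(z)$.
\end{itemize}
Leastness then gives $G_M\subseteq A$, so every element of $G_M$ reaches $\rho(z)$ and $\rho(z)\in G_M$ whenever $G_M\ne\varnothing$. Clause (L2), evaluated at $\rho(z)\in G_M$, says $\rho(z)\notin\mathsf{s}_M(\top)$, so $\rho(z)$ has no $\mathsf{s}$-argument. The chain from $u$ is then unique, because $\mathrm{pred}$ is a function and stops only at $\rho(z)$. This makes $\dep$ well defined and gives the recursion.

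\textbf{Item (iii).} I argue by induction on $\dep(b)$, for all $a,u$. If $\dep(b)=0$, then $b=\rho(z)$, and (L3) at $u$ gives $u=a$. Otherwise $b\in\mathsf{s}_M(b')$ with $b'=\mathrm{pred}(b)\in G_M$. Clause (L4) at $u$, instantiated at $x=a$, $y=b'$, gives $u\in\mathsf{s}_M(c)$ for some $c\in(a\oplus b')\cap G_M$. Now $u\ne\rho(z)$ by (ii), so (i) gives $c=\mathrm{pred}(u)$. Induction on $c\in a\oplus b'$ yields $\dep(c)=\dep(a)+\dep(b')$, and hence $\dep(u)=\dep(a)+\dep(b)$.

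\textbf{Item (iv).} The argument is the same induction. Clause (L5) handles $b=\rho(z)$, giving $u=\rho(z)$ and depth $0$. In the successor case, (L6) gives two options. Either $u=\rho(z)=a$, which has depth $0=0\cdot\dep(b)$. Or $u\in c\oplus a$ with $c\in(a\otimes b')\cap G_M$, and then (iii) together with the induction hypothesis gives $\dep(u)=\dep(a)\dep(b')+\dep(a)$.

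The main obstacle is item (ii), the well-foundedness. It needs the leastness of $\mu$ as an induction principle rather than just the unfolding. The care lies in choosing $A$ so that $F(A)\subseteq A$ holds while $\mathrm{Loc}$ is evaluated at $A$ rather than at $G_M$. This works because the defining condition of $A$ never inspects $\mathrm{Loc}$ and only uses $F(A)\subseteq G_M$, which follows from monotonicity.
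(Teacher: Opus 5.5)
Your proof is correct. Items (i), (iii) and (iv) follow the paper's argument essentially step for step; the genuine difference is in (ii).

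\textbf{The paper's route.} The paper writes $G_M$ as the union of the transfinite approximants $F^\alpha(\varnothing)$. It obtains $\rho(z)\in G_M$ separately, from $F(\varnothing)\subseteq\{\rho(z)\}\cup\mathsf{s}_M(\varnothing)=\{\rho(z)\}$. It then gets finiteness of the $\mathrm{pred}$-chains from the fact that $\mathrm{pred}(u)$ enters the approximation at a strictly earlier stage than $u$, so the descent is bounded by the well-foundedness of the ordinals.

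\textbf{Your route.} You use the defining clause of $\mu$ (the intersection of prefixpoints) directly, as Park induction. The invariant $A\subseteq G_M$ is a prefixpoint because monotonicity gives $F(A)\subseteq F(G_M)=G_M$. Hence (L1)--(L6), with $X$ read as $G_M$, and item (i) are available at every point of $F(A)$, and $\mathrm{Loc}$ evaluated at $A$ is never needed. This avoids approximants and limit stages entirely. It also yields $\rho(z)\in G_M$ and finiteness of every chain in one step, so relative to the semantics as the paper states it, yours is the more elementary route.

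\textbf{What the paper's route buys.} The approximant picture gives an explicit rank that decreases along $\mathrm{pred}$. It is also the tool the paper reuses in \Cref{lem:mureduction}, where it computes $F^{m+1}(\varnothing)=\{0,\ldots,m\}$ in the intended model.

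\textbf{A point of exposition.} In your second bullet, $u\in G_M$ follows from $u\in F(A)\subseteq G_M$, not from $u\in\mathsf{s}_M(a)$ alone. State explicitly that this bullet is the case $u\in F(A)\cap\mathsf{s}_M(A)$.
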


\begin{proof}
The operator
$F(A):=\ii{\rho[A/X]}{\mathrm{Loc}[X]\wedge(z\vee\mathsf{s}(X))}$ is monotone
and $G_M$ is its least fixpoint, so
$G_M=\bigcup_\alpha F^\alpha(\varnothing)$, where
$F^{\alpha+1}(\varnothing):=F(F^\alpha(\varnothing))$ and
$F^\lambda(\varnothing):=\bigcup_{\alpha<\lambda}F^\alpha(\varnothing)$ at
limit stages, and
$G_M=\ii{\rho}{\mathrm{Loc}[G_M]}\cap
 \bigl(\{\rho(z)\}\cup\mathsf{s}_M(G_M)\bigr)$. In particular (L1)--(L6) hold
at every point of $G_M$.

(i) For $u\in G_M$ with $u\ne\rho(z)$, the fixpoint equation gives
$u\in\mathsf{s}_M(a)$ for some $a\in G_M$, and \Cref{lem:muinj} at $u$ gives no
other.

(ii) First, $\rho(z)\notin G_M$ forces $G_M=\varnothing$. The first
approximant satisfies
$F(\varnothing)\subseteq\{\rho(z)\}\cup\mathsf{s}_M(\varnothing)
=\{\rho(z)\}$ and $F(\varnothing)\subseteq G_M$, so $F(\varnothing)$, and with
it every later stage, is empty. So if $G_M\ne\varnothing$ then
$\rho(z)\in G_M$ and (L2) is available at that point. If $\rho(z)$ had an
$\mathsf{s}$-argument, (L2) would fail there. The hypothesis is needed. With
$M=\{0\}$, $\rho(z)=0$, $\mathsf{s}_M(0)=\{0\}$ and $\oplus,\otimes$ empty,
(L2) fails at $0$, so $G_M=\varnothing$ while $\rho(z)$ does have an
$\mathsf{s}$-argument.
Each $u$ enters the approximation at a least stage, and by the fixpoint
equation $\mathrm{pred}(u)$ enters strictly earlier, so the iteration descends
and must halt at the only element without a predecessor.

(iii) Induction on $\dep(b)$. If $\dep(b)=0$ then $b=\rho(z)$ by (ii), and
(L3) at $u$ gives $u=a$. If $\dep(b)=n+1$, put
$b':=\mathrm{pred}(b)\in G_M$, so $b\in\mathsf{s}_M(b')$ and $\dep(b')=n$.
Then $u\in a\oplus\mathsf{s}_M(b')$, so (L4) at $u$ gives some
$v\in(a\oplus b')\cap G_M$ with $u\in\mathsf{s}_M(v)$. By (L2) at $u$ we have
$u\ne\rho(z)$, so $\mathrm{pred}(u)=v$ by (i) and $\dep(u)=\dep(v)+1$. The
induction hypothesis gives $\dep(v)=\dep(a)+n$.

(iv) Induction on $\dep(b)$. If $\dep(b)=0$ then $b=\rho(z)$ and (L5) at $u$
gives $u=\rho(z)$, so $\dep(u)=0$. If $\dep(b)=n+1$ with
$b':=\mathrm{pred}(b)$, then (L6) at $u$, taken with $x\mapsto a$ and
$y\mapsto b'$, leaves two cases. If $u$ lies in the first disjunct then
$u\in\{\rho(z)\}\cap\{a\}$, so $u=a=\rho(z)$ and
$\dep(u)=0=\dep(a)\cdot\dep(b)$. Otherwise there is
$v\in(a\otimes b')\cap G_M$ with $u\in v\oplus a$. The induction hypothesis
gives $\dep(v)=\dep(a)\cdot n$, and (iii), applicable because
$v,a,u\in G_M$, gives $\dep(u)=\dep(v)+\dep(a)=\dep(a)\cdot(n+1)$.
\end{proof}

\subsection*{The reduction}
Let $p,q$ be terms over $z,\mathsf{s},\oplus,\otimes$ in variables
$\bar y=y_1,\ldots,y_k$, taken to be exactly the variables occurring in $p$ or
$q$. These are polynomials with natural coefficients, with numerals written
$\mathsf{s}^n(z)$. We use two terms because subtraction is unavailable, and
any Diophantine equation can be rearranged into this form.

\begin{definition}\label{def:relativize}
Conjoin $G$ at every node: $z^G:=z\wedge G$, $y_i^G:=y_i\wedge G$,
$(\mathsf{s}\,t)^G:=\mathsf{s}(t^G)\wedge G$,
$(t_1\oplus t_2)^G:=(t_1^G\oplus t_2^G)\wedge G$, and likewise for $\otimes$.
Put $\Phi_{p,q}:=\neg\,\exists y_1\cdots\exists y_k.\,(p^G\wedge q^G)$.
\end{definition}

Since symbols propagate $\varnothing$, one node escaping $G$ empties the whole
term.

\begin{lemma}\label{lem:mureduction}
$\vDash\Phi_{p,q}$ if and only if $p(\bar n)=q(\bar n)$ has no solution
$\bar n\in\mathbb{N}^k$.
\end{lemma}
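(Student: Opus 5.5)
The plan is to prove both directions by contraposition. Each direction relates a point of $\ii{\rho}{\exists\bar y.(p^G\wedge q^G)}$ to a solution in $\mathbb{N}^k$. Validity of $\Phi_{p,q}$ fails iff some $M,\rho$ and some assignment of $\bar y$ make $p^G\wedge q^G$ nonempty.

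\emph{Soundness direction (a solution refutes validity).} Given $\bar n$ with $p(\bar n)=q(\bar n)$, I take the standard model $M=\mathbb{N}$. Set $\mathsf{s}_M(a)=\{a+1\}$, $a\oplus b=\{a+b\}$, $a\otimes b=\{ab\}$ and $\rho(z)=0$, and assign $y_i\mapsto n_i$. I then show $G_M=\mathbb{N}$. Since $G_M$ is the least fixpoint of $F$, I first check that (L1)--(L6) hold at every point when $X$ is read as $\mathbb{N}$. Injectivity and $0\notin\mathsf{s}_M(\mathbb{N})$ are immediate, and (L3)--(L6) are the recursion equations for $+$ and $\cdot$. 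The clause $z\wedge x$ in (L6) covers $x\otimes 1$ when $x=0$; in the standard model I will simply check that the second disjunct also applies. I then prove by induction on $n$ that $n\in F^{n+1}(\varnothing)$. The main point is that the consequents of (L4) and (L6) only need $X$ to contain the smaller values $a+b'$ and $ab'$. These are not always smaller than $u$, for instance when $a=0$, so I will instead use that $F^\omega(\varnothing)$ is a fixpoint containing all $n$, via a simple induction. It then follows that $G_M=\mathbb{N}$ and that each relativized term denotes $\{p(\bar n)\}$. Hence the conjunction is nonempty and $\Phi_{p,q}$ is not total.

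\emph{Completeness direction (a countermodel yields a solution).} Suppose $u\in\ii{\rho'}{p^G\wedge q^G}$, where $\rho'$ extends $\rho$ on $\bar y$. Then $G_M\neq\varnothing$, so \Cref{lem:mustandard} applies with $\rho'$; $G$ contains no $\bar y$, so $G_M$ is unchanged. Set $n_i:=\dep(\rho'(y_i))$, which is well defined because $y_i^G$ nonempty forces $\rho'(y_i)\in G_M$. I then prove by structural induction on a term $t$ that every $v\in\ii{\rho'}{t^G}$ lies in $G_M$ and has $\dep(v)=t(\bar n)$, with $z$ read as $0$. The membership in $G_M$ comes from conjoining $G$ at every node. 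The variable and $z$ cases are immediate, using (ii). For $\mathsf{s}$, $v\in\mathsf{s}_M(a)$ with $a\in G_M$; by (L2) and (ii) we have $v\neq\rho(z)$, so (i) gives $\mathrm{pred}(v)=a$ and hence depth $+1$. For $\oplus$ and $\otimes$ I use (iii) and (iv), whose hypotheses $a,b,v\in G_M$ are supplied by the induction. Applying this to $u$ gives $p(\bar n)=\dep(u)=q(\bar n)$.

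The main obstacle is the fixpoint verification in the first direction, namely showing that the least fixpoint is all of $\mathbb{N}$ despite the self-reference through $X$ in (L4) and (L6). I would handle it by showing that $\bigcup_n F^n(\varnothing)$ is closed under $F$ and contains every numeral.
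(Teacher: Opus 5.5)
Your second direction, where a point of $p^G\wedge q^G$ yields the solution $\dep(\rho'(\bar y))$ by induction on terms using \Cref{lem:mustandard}, is the paper's argument. The gap is in the first direction, at the step you yourself call the main obstacle. You do not show that any numeral enters the least fixpoint, and neither device you propose does it. Proving that $\bigcup_n F^n(\varnothing)$ is closed under $F$ tells you nothing about its contents. That closure is not even needed, since every approximant already lies in $G_M$ by monotonicity. What has to be shown is that each $n$ lies in some finite approximant, which is the induction you abandoned. Your plan to verify \textup{(L6)} through the second disjunct is also exactly the circular route. Take the point $0$ with $x\mapsto 0$. The antecedent $0\in 0\otimes_M\mathsf{s}_M(b)$ holds for every $b$, while the second disjunct $\bigl((0\otimes_M b)\cap X\bigr)\oplus_M 0$ contains $0$ only if $0\in X$ already. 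If that disjunct were the only way to satisfy \textup{(L6)} at $0$, you would get $F(\varnothing)=\varnothing$ and hence $G_M=\varnothing$.

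The missing observation is that your problematic case $a=0$ of \textup{(L6)} arises only at $u=0$. There the first disjunct $z\wedge x$ contains $u$ because $\rho(z)=0=a$, and it makes no demand on $X$. \textup{(L4)} is vacuous at $0$ and the other clauses do not mention $X$, so $0\in F(X)$ for every $X$, in particular $0\in F(\varnothing)$. For $u=m\ge 1$, the antecedent of \textup{(L6)} forces $a\ge 1$, so $ab'=m-a\le m-1$. The antecedent of \textup{(L4)} forces $a+b'=m-1$. Hence $X\supseteq\{0,\ldots,m-1\}$ makes every clause hold at $m$ and puts $m$ in $\{0\}\cup\mathsf{s}_M(X)$, so $m\in F(X)$. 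Induction then gives $\{0,\ldots,m\}\subseteq F^{m+1}(\varnothing)\subseteq G_M$. This is the paper's argument, which in fact proves $F^{m+1}(\varnothing)=\{0,\ldots,m\}$. With it inserted, the rest of your first direction goes through.
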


\begin{proof}
($\Leftarrow$) Suppose there is no solution and, towards a contradiction, that
$u\in\ii{\rho}{\exists\bar y.\,(p^G\wedge q^G)}$ for some $M,\rho,u$. Then
there are $\bar b$ with $u$ in both denotations under $\rho[\bar b/\bar y]$.
Every node of $p^G$ and $q^G$ is conjoined with $G$, so every value computed at
every node, including each $b_i$ and $u$, lies in $G_M$. Thus $G_M$ is
nonempty, and the clause of \Cref{lem:mustandard}(ii) applies. Induction on
term structure, using \Cref{lem:mustandard}(iii),(iv) at the $\oplus$ and
$\otimes$ nodes and (i),(ii) at the $\mathsf{s}$ nodes and at $z$, gives
$\dep(u)=p(\dep(\bar b))$ and $\dep(u)=q(\dep(\bar b))$, so $\dep(\bar b)$
solves the equation.

($\Rightarrow$) Suppose $p(\bar n)=q(\bar n)$. Take $M:=\mathbb{N}$ with
$\mathsf{s}_M(k):=\{k+1\}$, $k\oplus_M l:=\{k+l\}$,
$k\otimes_M l:=\{k\cdot l\}$, and $\rho(z):=0$, $\rho(y_i):=n_i$. We claim $G_M=\mathbb{N}$.
Each of (L1)--(L6) holds at every point when $X$ is interpreted as
$\mathbb{N}$, and $\{0\}\cup\mathsf{s}_M(\mathbb{N})=\mathbb{N}$, so
$\mathbb{N}$ is a fixpoint of $F$. For leastness we check
$F^{m+1}(\varnothing)=\{0,\ldots,m\}$ by induction on $m$. At $m=0$ the
candidates are $\{0\}\cup\mathsf{s}_M(\varnothing)=\{0\}$. Clause (L4) is vacuous
there, since $0\notin a\oplus_M\mathsf{s}_M(b)$ for any $a,b$, and (L6) holds
because $0\in a\otimes_M\mathsf{s}_M(b)$ forces $a=0$, when the disjunct
$z\wedge x$ contains $0$. For $m\ge 1$ the candidates are
$\{0\}\cup\mathsf{s}_M(\{0,\ldots,m-1\})=\{0,\ldots,m\}$, so only the clauses
at the new point $m$ need attention, the others holding by the induction
hypothesis. (L1), (L2), (L3) and (L5) are immediate. For (L4),
$m\in a\oplus_M\mathsf{s}_M(b)$ forces $a+b=m-1$, which lies in the previous
stage, and the consequent then contains $\mathsf{s}_M(m-1)=\{m\}$. For (L6),
$m\in a\otimes_M\mathsf{s}_M(b)$ forces $a(b+1)=m$ with $a\ge 1$, so
$ab=m-a<m$ lies in the previous stage and the second disjunct contains
$ab+a=m$. Hence each $t^G$ denotes $\{t(\bar n)\}$ and
$p^G\wedge q^G$ is nonempty at $p(\bar n)$, so $\Phi_{p,q}$ is not total there.
\end{proof}

\begin{ednote}
Removing the occurrences of $X$ from (L4) and (L6) permits the recursion
equations to produce intermediate values outside $G$, where nothing constrains
them. For addition, injectivity identifies the intermediate as the predecessor,
so the occurrence of $X$ in (L4) is dispensable and
\Cref{lem:mustandard}(iii) still holds. Multiplication offers no analogous
recovery. Removing $X$ from (L6) therefore breaks
\Cref{lem:mustandard}(iv). There the fixpoint variable supplies what definedness
would otherwise supply.

The disjunct $z\wedge x$ in (L6) serves a separate purpose. Omitting it causes
the construction to fail outright. Multiplication by zero returns zero, so at
the point $\rho(z)$ the clause taken with $x=z$ would demand that
$(z\otimes y)\wedge X$ be inhabited, that is, that $\rho(z)$ already lie in
$X$. The clause thus makes $\rho(z)$ depend on itself. In the intended model on
$\mathbb{N}$, this is fatal. At zero, the antecedent holds with $x=z$, while the
recursive consequent is empty at the first stage, so $F(\varnothing)=\varnothing$
and $G_M=\varnothing$, and the witness required by \Cref{lem:mureduction} is
lost.
The extra disjunct breaks the circularity by treating the zero product as a
base case. Addition needs no counterpart, because a successor is never zero.
\end{ednote}

\begin{proof}[Proof of \Cref{thm:mu}]
The map $(p,q)\mapsto\Phi_{p,q}$ is computable, so \Cref{lem:mureduction} is a
many-one reduction of $U:=\{(p,q)\mid p=q \text{ has no solution in }
\mathbb{N}\}$ to the set of valid patterns. We check that $U$ is
$\Pi^0_1$-complete. Its complement $S$, the set of pairs
$(p,q)$ for which $p=q$ has a solution, is recursively enumerable by dovetailing
over tuples. For hardness, fix a $\Sigma^0_1$-complete $K$. By the
Matiyasevich--Robinson--Davis--Putnam theorem \cite{Matiyasevich1970} there is
an integer polynomial $P(e,\bar x)$ with $e\in K$ iff
$\exists\bar x\in\mathbb{N}^m.\,P(e,\bar x)=0$. Writing $P=P^{+}-P^{-}$ with
$P^{+},P^{-}$ of natural coefficients and substituting the numeral for $e$ gives
computable natural-coefficient terms $p_e,q_e$ with $e\in K$ iff
$\exists\bar x.\,p_e(\bar x)=q_e(\bar x)$. So $S$ is $\Sigma^0_1$-complete and
$U$ is $\Pi^0_1$-complete. Were the valid patterns recursively enumerable, the
reduction would make $U$ recursively enumerable as well, a contradiction.
\end{proof}

\begin{remark}[the method cannot be run in the modal $\mu$-calculus either]
\label{rem:muloc}
\Cref{thm:mu} closes the middle and right of the bottom row of
\Cref{fig:landscape}. Our method does not apply to the left entry, for a
different and milder reason. The modal $\mu$-calculus is not compact. For
example, the set
\[
  \Delta=\{\mu X.(p\vee\Diamond X)\}\cup\{\neg p,\ \neg\Diamond p,\
  \neg\Diamond\Diamond p,\ \ldots\}
\]
is unsatisfiable, since the fixpoint asserts that $p$ is reachable in finitely
many steps while the remaining formulas deny it at every finite distance, yet
every finite subset is satisfiable. So $\Delta\lloc\bot$ while no finite
$\Delta_0\subseteq\Delta$ has $\vdash\bigwedge\Delta_0\to\bot$, and the
analogue of~(L) fails. \Cref{thm:A}($\Leftarrow$) is therefore unavailable
there too. This causes no obstacle because global consequence is expressible by
$\nu X.(\bigwedge\Gamma\wedge\Box X)$ for finite $\Gamma$ and needs no
localization.

That the $\mu$-calculus is non-compact appears to be folklore. An infinitary
axiomatization was already given alongside the finitary one in
\cite{Kozen1983}, as a non-compact logic requires for strong
completeness, and \cite{AmblerEtAl1995} proves completeness for both. The
corresponding statement for propositional dynamic logic is explicit in
\cite{RenardelKooiVerbrugge2008}. The example records the status of the three
notions in \Cref{sec:system}. Weak completeness holds, local completeness
fails, and global completeness holds only for finite $\Gamma$.
\end{remark}

\subsection*{How small the fragment is}
\Cref{thm:mu} uses one sort and three symbols of arity at most two. Definedness,
equality, membership, and constants are absent. The fragment has element
variables with $\forall$ and $\exists$, together with a single fixpoint template
$G$ and no $\nu$. Since $G$ is conjoined at every node of $p^G$ and $q^G$, the
expanded pattern carries one $\mu$-binder occurrence per node. These occurrences
are unnested copies of one template, so the alternation depth is one. Each copy
binds a single set variable with three positive occurrences in its body, and
none are free in $\Phi_{p,q}$. Finally, the theory is empty. This fact is
essential to \Cref{cor:mu}. Because the statement concerns validity, it refutes
weak completeness directly without relying on a specially chosen $\Gamma$.

Two restrictions recover this paper's neighbouring positive results. Removing
$\mu$ yields \Cref{cor:main}. Removing element variables and $\exists$ yields
the modal $\mu$-calculus, whose axiomatization is complete
\cite{Kozen1983,Walukiewicz2000}. The combination of $\mu$, element variables,
and $\exists$ yields arithmetic.

\begin{remark}[what \Cref{thm:mu} adds to what was known]\label{rem:mudef}
The result that matching $\mu$-logic has no effective sound and complete proof
system is due to \cite[Prop.~22--23]{ChenRosu2019}. Using definedness, that
argument gives a \emph{finite} theory $\Gamma^{\mathbb{N}}$ (functionality, no
confusion, the recursion equations for $+$ and $\times$, and the inductive
domain axiom $\mu D.\,0\vee\mathsf{succ}(D)$), all of whose models are
isomorphic to $(\mathbb{N},+,\times)$, so that $\Gamma^{\mathbb{N}}\vDash\varphi$
is not recursively enumerable. Enumerate this finite theory as
$\Gamma^{\mathbb{N}}=\{\gamma_1,\ldots,\gamma_n\}$, and for each arithmetic
sentence $\sigma$ let $\sigma^*$ denote its translation into matching logic.
Since $\Gamma^{\mathbb{N}}$ is finite and definedness internalizes totality,
$\lfloor\gamma\rfloor$ being total exactly when $\gamma$ is, that argument also
yields the weak form
$\vDash\bigl(\bigwedge_i\lfloor\gamma_i\rfloor\bigr)\to\sigma^*$ iff
$\mathbb{N}\vDash\sigma$, and $\mathrm{Th}(\mathbb{N})$ is not recursively
enumerable.

The same pattern recurs elsewhere. As shown in
\cite[Rem.~37]{ChenLucanuRosu2026}, where initial algebras are axiomatized in
matching logic (no-junk by a least fixpoint, no-confusion, and equality modulo
$E$), non-ground equational validity in initial algebras is
$\Pi^0_2$-complete, so no effective sound and complete system exists for that
class either. That argument, too, runs through definedness, which their
theories import for equality, and through a nonempty theory.

Both arguments rely on the fact that \emph{definedness internalizes totality and
$\mu$ turns internalized totality into induction}. \Cref{thm:mu} operates
without definedness and with no hypotheses. The occurrences of $X$ in (L4) and
(L6) allow the fixpoint variable itself to perform the internalization. The
negative result therefore extends to the definedness-free column of
\Cref{fig:landscape}, which \cite{ChenRosu2019} leaves untouched.
\end{remark}

\begin{remark}[what is not minimized]\label{rem:munotmin}
Three aspects remain unminimized. \emph{Arity.} The symbols $\oplus$ and
$\otimes$ are binary because addition
and multiplication are ternary relations, which a binary symbol expresses at a
point. Whether unary symbols suffice remains open, although $\mathrm{Loc}[X]$
is in effect a relativized universal quantifier and could plausibly constrain
an encoded triple structure. \emph{Applicative matching logic.} Currying
$\mathsf{s},\oplus,\otimes$ into constants under a single application is
expected to go through, since \Cref{lem:muinj} applies to each argument
position of application, but the constraints then quantify over partial
applications and we have not written them. \emph{Conjunctivity.} See below.
\end{remark}

\begin{remark}[the aconjunctive fragment is untouched]\label{rem:acon}
A fixpoint $\mu X.\,\varphi$ is \emph{aconjunctive} when no conjunction inside
$\varphi$ has $X$ free in two or more conjuncts. The pattern $G$ of
\Cref{def:G} fails this condition. Its body is
$\mathrm{Loc}[X]\wedge(z\vee\mathsf{s}(X))$, with $X$ free in both conjuncts.
This feature is essential. Removing the occurrences of $X$ from (L4) and (L6)
makes the body aconjunctive and is exactly what breaks
\Cref{lem:mustandard}(iv). Thus the occurrences of $X$ that enable the encoding
also make the body conjunctive.

This matters because the completeness proof of \cite{Kozen1983} is for the
aconjunctive fragment and is elementary. Completeness of the full calculus was
established in \cite{Walukiewicz2000}. The elementary argument is therefore available in the
fragment beyond the reach of our obstruction. This makes completeness for the
aconjunctive fragment of matching $\mu$-logic a well-scoped target. Two caveats
remain. First, failure of \emph{this} reduction does not establish completeness,
since arithmetic might be encodable aconjunctively by some other device.
Second, aconjunctivity is defined for the modal $\mu$-calculus, so transposing it
here requires care because $\forall x.\,\varphi$ is a negated existential that
may introduce conjunctions. In its favour, the fixpoints arising in language
semantics are mostly reachability formulas $\mu X.(\varphi\vee\sigma(X))$,
which are directly aconjunctive.
\end{remark}

%% ================================================================
\section{The four regimes}\label{sec:regimes}

The two localization identities are semantic localization,
$\Gamma\vDash\varphi\iff\Delta_\Gamma\lloc\varphi$, and proof-theoretic
localization, $\Gamma\vdash\varphi\iff\Delta_\Gamma\lloc\varphi$. For semantic
localization, $\Delta_\Gamma\lloc\varphi\Rightarrow\Gamma\vDash\varphi$ is
direct. The reverse implication uses the double-cover construction of
\Cref{sec:B}. For proof-theoretic localization,
$\Gamma\vdash\varphi\Rightarrow\Delta_\Gamma\lloc\varphi$ follows from
soundness and semantic localization. The reverse implication uses strong local
completeness~(L). The outcomes below depend on which of these identities or
directions remains available.

\begin{center}
\small
\begin{tabular}{@{}lll@{}}
\toprule
\textbf{Setting} & \textbf{Status} & \textbf{Relevant step or obstruction} \\
\midrule
one sort, fixpoint-free
  & complete
  & both identities hold; \Cref{cor:main} \\[.3em]
many sorts, fixpoint-free
  & \textbf{fails}
  & the sorts of $\Gamma$ cannot feed the sort of \\
  & & the conclusion (\Cref{prop:manysorted}) \\[.3em]
with definedness
  & complete
  & \Cref{thm:B} degenerates: $\lfloor\cdot\rfloor$ is $A$, and \\
  & & $\Delta_\Gamma$ may be taken to be
      $\{A\gamma\mid\gamma\in\Gamma\}$ \\[.3em]
with $\mu$
  & \textbf{fails}
  & \Cref{thm:A}($\Leftarrow$) fails: it consumes~(L), \\
  & & and no effective calculus has~(L) there (\Cref{thm:mu}) \\
\bottomrule
\end{tabular}
\end{center}

One qualification applies to the whole table. When we say that completeness
\emph{fails}, we mean that it fails for the system of
\Cref{fig:system}, since a counterexample exhibits $\Gamma\vDash\varphi$ with
$\Gamma\nvdash\varphi$. That is weaker than non-axiomatizability. Indeed,
global consequence for the definedness-free system, with any number of sorts,
translates into ordinary first-order consequence in the usual way, so for an
effectively presented countable signature and a recursively enumerable theory it
is recursively enumerable and \emph{some} sound and complete recursive system
for it exists. The counterexample shows that the system considered here is not
such a system. The $\mu$ row is different. By \Cref{thm:mu}, validity itself is
not recursively enumerable, so no sound calculus with a recursively enumerable
proof relation is weakly complete there, whatever axioms one adopts. This result
does not address the semantic relation or calculi with undecidable proof
relations.

\subsection*{The many-sorted system}
\Cref{sec:system} fixes one sort, so we define here the many-sorted system used
in the second row. This is the original system of \cite{ChenRosu2019}, whose
one-sorted restriction appears in \Cref{fig:system}. We include the definition
to make this section self-contained.

A many-sorted signature is a set $S$ of sorts together with a set $\Sigma$ of
symbols, each carrying an arity $s_1\cdots s_n\to t$ with $s_i,t\in S$. Element
variables are sorted, $\Var_s$ for each $s$, and patterns carry a sort:
\[
  \varphi ::= x \mid \sigma(\varphi_1,\ldots,\varphi_n)
    \mid \varphi_1\to\varphi_2 \mid \bot_s \mid \exists x{:}s.\,\varphi,
\]
Here $x\in\Var_s$ has sort $s$. The pattern
$\sigma(\varphi_1,\ldots,\varphi_n)$ has sort $t$ when
$\sigma:s_1\cdots s_n\to t$ and each $\varphi_i$ has sort $s_i$. The implication
$\varphi_1\to\varphi_2$ is formed only from patterns of a common sort and
inherits that sort. The pattern $\bot_s$ has sort $s$. Finally,
$\exists x{:}s.\,\varphi$ has the sort of $\varphi$, whatever $s$ is, so
quantification does not change the sort of a pattern. Write
$\mathrm{sort}(\varphi)$ for that sort. The derived connectives
of \Cref{sec:system} are taken at each sort, and $\forall x{:}s$ abbreviates
$\neg\exists x{:}s.\,\neg$ as before.

A model provides a nonempty carrier $M_s$ for every $s\in S$ and, for every
$\sigma:s_1\cdots s_n\to t$, a map
$\sigma_M:M_{s_1}\times\cdots\times M_{s_n}\to\Pw(M_t)$, extended pointwise to
sets as in \Cref{sec:system} and again empty as soon as one argument is. A
valuation sends each $x\in\Var_s$ into $M_s$, and
$\ii{\rho}{\varphi}\subseteq M_{\mathrm{sort}(\varphi)}$ is given by the clauses
of \Cref{sec:system}, with $\ii{\rho}{\bot_s}=\varnothing$, with complementation
taken in $M_{\mathrm{sort}(\varphi)}$, and with the union in the existential
clause taken over $a\in M_s$. Totality is at the pattern's own sort, so
\[
  M\vDash\varphi
  \iff \ii{\rho}{\varphi}=M_{\mathrm{sort}(\varphi)}\ \text{for every }\rho,
\]
and $M\vDash\Gamma$, $\vDash\varphi$ and $\Gamma\vDash\varphi$ are defined as
in \Cref{def:total}.

An application context carries two sorts, a hole sort and a result sort:
\[
  C ::= \Box_s \mid \sigma(\varphi_1,\ldots,C,\ldots,\varphi_n),
\]
where the second form requires $C$ to have result sort $s_i$ when it occupies
argument position $i$ of $\sigma:s_1\cdots s_n\to t$. The resulting context has
sort $t$. A context with hole sort $s$ and result sort $t$ therefore exists only
when some chain of argument positions leads from $s$ to $t$.

The proof system is \Cref{fig:system} read with sorts. The propositional
schemes, modus ponens and the two $\exists$ rules are taken at each sort, with
$x\in\Var_s$ in the quantifier rules. Propagation and framing are taken at each
symbol and each argument position. For each such rule, the premise has the sort
of that position and the conclusion has the result sort of the symbol. Existence is
$\exists x{:}s.\,x$ at each sort. The singleton axiom is taken for contexts
$C_1,C_2$ with a common result sort and holes of the sort of $x$. Derivations
are finite, and their only $\Gamma$-dependent lines are members of $\Gamma$.
Soundness is the implication from $\Gamma\vdash\varphi$ to
$\Gamma\vDash\varphi$ and is proved in \cite[Thm.~13]{ChenRosu2019}. Input~(S)
of \Cref{sec:system} is its one-sorted case.

\subsection*{The many-sorted failure}
The second row requires a separate argument because \Cref{def:boxed} is stated
for one sort. The following form does not require a many-sorted localization.

\begin{proposition}\label{prop:manysorted}
Let the sorts be $b,a,c$ with symbols $f:b\to a$ and $g:b\to c$, and put
\[
  \Gamma=\{\forall x{:}b\,\forall y{:}b.\,f(x\wedge y)\},
  \qquad
  \varphi=\forall x{:}b\,\forall y{:}b.\,(g(x)\leftrightarrow g(y)).
\]
Then $\Gamma$ is satisfiable, $\Gamma\vDash\varphi$, and $\Gamma\nvdash\varphi$.
\end{proposition}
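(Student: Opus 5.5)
The plan has three parts: satisfiability, the semantic consequence, and a syntactic invariant for underivability. The semantic parts come from one observation about what $\Gamma$ forces on the carrier of sort $b$. The underivability part is a sort-flow argument: no chain of argument positions leads out of sort $a$, which is where the only hypothesis lives.

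\emph{Semantics.} Fix a model $M$ and $a_1,a_2\in M_b$. If $a_1\ne a_2$, then under a valuation sending $x\mapsto a_1$ and $y\mapsto a_2$ we get $\ii{\rho}{x\wedge y}=\varnothing$. Since symbols propagate emptiness, $f(x\wedge y)$ denotes $\varnothing\ne M_a$, because carriers are nonempty. Hence $M\vDash\Gamma$ forces $|M_b|=1$, together with $f_M(\ast)=M_a$ for the unique point $\ast$. Conversely, any such model satisfies $\Gamma$. Taking every carrier to be $\{0\}$, with $f_M(0)=g_M(0)=\{0\}$, shows that $\Gamma$ is satisfiable. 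In any model of $\Gamma$ every valuation sends $x$ and $y$ to the same point. So $g(x)$ and $g(y)$ have equal denotations, the biconditional is total in $M_c$, and $\Gamma\vDash\varphi$ follows.

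\emph{Underivability.} The key claim is that every line of sort $b$ or $c$ in a derivation from $\Gamma$ is derivable from $\varnothing$. I would prove this by induction along the derivation, checking the rule inventory of the many-sorted system.
\begin{itemize}
\item The member of $\Gamma$ has sort $a$, and axioms are theorems.
\item Modus ponens and the two $\exists$ rules have premises of the same sort as their conclusion. Quantification never changes sort.
\item Framing for $f$ or $g$ takes a premise of sort $b$, which is covered by the induction hypothesis, to a conclusion of sort $a$ or $c$.
\item No symbol has an argument of sort $a$, so no rule ever consumes a line of sort $a$ to produce a line of sort $b$ or $c$.
\end{itemize}
Because $\varphi$ has sort $c$, $\Gamma\vdash\varphi$ would give $\vdash\varphi$, and soundness (\cite[Thm.~13]{ChenRosu2019}) would give $\vDash\varphi$. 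A countermodel then refutes it. Take $M_b=M_c=\{0,1\}$, $M_a=\{0\}$, $g_M(i)=\{i\}$ and $f$ arbitrary. With $x\mapsto0$ and $y\mapsto1$, the pattern $g(x)\leftrightarrow g(y)$ excludes both points of $M_c$, so $\varphi$ is not total.

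The main obstacle is making the sort-flow induction airtight against the full rule list of the many-sorted system. The singleton-variable axioms and the derived universal generalization need checking, but they are axioms or sort-preserving, so they should cause no trouble. The point to verify carefully is that every rule with premises has each premise either of the conclusion's sort or, for framing and propagation, of an argument sort of a symbol. Given the signature, that argument sort is only ever $b$.
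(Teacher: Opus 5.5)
Your proposal is correct and follows the paper's proof essentially step for step: the same argument that $M\vDash\Gamma$ forces $M_b$ to be a singleton (giving satisfiability and $\Gamma\vDash\varphi$), then the same sort-flow induction showing that every line not of sort $a$ is derivable without $\Gamma$, closed off by soundness and a countermodel with $|M_b|=2$ and $g$ injective. The only slip is grouping propagation with framing: propagation is an axiom scheme with no premises, so it needs no sort check, and this does not affect the argument.
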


\begin{proof}
\emph{Satisfiability.} Take $M_b$ a singleton, $M_a$ and $M_c$ singletons, and
$f$ and $g$ total.

\emph{Entailment.} Let $M\vDash\Gamma$. If $M_b$ had distinct $r,s$, the
valuation $\rho$ with $\rho(x)=r$ and $\rho(y)=s$ would give
$\ii{\rho}{x\wedge y}=\varnothing$, and symbols propagate $\varnothing$, so
$\ii{\rho}{f(x\wedge y)}=\varnothing$, which is not $M_a$ because carriers are
nonempty. Hence $M_b$ is a singleton, every valuation sends $x$ and $y$ to the
same element, and $\ii{\rho}{g(x)\leftrightarrow g(y)}=M_c$ for every $\rho$.
So $M\vDash\varphi$.

\emph{Non-derivability.} Say that a sort $s$ \emph{feeds} a sort $t$ if $s=t$
or some symbol has an argument of sort $s$ and result sort $t$. Write
$s\Rightarrow t$ for the reflexive transitive closure. In this signature no
symbol has an argument of sort $a$, so $a\Rightarrow t$ holds only for $t=a$.

Every rule of \Cref{fig:system} has premises whose sorts feed the sort of its
conclusion. Modus ponens and the quantifier rules are sort-preserving, and
framing puts a pattern of sort $s$ into one argument position of a symbol, so
its conclusion has that symbol's result sort $t$, and such a symbol has an
argument of sort $s$. So if the conclusion of a rule has sort $t\ne a$, then no
premise of it has sort $a$.

By induction on derivations, every line whose sort is not $a$ is derivable from
the axioms alone. A hypothesis line has sort $a$ and is excluded, an axiom line
needs nothing, and a rule with conclusion of sort $t\ne a$ has, by the previous
paragraph, no premise of sort $a$. The induction hypothesis therefore applies
to all of its premises. Since $\varphi$ has sort $c$, from
$\Gamma\vdash\varphi$ we would get $\vdash\varphi$ and hence
$\vDash\varphi$ by~(S). But $\varphi$ is not valid. Take
$M_b=\{r,s\}$ with $g_M(r)\ne g_M(s)$. Hence $\Gamma\nvdash\varphi$.
\end{proof}

\Cref{prop:manysorted} establishes that the one-sort hypothesis is essential.
The sentence in \Cref{sec:boxed} about words in $E^*$ identifies where the
hypothesis enters the proof. With one sort every word composes, whereas here no
word leads from the sort of $\Gamma$ to the sort of the conclusion. The
definedness case is complete for an independent reason. A universal modality
makes localization directly available \cite{GorankoPassy1992}, which is why
that case was proved first \cite[Thm.~15]{ChenRosu2019} and why it provides no
evidence for the definedness-free setting. In the $\mu$ case, the identified
failure lies in oracle~(L). This argument does not locate a failure in the
construction of \Cref{sec:B}.
\Cref{thm:A}($\Leftarrow$) uses~(L), and by \Cref{thm:mu} no
effective calculus supplies it. We do not analyse whether
\Cref{lem:locality,lem:twocopy} survive $\mu$ and $\nu$, and nothing here depends
on that question. See \Cref{sec:mu}.

\begin{corollary}\label{cor:noencoding}
There is no translation from many-sorted definedness-free matching logic into
\emph{any} one-sorted definedness-free matching logic that preserves global
consequence and reflects derivability.
\end{corollary}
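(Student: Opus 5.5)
The plan is a proof by contradiction that plays \Cref{cor:main} against \Cref{prop:manysorted}. Suppose $T$ maps many-sorted definedness-free patterns to patterns of some one-sorted definedness-free matching logic over a signature $\Sigma'$. Suppose further that $T$ preserves global consequence, so that $\Gamma\vDash\varphi$ implies $T\Gamma\vDash T\varphi$ in the one-sorted semantics. Suppose also that it reflects derivability, so that $T\Gamma\vdash T\varphi$ in the one-sorted calculus of \Cref{fig:system} implies $\Gamma\vdash\varphi$ in the many-sorted calculus. Instantiate with the signature, $\Gamma$ and $\varphi$ of \Cref{prop:manysorted}, which give $\Gamma\vDash\varphi$ and $\Gamma\nvdash\varphi$.

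The steps, in order, are these. First, preservation turns $\Gamma\vDash\varphi$ into $T\Gamma\vDash T\varphi$. Second, \Cref{cor:main}, applied over $\Sigma'$, gives $T\Gamma\vdash T\varphi$. That corollary holds for an arbitrary finitary one-sorted signature and arbitrary, possibly infinite $\Gamma$, so no assumption on the shape of the translation's target is needed. If $T$ produces open patterns, we close them first, as in \Cref{sec:system}; closing changes neither relation. Third, reflection yields $\Gamma\vdash\varphi$, which contradicts \Cref{prop:manysorted}.

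The only point requiring care, and the one I expect to be the main obstacle, is fixing what ``translation'' means so that the two properties line up with the relations used above. The translation must act on sets of hypotheses. If a hypothesis is translated to a set of patterns (for example with added sort-relativization axioms), then $T\Gamma$ means the union. \Cref{cor:main} still applies because it allows infinite theories. The argument uses nothing else about $T$: it need not be compositional, computable, or injective. The word ``any'' in the statement is justified because \Cref{cor:main} holds for every one-sorted signature, including the applicative one of \Cref{sec:aml}. The satisfiability of $\Gamma$ in \Cref{prop:manysorted} ensures the counterexample is not vacuous, so a translation that sends unsatisfiable theories to $\bot$ cannot escape this way.
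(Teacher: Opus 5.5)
Your proposal is correct and follows the paper's proof exactly: apply preservation to the counterexample of \Cref{prop:manysorted}, then \Cref{cor:main} in the one-sorted target, then reflection, which yields the impossible $\Gamma\vdash\varphi$. Your side remarks on closing open translated patterns and on translating hypotheses to sets are sound but unnecessary, and the satisfiability of $\Gamma$ plays no role here, since the argument works for any pair with $\Gamma\vDash\varphi$ and $\Gamma\nvdash\varphi$.
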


\begin{proof}
Preservation applied to the counterexample, then \Cref{cor:main} in the target,
then reflection, would give $\Gamma\vdash\varphi$.
\end{proof}

The corollary leaves intact forward preservation of derivability, which is the
direction needed for proof export. It rules out transporting completeness back
along an encoding. In particular the sorted and parametric encodings of
\cite{ChenRosu2020AML,ChenRosu2020ICFP}, which route sorted quantification
through definedness, cannot be replaced by a definedness-free translation that
simultaneously preserves global consequence and reflects derivability.

%% ================================================================
\section{What localization can reach}\label{sec:limits}

In \Cref{sec:B} we proved \Cref{thm:A}($\Rightarrow$) by composing soundness
with \Cref{thm:B}. This concise proof uses only one property of the proof
system, which we now state explicitly. The resulting formulation applies to
other calculi and extends the nominal obstruction from \Cref{fig:system} to
every system satisfying that property.

Throughout this section $\Gamma$ is a set of closed patterns, and we abbreviate
the property $\Delta_\Gamma\lloc\varphi$ by $\mathrm{I}(\varphi)$. Unfolding
\Cref{def:boxed}, $\mathrm{I}(\varphi)$ says that
$\den{\Delta_\Gamma}_M\subseteq\ii{\rho}{\varphi}_M$ for every model $M$ and
every valuation $\rho$.

\begin{definition}[respecting localization]\label{def:respects}
A rule \emph{respects localization} if, for every set $\Gamma$ of closed
patterns, whenever its side conditions are met and $\mathrm{I}$ holds of each
premise, $\mathrm{I}$ holds of the conclusion.
\end{definition}

\begin{theorem}\label{thm:obstruction}
Let $\vdash$ be a derivability relation over the syntax of \Cref{sec:system}
generated by well-founded derivations whose leaves are members of $\Gamma$ or
valid patterns, and each of whose rule applications respects localization in the
sense of \Cref{def:respects}.
Then $\Gamma\vdash\varphi$ implies $\Delta_\Gamma\lloc\varphi$. Consequently,
if $\Gamma\vDash\varphi$ while $\Delta_\Gamma\not\lloc\varphi$, then
$\Gamma\nvdash\varphi$, so $\vdash$ is not globally complete.
\end{theorem}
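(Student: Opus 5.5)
The plan is a structural induction on the well-founded derivation witnessing $\Gamma\vdash\varphi$, carrying the invariant $\mathrm{I}(\psi)$, that is $\Delta_\Gamma\lloc\psi$, for every pattern $\psi$ labelling a node. Well-foundedness licenses induction on derivation trees even when they are infinitely branching or infinitely deep along no path. So the argument reduces to two base cases and one inductive step.

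\emph{Leaves.} If the leaf is some $\gamma\in\Gamma$, then $\gamma=\bx{\varepsilon}\gamma\in\Delta_\Gamma$. Hence $\den{\Delta_\Gamma}_M\subseteq\den{\gamma}_M=\ii{\rho}{\gamma}_M$ for every $M$ and $\rho$, since $\gamma$ is closed. If the leaf is a valid pattern $\psi$, then $\ii{\rho}{\psi}_M=M$ for all $M,\rho$, so the inclusion holds trivially. In both cases $\mathrm{I}$ holds.

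\emph{Inductive step.} Consider a rule application whose side conditions are met and whose premises satisfy $\mathrm{I}$ by the induction hypothesis. By \Cref{def:respects}, applied to the fixed set $\Gamma$ of closed patterns, its conclusion satisfies $\mathrm{I}$. Hence the root $\varphi$ satisfies $\mathrm{I}(\varphi)$, which is the first claim. The second claim is its contrapositive. If $\Delta_\Gamma\not\lloc\varphi$, then $\Gamma\nvdash\varphi$. If in addition $\Gamma\vDash\varphi$, then $\vdash$ fails global completeness at $(\Gamma,\varphi)$.

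I do not expect a genuine obstacle, since the hypothesis is tailored to make the induction go through. The main care point is in the open-pattern reading. Intermediate lines may be open, so $\mathrm{I}$ must be stated for arbitrary valuations, exactly as unfolded before \Cref{def:respects}. The leaf case for $\Gamma$ then uses closedness of $\gamma$ to drop $\rho$. I would also remark that $\Delta_\Gamma$ depends only on $\Gamma$ and not on the derivation, so the same invariant is used uniformly at every node.
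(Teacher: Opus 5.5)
Your proposal is correct and matches the paper's proof: induction on the well-founded derivation, with hypothesis leaves handled by taking $p=\varepsilon$ so that $\gamma\in\Delta_\Gamma$, and valid leaves handled because they denote the whole carrier. The rule case is discharged directly by \Cref{def:respects}, and the final claim follows by contraposition.
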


\begin{proof}
Induction on the derivation. If $\varphi\in\Gamma$ then
$\varphi\in\Delta_\Gamma$, taking $p=\varepsilon$, so
$\den{\Delta_\Gamma}\subseteq\den{\varphi}$. If $\varphi$ is a valid leaf then
$\ii{\rho}{\varphi}=M$ by validity, and $\den{\Delta_\Gamma}\subseteq M$.
The remaining case is an application of a rule, and it is exactly
\Cref{def:respects}. The consequence follows by contraposition.
\end{proof}

The axioms are unconstrained beyond validity. Adding any valid schemes leaves
the conclusion unchanged because a valid pattern denotes the whole carrier,
which contains $\den{\Delta_\Gamma}$. The obstruction therefore lies entirely
in the rules, and a given system can be checked rule by rule.

\begin{proposition}\label{prop:rulesrespect}
The rules of \Cref{fig:system} respect localization, and so do the rules of the
standard hybrid Hilbert systems: modus ponens, necessitation, generalization,
and the rule \textup{(Name)}, from $i\to\varphi$ with $i$ a nominal occurring
neither in $\varphi$ nor in $\Gamma$, infer $\varphi$.
\end{proposition}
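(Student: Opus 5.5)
The plan is to verify \Cref{def:respects} rule by rule, working in an arbitrary model $M$ with $D:=\den{\Delta_\Gamma}_M$. Two facts from \Cref{sec:boxed} will be used throughout. First, $D$ does not depend on the valuation, because $\Delta_\Gamma$ consists of closed patterns. Second, $D$ is backward closed (\Cref{lem:whatDelta}). With this notation, $\mathrm{I}(\psi)$ says that $D\subseteq\ii{\rho}{\psi}$ for every $\rho$. The axioms (1), (3), (5)--(7), (9), (10) need no attention: \Cref{thm:obstruction} already treats valid leaves, and they are valid by~(S).

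\emph{Pointwise rules.} Modus ponens is immediate. At each $\rho$, the inclusions $D\subseteq\ii{\rho}{\varphi_1}$ and $D\subseteq(M\setminus\ii{\rho}{\varphi_1})\cup\ii{\rho}{\varphi_2}$ together give $D\subseteq\ii{\rho}{\varphi_2}$. For $\exists$-generalization, apply the premise at $\rho[a/x]$ for each $a\in M$. Since $x\notin\FV(\varphi_2)$, this gives $D\cap\ii{\rho[a/x]}{\varphi_1}\subseteq\ii{\rho}{\varphi_2}$, and taking the union over $a$ yields the conclusion. The same idea handles universal generalization and generalization in the hybrid systems: the premise holds at every $\rho[a/x]$, and one intersects over $a$. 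This step uses only that $D$ is independent of $\rho$.

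\emph{Framing and necessitation.} These two rules are the only places where backward closure is used. For framing, take $u\in D$ with $u\in\sigma_M(\ldots,a,\ldots)$ for some $a\in\ii{\rho}{\varphi_1}$ in position $i$. Then $u\ldi{(\sigma,i)}a$, so backward closure gives $a\in D\subseteq\ii{\rho}{\varphi_1\to\varphi_2}$. Hence $a\in\ii{\rho}{\varphi_2}$, and so $u\in\ii{\rho}{\sigma(\ldots,\varphi_2,\ldots)}$. For necessitation, with $u\in D$, every $\ld{e}$-successor of $u$ lies in $D\subseteq\ii{\rho}{\psi}$. The argument of \Cref{lem:boxsem}, run with $\rho$ in place of closedness, then gives $u\in\ii{\rho}{\bx{e}\psi}$. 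Hybrid modalities are symbols of this syntax, so their boxes are exactly the $\bx{e}$ of \Cref{def:boxes}.

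\emph{The rule (Name).} I expect this rule to be the main obstacle. Here nominals are constants that denote singletons, and $\mathrm{I}$ quantifies over all models of the expanded signature. Given $M$, $\rho$ and $u\in D$, let $M'$ agree with $M$ except that $i_{M'}:=\{u\}$. Because $i$ does not occur in $\Gamma$, it does not occur in $\Delta_\Gamma$ either. Every coordinate word yields the same box over $M$ and $M'$, since the reachability relation is built only from positive-arity symbols and $i$ is a constant. Therefore $\den{\Delta_\Gamma}_{M'}=D\ni u$. The premise applied in $M'$ gives $u\in\ii{\rho}{i\to\varphi}_{M'}$, and $u\in i_{M'}$, so $u\in\ii{\rho}{\varphi}_{M'}$. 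Since $i\notin\varphi$, this equals $\ii{\rho}{\varphi}_M$. The freshness side condition is therefore used twice. Both uses must be checked carefully, as must the claim that reinterpreting a constant changes neither the coordinate set $E$ nor $\ldn$.
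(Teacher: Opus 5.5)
Your proposal is correct and follows essentially the same route as the paper. Valuation-independence of $\den{\Delta_\Gamma}$ handles modus ponens and both generalization rules, backward closure handles framing and necessitation, and \textup{(Name)} is handled by reinterpreting $i$ as $\{u\}$ and using freshness twice. Your additional reachability argument for $\den{\Delta_\Gamma}_{M'}=\den{\Delta_\Gamma}_M$ is sound but redundant, since $i$ not occurring in $\Delta_\Gamma$ already suffices.
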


\begin{proof}
The proof uses two properties of $\den{\Delta_\Gamma}$ established in
\Cref{sec:boxed}. It is independent of the valuation because every member of
$\Delta_\Gamma$ is closed, and it is backward closed by
\Cref{lem:whatDelta}.

\emph{Modus ponens.} For each $\rho$, the result follows from
$\den{\Delta_\Gamma}\subseteq\ii{\rho}{\psi}\cap\ii{\rho}{\psi\to\varphi}
\subseteq\ii{\rho}{\varphi}$.

\emph{Framing, and with it necessitation.} This is the case treated in
\Cref{sec:B}: if $u\in\den{\Delta_\Gamma}$ lies in the denotation of the framed
antecedent, witnessed by a tuple whose $i$-th component is $a$, then
$u\ldi{e}a$, so $a\in\den{\Delta_\Gamma}$ by backward closure and the
hypothesis applies at $a$. For $\Box$, if $u\in\den{\Delta_\Gamma}$ and
$u\ldn v$, then $v\in\den{\Delta_\Gamma}\subseteq\ii{\rho}{\varphi}$, so
$u\in\ii{\rho}{\Box\varphi}$.

\emph{Generalization, and with it $\exists$-generalization.} Suppose
$\den{\Delta_\Gamma}\subseteq\ii{\rho}{\varphi}$ for every $\rho$. Since
$\den{\Delta_\Gamma}$ does not depend on the valuation, the same inclusion
holds at every $\rho[a/x]$, so
$\den{\Delta_\Gamma}\subseteq\bigcap_{a\in M}\ii{\rho[a/x]}{\varphi}
=\ii{\rho}{\forall x.\,\varphi}$. For the $\exists$ form, suppose
$\den{\Delta_\Gamma}\subseteq\ii{\rho}{\varphi_1\to\varphi_2}$ for every
$\rho$, with $x\notin\mathrm{FV}(\varphi_2)$, and let
$u\in\den{\Delta_\Gamma}\cap\ii{\rho}{\exists x.\,\varphi_1}$. Then
$u\in\ii{\rho[a/x]}{\varphi_1}$ for some $a$, and the hypothesis at
$\rho[a/x]$, together with $u\in\den{\Delta_\Gamma}$, gives
$u\in\ii{\rho[a/x]}{\varphi_2}=\ii{\rho}{\varphi_2}$.

\emph{\textup{(Name)}.} Let $u\in\den{\Delta_\Gamma}_M$ and let $M'$ agree with
$M$ except that $i$ is interpreted as $\{u\}$. The nominal $i$ does not occur in
$\Gamma$, hence not in $\Delta_\Gamma$, so
$\den{\Delta_\Gamma}_{M'}=\den{\Delta_\Gamma}_M\ni u$. The hypothesis at $M'$
gives $u\in\ii{\rho}{i\to\varphi}_{M'}$, and $u\in\den{i}_{M'}$, so
$u\in\ii{\rho}{\varphi}_{M'}$. Since $i$ does not occur in $\varphi$ either,
$\ii{\rho}{\varphi}_{M'}=\ii{\rho}{\varphi}_M$, and $u$ lies in it.
\end{proof}

\subsection*{Two applications}
The first is a restatement of what \Cref{sec:regimes} records, now free of any
particular calculus.

\begin{corollary}\label{cor:msobstruction}
Let $\vdash$ be a sound derivability relation over the signature of
\Cref{prop:manysorted}, generated by well-founded derivations in which the only
$\Gamma$-dependent initial lines are members of $\Gamma$, all other axioms are
valid and independent of $\Gamma$, the rule schemes and their side conditions
are independent of $\Gamma$, and every premise has a sort that feeds the sort of
the conclusion. Then $\vdash$ is not globally complete.
\end{corollary}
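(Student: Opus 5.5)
We take $\Gamma$ and $\varphi$ from \Cref{prop:manysorted}, which already gives $\Gamma\vDash\varphi$. It remains to show $\Gamma\nvdash\varphi$ for every relation $\vdash$ satisfying the hypotheses. We adapt the non-derivability half of that proof, replacing the specific rules of \Cref{fig:system} by the sort-feeding hypothesis and replacing~(S) by the assumed soundness of $\vdash$.

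\emph{Step 1 (sort flow).} In this signature the only symbols are $f:b\to a$ and $g:b\to c$. Hence $a$ feeds only $a$. By hypothesis, every premise of a rule application has a sort that feeds the sort of its conclusion. So any rule application whose conclusion has sort $t\ne a$ has no premise of sort $a$.

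\emph{Step 2 (hypothesis elimination).} We argue by well-founded induction on derivations. Claim: every derivable line of sort $t\ne a$ has a derivation from the empty theory.
\begin{itemize}
\item A leaf of sort $t\ne a$ cannot be a member of $\Gamma$, because the sole member of $\Gamma$ has sort $a$. It is therefore a valid axiom independent of $\Gamma$, and so is a leaf of an $\varnothing$-derivation.
\item At an inner node of sort $t\ne a$, all premises have sorts other than $a$ by Step~1. The induction hypothesis gives $\varnothing$-derivations of them.
\item The rule scheme and its side conditions do not mention $\Gamma$. The same application is therefore licensed when the theory is $\varnothing$.
\end{itemize}
Well-foundedness makes the induction legitimate even for infinitary derivations.

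\emph{Step 3 (conclusion).} Since $\varphi$ has sort $c\ne a$, $\Gamma\vdash\varphi$ would give $\varnothing\vdash\varphi$. Soundness would then give $\vDash\varphi$. This is refuted by the model with $M_b=\{r,s\}$ and $g_M(r)\ne g_M(s)$, exactly as in \Cref{prop:manysorted}.

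The only delicate point is the independence of side conditions from $\Gamma$, for example freshness conditions such as in \textup{(Name)}. These are assumed $\Gamma$-independent, so transferring an application to the empty theory is immediate. Everything else is bookkeeping.
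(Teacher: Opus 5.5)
Your proposal is correct and follows the paper's own route. The paper's proof only remarks that the listed hypotheses are exactly the properties of \Cref{fig:system} used in the non-derivability half of \Cref{prop:manysorted}. You spell out that induction, replacing the concrete rules by the sort-feeding hypothesis and~(S) by the assumed soundness: no premise of sort $a$ can occur above a line of another sort, so every such line is derivable from $\varnothing$, and the two-point countermodel for $\varphi$ then finishes the argument.
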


\begin{proof}
Those are exactly the properties of \Cref{fig:system} that the
non-derivability half of \Cref{prop:manysorted} uses.
\end{proof}

The last hypothesis alone is far from enough. Taking $\vdash$ to be $\vDash$
itself, presented by the single $\Gamma$-dependent axiom scheme ``$\psi$,
whenever $\Gamma\vDash\psi$'', gives a sound and globally complete relation
whose rules have no premises at all, so the sort condition holds vacuously.

This criterion differs from \Cref{def:respects}. We do not know whether the
many-sorted failure follows from that definition. The two negative results use
different methods.

The second concerns $\mathrm{H}(\forall)$, which is basic matching logic with
\emph{nominals} added. These constants $i$ have singleton interpretations
$i_M$ and therefore name points of the model instead of arbitrary subsets.
Everything in \Cref{sec:boxed} goes through verbatim for that language, since
nominals are closed patterns.

\begin{corollary}\label{cor:hforall}
Let $\vdash$ satisfy the hypotheses of \Cref{thm:obstruction}, read over
$\mathrm{H}(\forall)$. Then $\vdash$ is not globally complete. Explicitly, with
one nominal $i$,
\[
  \{i\}\vDash\forall x.\,x
  \qquad\text{while}\qquad
  \{i\}\nvdash\forall x.\,x .
\]
\end{corollary}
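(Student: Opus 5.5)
By \Cref{thm:obstruction}, it suffices to establish two facts: the semantic entailment $\{i\}\vDash\forall x.\,x$, and the failure of localization, $\Delta_{\{i\}}\not\lloc\forall x.\,x$. The first is immediate. If $M\vDash i$, then $\den{i}=M$. Since $i_M$ is a singleton, $M$ must itself be a singleton $\{m\}$. Every valuation then sends $x$ to $m$, so $\ii{\rho}{x}=M$ for all $\rho$. Hence $\ii{\rho}{\forall x.\,x}=\bigcap_{a\in M}\{a\}=M$, which gives $M\vDash\forall x.\,x$.

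For the second fact, I use \Cref{lem:whatDelta}, which the text says holds verbatim once nominals are added. It gives $\den{\Delta_{\{i\}}}=\{u\mid\ims{u}\subseteq i_M\}$. The countermodel I would build has two points, $M=\{m,n\}$, with $i_M=\{m\}$. Every symbol of positive arity is interpreted as the empty map, and every constant other than $i$ is interpreted as $\varnothing$. With these choices, $\ldn$ is empty and $\ims{m}=\{m\}\subseteq i_M$, so $m\in\den{\Delta_{\{i\}}}$.

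On the other side, $\forall x.\,x$ is closed, and $\den{\forall x.\,x}=\{m\}\cap\{n\}=\varnothing$. Therefore $m\notin\den{\forall x.\,x}$, so $\den{\Delta_{\{i\}}}\not\subseteq\den{\forall x.\,x}$. This is exactly $\Delta_{\{i\}}\not\lloc\forall x.\,x$.

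With both facts in hand, \Cref{thm:obstruction} yields $\{i\}\nvdash\forall x.\,x$, so no relation $\vdash$ satisfying its hypotheses over $\mathrm{H}(\forall)$ can be globally complete.

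The only point needing care is the transfer of \Cref{sec:boxed} to $\mathrm{H}(\forall)$. Nominals have arity $0$, so they contribute no coordinates. They are also closed patterns, so \Cref{lem:boxsem} and \Cref{lem:whatDelta} go through unchanged. This is what makes the computation of $\den{\Delta_{\{i\}}}$ above legitimate.

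To close the discussion, I would note why the double cover of \Cref{sec:B} cannot rescue this case. The second sheet would have to satisfy $i$ at every point, but $N$ must interpret $i$ as a singleton. The theory $\{i\}$ forces a one-element carrier, and that cardinality constraint is invisible from the point $m$.
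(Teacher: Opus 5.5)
Your proposal is correct and follows the paper's proof. It uses the same one-point argument for $\{i\}\vDash\forall x.\,x$ and the same two-point countermodel, with all other symbols empty, to obtain $\Delta_{\{i\}}\not\lloc\forall x.\,x$, and then concludes by \Cref{thm:obstruction}; the only cosmetic difference is that you compute $\den{\Delta_{\{i\}}}$ via \Cref{lem:whatDelta}, whereas the paper notes directly that every box $\bx{p}i$ with $p$ nonempty is vacuously total.
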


\begin{proof}
Both patterns are closed. If $M\vDash i$ then $\den{i}=M$, and $\den{i}$ is a
singleton, so $M$ has one point $w$. In that model,
$\ii{\rho}{\forall x.\,x}=\bigcap_{a\in M}\{a\}=\{w\}=M$, giving
$M\vDash\forall x.\,x$, which is the left-hand relation.

For the right-hand one, take $M=\{a,b\}$ with $i_M=\{a\}$ and every symbol
interpreted as $\varnothing$, so that $\ldn=\varnothing$. Every box is then
vacuously total, so $\den{\bx{p}i}=M$ for $p\ne\varepsilon$ and
$\den{\Delta_{\{i\}}}=\den{i}=\{a\}$, while
$\ii{\rho}{\forall x.\,x}=\{a\}\cap\{b\}=\varnothing$. Hence
$a\in\den{\Delta_{\{i\}}}\setminus\ii{\rho}{\forall x.\,x}$ and
$\Delta_{\{i\}}\not\lloc\forall x.\,x$. Now apply \Cref{thm:obstruction}.
\Cref{prop:rulesrespect} shows, in particular, that the standard hybrid rules
satisfy its rule hypothesis.
\end{proof}

The counterexample identifies the additional constraint that nominals allow
$\Gamma$ to impose. A nominal is total exactly when the
carrier has one point, so $\{i\}$ constrains the \emph{cardinality} of the
model. Recall that the double-cover construction starts with a model $M$ and a
backward-closed core $C\subseteq M$ satisfying $\varnothing\ne C\ne M$, then
forms $N=C\times\{0,1\}$. This use of $C$ is separate from the
application-context metavariable in the many-sorted system. Because $C$ is
nonempty, $N$ has at least two points and cannot be a one-point model.
Definedness-free theories can still express properties of the carrier as a
whole. For example,
$\forall x.\,x$ is total exactly on one-point models, and the entailment above
uses that fact. Such a theory, however, cannot survive \Cref{cor:NM}, whose
construction doubles the carrier. That additional room supports the doubling
argument of \Cref{sec:B}.

\begin{remark}[what a complete system would have to look like]\label{rem:whatcomplete}
Global consequence for $\mathrm{H}(\forall)$ remains axiomatizable. Under the
standard translation, nominals become constants and global consequence becomes
ordinary first-order consequence. For an effectively presented countable
signature and a recursively enumerable theory, it is therefore recursively
enumerable, and some sound and complete recursive system exists.
\Cref{cor:hforall} requires any such well-founded system whose leaves are
hypotheses or valid patterns to contain a rule that does \emph{not} respect
localization. Such a rule must see beyond the component generated by the point
of evaluation. An infinitary rule could qualify if its premise family enforces
a genuinely global condition, as could a rule with an explicitly global side
condition. Infinitely many premises alone are insufficient. This requirement
is more informative than a bare non-existence claim would have been, although
we do not know an explicit example of such a system.
\end{remark}

\begin{remark}[maximality]\label{rem:maximality}
Together, \Cref{cor:main,cor:msobstruction,cor:hforall} place basic unsorted
matching logic at a boundary. The basic logic is globally complete, while two
extensions within the definedness-free setting defeat a broad class of
calculi. With sorts, the sorts of $\Gamma$ need not feed the sort of the
conclusion. With nominals, $\Gamma$ can fix the cardinality of the carrier. The
first result uses sort flow, and the second uses localization.
\Cref{sec:mu} gives a third failure with a different and stronger conclusion.
\end{remark}

%% ================================================================
\section{Conclusions, future work, and open problems}\label{sec:conclusions}

\subsection*{Relation to modal and to hybrid logic}
The \emph{expressivity} boundary underlying this note already appears in
\cite{BlackburnSeligman1995}. Their Theorem~4.1 and Proposition~4.11 place
$\exists$ strictly below the target of the standard translation, namely
first-order logic with equality over one relation symbol per modality. Adding
the ``somewhere'' modality, which is definedness, reaches that target.
\Cref{cor:main} and \Cref{sec:regimes} show that the same boundary also
separates global completeness from its failure.

The classical reduction and its failure here were sketched in
\Cref{sec:intro}. Two further points clarify the boundary. When element
variables and $\exists$ are removed, our argument reduces exactly to the
classical one, as noted in \Cref{sec:B}. The construction addresses precisely
those two additional features. On the other side,
\cite{ArecesTenCate2007} observes that
$\mathrm{H}(@,\downarrow)$ sentences \emph{are} invariant under generated
submodels, provided the generated part is taken to include the points named by
nominals, so for that language the classical route does go through. It is the
quantifier over state variables that destroys the invariance. The same feature
makes $\exists x$ range over the discarded part in our setting.

Under the equivalence of \cite{LeusteanMoangaSerbanuta2019}, one-sorted basic
matching logic is the nominal-free fragment of $\mathrm{H}(\forall)$ with a
polyadic family of modalities, without the satisfaction operator or the
universal modality. Thus \Cref{cor:main} is a global strong completeness theorem
for that logic. Its distinction from the known results is the \emph{global}
consequence relation.
Hybrid languages with binders are
axiomatized in \cite{BlackburnTzakova1998} with respect to \emph{local}
consequence, blending canonical models with witnessed maximal consistent sets,
and \cite[Thm.~16]{ChenRosu2019} is described there as a generalization of that
proof for $\mathrm{H}(\forall)$ to many-sorted polyadic signatures. That theorem is our
oracle~(L) and serves as an input here. We have not found the corresponding
global statement. The $\Box^\omega$ reduction requires the invariance just
discussed. The transfer of \cite{GorankoPassy1992} requires $A$. The strong
completeness results for pure extensions of $\mathrm{H}(@)$ and
$\mathrm{H}(@,\downarrow)$ require $@$ together with the non-orthodox (Name)
and (Paste) rules \cite{BlackburnTenCate2006}. The remaining results use
infinitary rules. We would welcome references to a global result.

Two differences from the setting of \cite{BlackburnTzakova1998} point in
opposite directions. That language has neither $@$ nor $A$, so it lies on the
definedness-free side with ours, and the question below is genuinely the global
one for it. We first fix terminology to avoid a recurring source of confusion.
A hybrid signature has three kinds of atom:
propositional variables, \emph{nominals}, which the model interprets as
singletons and which no binder can bind, and \emph{state variables}, which the
assignment interprets and which $\forall$ binds. Matching logic's element
variables are the state variables. Its nullary symbols are the propositional
variables. Below, \emph{nominal} always means the model-fixed constant.

Nominals have a matching-logic counterpart on the definedness side of the
dichotomy. A nullary symbol denotes an arbitrary subset, but definedness allows
that subset to be constrained to a singleton. Adding the axiom
$\exists x.\,\lfloor c\leftrightarrow x\rfloor$, which makes $c$ a
\emph{functional} pattern in the sense of
\cite{Rosu2017}, forces $c_M$ to be a singleton in every model of the
theory, exactly as required for a nominal. Definedness is needed for a
\emph{uniform} definition because no pattern of the definedness-free language
denotes $\lceil\varphi\rceil$ in every model. A theory can nevertheless achieve
the same effect with an auxiliary symbol. Take an ordinary unary $r$ and
write $A_r(\psi):=\neg r(\neg\psi)$. The axiom $\forall x.\,r(x)$ forces
$r_M(a)=M$ for every $a$, hence $r_M(B)=M$ for $B\ne\varnothing$ and
$r_M(\varnothing)=\varnothing$, so $r$ behaves as definedness and $A_r$ as
totality. Adding $\exists y.\,A_r(c\leftrightarrow y)$ then forces $c_M$ to be a
singleton, and the resulting theory has models on every nonempty carrier.

This construction does not conflict with \Cref{cor:NM}, whose hypothesis it
defeats. Under
$\forall x.\,r(x)$ every point is $r$-related to every point, so the only
nonempty backward-closed set is $M$ and no proper core exists. \Cref{cor:NM}
says that whenever a model of $\Gamma$ \emph{does} have a backward-closed $C$
with $\varnothing\ne C\ne M$, the model $N$ also satisfies $\Gamma$ and
interprets $c$ as $(c_M\cap C)\times\{0,1\}$, which is never a singleton. So a
theory can define nominal behaviour in the definedness-free language only by
axioms that remove the proper cores on which our construction runs. This is the
theory-relative form of the language-level dichotomy. Polyadic modalities
introduce only a routine extension. That language, however, also
carries \emph{model-fixed nominals} alongside bindable state variables, whereas
ours does not. The obstruction above shows that the construction of
\Cref{sec:B} fails after nominals are added. Thus \Cref{cor:main} is global and
nominal-free, while their result is local and includes nominals. Open Problem~1
asks whether global completeness and nominals can be combined, so it concerns
their language.

\subsection*{Models, frames, and the known incompleteness theorems}
We have worked throughout at the level of models, with \Cref{def:total}
defining model-level consequence. A frame level could also be defined because a
symbol $\sigma_M:M^n\to\Pw(M)$ is an $(n+1)$-ary relation and a nullary symbol is
a propositional valuation. Our semantics does not hold the positive-arity
relational structure fixed while quantifying separately over all valuations of
the nullary symbols. The programming-language interpretation in
\Cref{sec:intro} explains this choice. The symbols of $\Sigma$ belong to the
syntax of the language, and models, understood as implementations, assign their
meaning. At the frame level, the positive-arity symbols would remain fixed as
relations while the nullary symbols varied as valuations. Such quantification
would vary the meanings of the language's constants. A program property should
not be required to hold under every reinterpretation of those constants. The
standard translation of \cite[\S10]{Rosu2017} therefore places global
consequence inside first-order logic. As noted in \Cref{sec:regimes}, model
consequence is recursively enumerable for an effectively presented countable
signature and a recursively enumerable theory. The negative entries in that
section concern the system of \Cref{fig:system}. They do not establish
non-axiomatizability.

This result is compatible with the known incompleteness theorems for normal
modal logics \cite{Thomason1974,vanBenthem1978}, surveyed in
\cite[\S4.4]{BlackburnDeRijkeVenema2001}, because those theorems concern frames. A
frame validates $\varphi$ when \emph{every} valuation makes $\varphi$ true
everywhere, so frame validity translates to
$\forall P_1\cdots\forall P_k\,\forall y.\,ST_y(\varphi)$, which is $\Pi^1_1$
rather than first-order. Validity over the class of \emph{all} frames is still
decidable for the basic modal language, but consequence over a constrained frame
class is a second-order notion. \cite{Thomason1975} reduces second-order
consequence to global frame consequence, showing how far the latter can be from
recursively enumerable. Kripke incompleteness is the failure of a
normal modal logic to coincide with the modal theory of any class of frames,
which concerns the relation between a substitution-closed set of formulas and a
frame class. We instead study whether a calculus captures a model-level
consequence relation.

The two levels differ with respect to uniform substitution. Formulas valid on a
frame are closed under substitution for propositional letters. Formulas valid
on a model need not be closed because the model fixes their interpretations.
Matching logic's $\vdash$ from $\Gamma$ is deliberately not closed under such
substitutions. A language definition fixes the meanings of its symbols, so
substitution for those symbols is generally unsound. The hybrid languages of
\Cref{fig:landscape} behave similarly. Their nominals are fixed by the model,
not by the assignment, and are therefore part of the model data.

\subsection*{Why the construction is available here}
Element variables denote singletons selected by the \emph{valuation}, whereas
the model does not fix them. This choice reflects the needs of language
semantics for program variables, bound variables, and symbolic values.
Proof-theoretic considerations did not motivate it. This choice is exactly what
makes \Cref{sec:B} possible. In the notation of that construction, $M$ is the
original model, $C\subseteq M$ is a nonempty proper backward-closed core, and
$N=C\times\{0,1\}$ is the double-cover model. A nullary symbol denotes an
arbitrary subset, so $\sigma_N:=(\sigma_M\cap C)\times\{0,1\}$ is legal because
the bare semantics does not require a nullary symbol to denote a singleton. An
appropriate theory may impose singletonhood, as noted in
\Cref{sec:conclusions}. Such axioms leave
no proper backward-closed core on which the construction can operate. A hybrid
nominal $i$ is a model-fixed constant with $i_M=\{p\}$, and every legal
interpretation $i_N$ must also be a singleton. If $p\in C$, the right-hand side
of the two-copy identity in \Cref{lem:twocopy} for $\psi=i$ contains both
$(p,0)$ and $(p,1)$, whereas $i_N$ is a singleton. If $p\notin C$, that
right-hand side is empty, whereas $i_N$ must still be a singleton. Thus the
identity fails in either case. The double-cover construction is therefore
available in the nominal-free state-variable fragment of
$\mathrm{H}(\forall)$. It fails in $\mathrm{H}$, $\mathrm{H}(@)$, and
$\mathrm{H}(\forall)$ as presented in the hybrid-logic literature because all
three languages contain model-fixed nominals. In $\mathrm{H}(@)$, the
satisfaction operator supplies point evaluation directly, so its completeness
argument does not need the double-cover reduction.

\subsection*{The satisfaction operator, and two regimes rather than three}
The satisfaction operator of hybrid logic evaluates a formula at the point
named by a variable, from wherever one is standing:
$M,g,w\vDash @_x\varphi$ iff $M,g,g(x)\vDash\varphi$. The evaluation point does
not occur on the right, so $@_x\varphi$ is point-independent. The operator
internalizes the satisfaction relation in hybrid logic. Labelled deduction of
the form ``at world $i$, $\varphi$'' becomes object-level, Henkin constructions
go through, and every pure axiomatic extension of $\mathrm{H}(@)$ is complete
with the non-orthodox (Name) and (Paste) rules
\cite{BlackburnTenCate2006,ArecesTenCate2007}.

In matching logic, definedness provides the same operator:
\[
  @_x\varphi \;:=\; \lceil x\wedge\varphi\rceil \;=\; \lfloor x\to\varphi\rfloor ,
\]
Both patterns are total exactly when $\rho(x)\in\ii{\rho}{\varphi}$. This is
the membership construct $x\in\varphi$, which is standard matching logic
\cite{Rosu2017,ChenLucanuRosu2021}. Its identification with the hybrid
satisfaction operator, together with the equivalence between matching logic
with definedness and $\mathrm{H}_\Sigma(@,\forall)$, is due to
\cite{LeusteanMoangaSerbanuta2019}. That work also records that $A$ becomes
definable as $\forall x.\,@_x\varphi$ once $@$ is present.

This definability result explains a structural difference. Hybrid logic has
three levels: $\mathrm{H}$, $\mathrm{H}(@)$, and the extension with $A$. The
middle level remains distinct because nominals are constants. With finitely
many nominals, $@$ reaches only named points and cannot express $A$. Matching
logic has no corresponding middle level because
\[
  \lceil\varphi\rceil \;=\; \exists x.\,@_x\varphi
\]
and element variables are quantifiable, so $@$ and definedness are
interdefinable. \emph{Matching logic has two regimes where hybrid logic has
three}. The same use of variables in place of constants makes the double cover
work. It also makes the definedness-free/definedness dichotomy intrinsic to the
present setting.

The middle hybrid level retains decidability. $\mathrm{H}(@)$ over arbitrary
frames is decidable,
whereas the binder already makes $\mathrm{H}(\forall)$ undecidable, and $@$ on
top of it reaches full first-order expressive power
\cite{BlackburnSeligman1995,ArecesTenCate2007}. The middle level therefore
supports algorithmic reasoning, including satisfiability checking and model
checking, which is unavailable with quantifiable variables. For language
semantics, this loss imposes no additional cost because program properties are
already undecidable. The relevant requirement is deduction from a theory, which
\Cref{cor:main} and
\cite[Thm.~15]{ChenRosu2019} supply.

At this level, expressiveness is described by an equivalence. Through
\cite{LeusteanMoangaSerbanuta2019}, unsorted definedness-free matching logic
\emph{is} the nominal-free state-variable fragment of polyadic
$\mathrm{H}(\forall)$, without $@$ or $A$. Symbols correspond to polyadic
modalities, element variables to state variables, and nullary symbols to
propositional atoms. Nominals correspond to nullary symbols axiomatized to be
functional, which requires definedness or a theory that supplies its effect.
No additional expressiveness is claimed or needed at this level. The difference concerns
the questions being asked, with matching logic focusing on global consequence
from a theory. Additional strength appears only with fixpoints, where matching
$\mu$-logic reaches first-order logic with least fixpoints and hence
non-axiomatizability.

Finally, $@$ is \emph{not} definable in the definedness-free fragment. This
follows from the locality argument used above for definedness. With $\rho(x)$
outside $C$, the pattern
$@_x\varphi$ reports on a point that \Cref{lem:locality} says no
definedness-free pattern can see. No internal definition is therefore
available, so \Cref{sec:B} supplies a model-theoretic substitute.

\subsection*{Mechanization, and an open challenge}
Mechanized developments exist for matching logic and for hybrid logic. The
proof of \Cref{cor:main} is therefore a natural next target. On the
matching-logic side, \cite{BereczkyEtAl2022}
formalizes matching logic in Coq, and \cite{ChevalMacovei} formalizes
applicative matching logic in Lean at the Institute for Logic and Data Science,
with export of Metamath proof objects. On the hybrid side, \cite{Oltean2023}
formalizes the syntax, semantics, Hilbert system and soundness of
$\mathrm{H}(\forall)$ in Lean~4, following \cite{BlackburnTzakova1998} and
leaving completeness open. \cite{Ericson2026} has since proved completeness.
The two obstacles were the construction of fresh names for the root witnessed
maximal consistent set and, separately, for the witnessed
$\Diamond$-successor. Most directly related,
\cite{OlteanMacoveiLeustean2026} formalizes a many-sorted hybrid polyadic modal
logic in Lean, following \cite{LeusteanMoangaSerbanuta2019}, with a
machine-checked soundness theorem and a domain-specific language for many-sorted
signatures.

Related work in Isabelle/HOL includes \cite{FromBlackburnVilladsen2020}, which
formalizes a Seligman-style tableau system for basic hybrid logic.

\emph{We propose the mechanization of \Cref{cor:main} as an open challenge},
whether in Coq, Lean, Isabelle/HOL, or another system. The argument is
independent of any particular assistant or development above, and we welcome
approaches from directions not represented by those developments. Three entry
points of increasing ambition are available:
\begin{enumerate}[leftmargin=1.9em,itemsep=.2em,label=(\roman*)]
\item mechanize \Cref{lem:locality,lem:twocopy} alone. This consists of two
  five-case structural inductions, uses no fixpoints or canonical model, and
  captures the entire model-theoretic content of the paper.
\item mechanize \Cref{cor:main} with (L) and (S) assumed. This adds only the
  composition of \Cref{thm:A,thm:B}.
\item discharge (L) as well, yielding a self-contained machine-checked global
  completeness theorem for matching logic.
\end{enumerate}
The absence of a canonical-model construction makes this task tractable. The
model refuting $\varphi$ is defined explicitly in \Cref{def:N} from data
provided by the oracle, while a canonical model is ordinarily the most
expensive part of mechanizing a completeness proof.

The challenge also differs from the existing formalizations. The system of
\cite{OlteanMacoveiLeustean2026} carries both $@$ and $\forall$, hence $A$.
It therefore lies on the definedness side of the dichotomy above, where
completeness was already available \cite[Thm.~15]{ChenRosu2019}. To our
knowledge, the definedness-free system studied here has not been mechanized.

\subsection*{Open problems}
\begin{enumerate}[leftmargin=1.9em,itemsep=.35em]
\item \emph{A native globally complete system for $\mathrm{H}(\forall)$.}
  Pulling a complete first-order calculus back along the standard translation
  already gives a complete system. The open problem is to find a \emph{native},
  syntax-directed calculus (Hilbert, sequent, tableau, or labelled) for global
  consequence in $\mathrm{H}(\forall)$ and to identify a minimal rule or
  structural principle that necessarily violates localization
  (\Cref{cor:hforall,rem:whatcomplete}). This is a hybrid-logic question.
\item \emph{The many-sorted boundary.} \Cref{prop:manysorted} shows that the
  one-sort hypothesis cannot simply be dropped, even for satisfiable $\Gamma$.
  A condition on the pair $(\Gamma,\varphi)$ that is necessary and sufficient
  for the localization argument to survive across sorts remains unknown. A
  natural candidate requires every sort occurring in $\varphi$ to be reachable
  along chains of argument positions from a sort occurring in $\Gamma$.
\item \emph{Conservativity of currying.} Is the translation of
  \Cref{rem:curry} derivability-reflecting in the definedness-free setting? A
  positive answer would make the general case a corollary of applicative
  matching logic and yield the preferred presentation.
\item \emph{An intermediate regime.} Is there a useful matching-logic analogue
  of $\mathrm{H}(@)$, with definedness present but out of reach of the
  quantifier, so that $\lceil\varphi\rceil=\exists x.\,@_x\varphi$ is
  blocked? Does it recover any of the decidability available at the middle
  hybrid level? The question is whether the collapse to two regimes is forced
  by the semantics or only by the syntax.
\item \emph{Fixpoints.} By \Cref{thm:mu} there is no effective complete
  system, so \Cref{thm:A}($\Leftarrow$) has no oracle to consume. The
  obstruction uses
  a conjunctive fixpoint, leaving the aconjunctive fragment as the natural
  target (\Cref{rem:acon}), and the minimal arity and the applicative case
  open (\Cref{rem:munotmin}).
\end{enumerate}

\end{document}